\theoremstyle{plain}
\newtheorem{theorem}{Theorem}[section]
\newtheorem{proposition}[theorem]{Proposition}
\newtheorem{corollary}[theorem]{Corollary}
\newtheorem{remark}[theorem]{Remark}
\theoremstyle{definition}
\title[The Spatial Hill four-body problem I]{The Spatial Hill four-body problem I - an exploration of basic invariant sets}
\author[J. Burgos-Garc\'ia]{Jaime Burgos-Garc\'ia}
\address{Facultad de Ciencias F\'isico Matem\'aticas, Universidad Aut\'onoma de Coahuila, Unidad Campo Redondo, Edificio A, 25020 Saltillo, Coahuila, M\'exico}
\email{jburgos@uadec.edu.mx }
\author[A. Bengochea]{Abimael Bengochea}
\address{Department of Mathematics, ITAM, R\'io Hondo 1, 01080 Ciudad de M\'exico, M\'exico}
\email{abimael.bengochea@itam.mx}
\author[L. Franco]{Luis Franco-P\'erez}
\address{Departamento de Matem\'aticas Aplicadas y Sistemas, Universidad Aut\'onoma Metropolitana Unidad Cuajimalpa, Mexico City 05348, Mexico}
\email{lfranco@cua.uam.mx}
\date{}
\begin{document}

\maketitle

\begin{abstract}
In this work we perform a first study of basic invariant sets of the spatial Hill's four-body problem, where we have used both analytical and numerical approaches. This system depends on a mass parameter $\mu$ in such a way that the classical Hill's problem is recovered when $\mu=0$. Regarding the numerical work, we perform a numerical continuation, for the Jacobi constant $C$ and several values of the mass parameter $\mu$ by applying a classical predictor-corrector method, together with a high-order Taylor method considering variable step and order and automatic differentiation techniques, to specific boundary value problems related with the reversing symmetries of the system. The solution of these boundary value problems defines initial conditions of symmetric periodic orbits. Some of the results were obtained departing from periodic orbits within Hill's three-body problem. The numerical explorations reveal that a second distant disturbing body has a relevant effect on the stability of the orbits and bifurcations among these families. We have also found some new families of periodic orbits that do not exist in the classical Hill's three-body problem; these families have some desirable properties from a practical point of view.
\end{abstract}

\begin{center}
{\bf \small Keywords.}
{ \small Four-body problem $\cdot$ Hill approximation $\cdot$ Spatial periodic orbits $\cdot$ Reversing symmetry}
\end{center}

\section{Introduction}

During the last century, with the incorporation of computers as a tool for exploring systems with a rich dynamical structure, several techniques were developed in order to compute periodic orbits in few body problems as the restricted three-body problem (from now on referred to as the R3BP). Such techniques mainly deal with the computation of normal forms, on topological and variational methods, and numerical methods. Restricting only to numerical approach, it is worth mentioning the pioneering works by E. Str\"{o}mgren and his collaborators \cite{Stromgren}, M. H\'enon \cite{HenI}, \cite{HenII}. The interested reader can consult the chapters 8 and 9 of \cite{Sz} for a detailed compilation of the early works performed on the R3BP.
\newline

Nevertheless, in some cases a three-body approach is not enough to model particular dynamics of some small bodies in the Solar System. For instance, in Celestial Mechanics, it has been customary to use the term `Trojan' to describe a small body (an asteroid or a natural satellite) in an equilateral triangle configuration, relative to a rotating frame of reference, with two other bodies. Actually, this kind of configuration is a true solution of the three-body problem and it was one of the first triumphs of the Celestial Mechanics. In our Solar System there are well known examples of Trojan asteroids being those corresponding to the Sun-Jupiter system the first ones discovered. However, Trojan asteroids have been observed in Mars-Sun and Neptune-Sun systems. Between Saturn and some of its moons there are also equilateral triangle configurations, such as Saturn--Tethys--Telesto, Saturn--Tethys--Calypso, and Saturn--Dione--Helen. Thus, celestial bodies forming equilateral triangle configurations are not at all exceptional.
\newline

In this paper we consider a small particle interacting with the gravitational forces produced by three bodies, called primaries, in an equilateral triangle configuration, and so a four-body model becomes necessary; this problem is known as the restricted four-body problem (R4BP) where there have been recent numerical explorations of periodic orbits and computer assisted proofs of these objects (see for instance \cite{PapaI}, \cite{BurgosDelgado}, \cite{BurgosLessardJames} and references therein). Nevertheless, astronomical data show that for the equilateral triangle configurations in the Solar System the mass of one of the primaries (the Trojan) is much smaller than the others and the distances of some bodies orbiting a Trojan are quite small in comparison with the distance among the primaries. Therefore, the Hill's four body problem (H4BP) \cite{BurgosGidea} is a similar approximation to the one developed by G. W. Hill in his works on the orbit of the Moon where he provides an approximation of the dynamics of the massless particle in an neighbourhood of the smallest primary.
\newline

The first results in the H4BP show the existence of equilibrium points that do not appear in the classical case, providing a richer dynamics, in particular they influence the stability of planar periodic orbits (see \cite{BurgosGidea} and \cite{Burgos}). However, from a practical point of view, it is more realistic to consider spatial dynamics with potential applications since it is well known that, for instance, the R3BP has been exploited to construct preliminary orbits for some space missions, relying on the numerical computation of invariant manifolds and continuation of periodic orbits. It is worth mentioning the works \cite{Belbruno,Broucke,Koon}, where the authors used the  R3BP  to obtain   \textit{ideal} orbits that can be used as starting points to design real trajectories. Furthermore, in recent years, the exploration of Trojan asteroids has been included in the 2013 Decadal Survey among the New Frontiers missions in the decade 2013-2022. So, the above discussion motivates the exploration of the basic families of spatial periodic orbits in the H4BP.
\newline

It is known that there exist extrasolar systems formed by a star with similar mass as the Sun and a giant gas planet as the system $HD28185$ \cite{Schwarz} where, under the hypothesis of existence of planets forming an equilateral configuration with the star and the gas giant planet, the H4BP is a natural frame for studying the dynamics around the small planet. Another potential application of this work is related to a Keplerian binary system tidally perturbed by a normally incident circularly polarized gravitational wave \cite{Chicone}. In the next Section we will see that the equations of motion for the H4BP and this problem are quite similar. Thus, the results of this work could be extended to the above mentioned model of gravitational radiation.
\newline

This paper is organized as follows: in Section \ref{sec:H4BP} we will recall the construction and some basic aspects of the H4BP. In Section \ref{sec:reversing} we present the reversing symmetries of the equations of motion, usefull for computing symmetric periodic orbits, and classify the computed periodic orbits according to their symmetries. In Section \ref{sec:invariantsets} we perform an analytical study of some invariant sets and its dynamical properties, specifically $saddle\times center\times center$ equilibrium points with the corresponding resonant periodic orbits, and the description of the invariant $z$-axis. We offer some theoretical and computational details of the implementation of the variable step and order Taylor method for the equations of motion and the variational ones in Section \ref{sec:autoDiff}, whereas in Section \ref{sec:numericalresults} we present a systematic numerical exploration of some basic families of symmetric periodic orbits and their properties. Finally, Section \ref{sec:conclusions} contains the conclusions and perspectives of the work.

\section{The Hill  approximation in the four body problem}\label{sec:H4BP}
The restricted four-body problem is a natural extension of the celebrated restricted three-body problem and, although it is true that this problem is not the main subject of study of this work, we recall some basic properties which will help us to get a better understanding of Hill's approximation. Consider three point masses moving under mutual Newtonian gravitational attraction in circular periodic orbits around their center of mass forming an equilateral triangle configuration. A fourth massless particle is moving under the gravitational attraction of the primaries without affecting their motion and the problem is to determine the dynamics of the massless particle. We will and refer to $m_1$ as the primary, $m_2$ as the secondary, and $m_3$ as the tertiary. The configuration of those particles is shown in Figure \ref{triangle}.
\newline

\subsection{Equations of motion}\label{equationsofmotion}

If we assume that the primaries move in circular orbits with constant angular velocity then the equations of motion  in dimensionless coordinates relative to a synodic frame of reference that rotates together with the point masses are:
\begin{figure}
\centering
\includegraphics[width=3.5in]{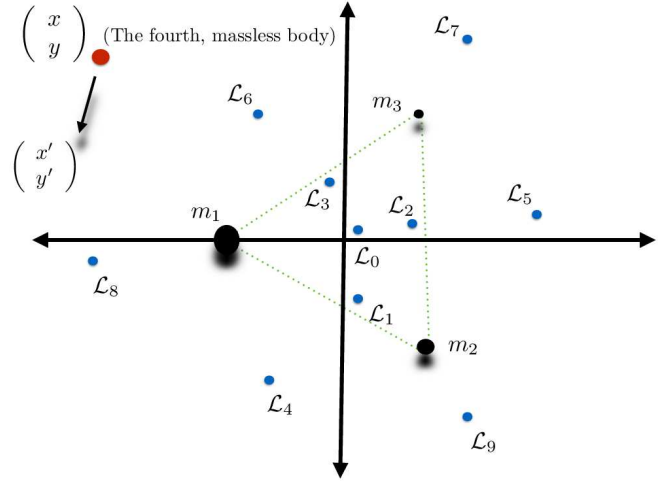}
\caption{The restricted four-body problem in a synodic frame of reference.\label{triangle}}
\end{figure}

\begin{equation}\begin{split}\label{ecuacionesfinales}
\ddot{x}-2\dot{y}&=\Omega_{x},\\
\ddot{y}+2\dot{x}&=\Omega_{y},\\
\ddot{z}&=\Omega_{z},\end{split}
\end{equation}
where $$\Omega(x,y,z)=\frac{1}{2}(x^{2}+y^{2})+\sum_{i=1}^{3}\frac{m_{i}}{r_{i}},$$
and $r_{i}=\sqrt{(x-x_{i})^{2}+(y-y_{i})^{2}+z^2}$, for $i=1,2,3$. Here $(x,y,z)$ represents the position vector of the massless particle, and $(x_i,y_i)$, $i=1,2,3$, denote the coordinates of the primaries which can be written in terms of the masses of the primaries as
\begin{equation*}
\begin{split}
x_{1}&=\frac{-\vert K\vert\sqrt{m_{2}^{2}+m_{2}m_{3}+m_{3}^{2}}}{K},\\
y_{1}&=0,\\
z_1&=0,\\
x_{2}&=\frac{\vert K\vert[(m_{2}-m_{3})m_{3}+m_{1}(2m_{2}+m_{3})]}{2K\sqrt{m_{2}^{2}+m_{2}m_{3}+m_{3}^{2}}},\\
y_{2}&=\frac{-\sqrt{3}m_{3}}{2m_{2}^{3/2}}\sqrt{\frac{m_{2}^{3}}{m_{2}^{2}+m_{2}m_{3}+m_{3}^{2}}},\\
z_2&=0,\\
x_{3}&=\frac{\vert K\vert}{2\sqrt{m_{2}^{2}+m_{2}m_{3}+m_{3}^{2}}},\\ y_{3}&=\frac{\sqrt{3}}{2\sqrt{m_{2}}}\sqrt{\frac{m_{2}^{3}}{m_{2}^{2}+m_{2}m_{3}+m_{3}^{2}}},\\
z_3&=0, \end{split}
\end{equation*}
where $K=m_{2}(m_{3}-m_{2})+m_{1}(m_{2}+2m_{3})$. For these equations the units are choosen in such a way $m_{1}+m_{2}+m_{3} = 1$. The equations of motion have a first integral (Jacobi):

\begin{equation} \label{jacobiintegral}
C=-(\dot{x}^2+\dot{y}^2+\dot{z}^2)+2\Omega.
\end{equation}

The dynamics of the R4BP is much more complex than the one associated to R3BP. Actually, there has been a considerable progress in understanding the basic but important aspects of this problem.  We refer to the interested reader to the works \cite{BurgosDelgado}, \cite{BurgosDelgadoII}, \cite{BurgosLessardJames}, \cite{PapaI}, \cite{Leandro}, \cite{KepleyII}  and references therein for a deeper discussion in the dynamics of this fascinating system.
\newline

The equations \eqref{ecuacionesfinales} are derived from the Hamiltonian
\begin{equation}
H=\frac{1}{2}(p_{x}^{2}+p_{y}^{2}+p_{z}^{2})+yp_{x}-xp_{y}-\frac{m_{1}}{r_{1}}-\frac{m_{2}}{r_{2}}-\frac{m_{3}}{r_{3}},\label{originalhamiltonian}
\end{equation}
which can be used to study dynamics around $m_3$, according to the next theorem \cite{BurgosGidea}.
\begin{theorem} \label{main theorem} After the symplectic scaling
$$(x,y,z,p_{x},p_{y},p_{z})\rightarrow m_{3}^{1/3}(x,y,z,p_{x},p_{y},p_{z}),
$$ the limit $m_{3}\rightarrow0$ of Hamiltonian (\ref{originalhamiltonian}) restricted to a neighborhood of $m_{3}$ exists and yields a new Hamiltonian
\begin{equation}\begin{split}\label{hillhamiltonian}
H=&\frac{1}{2}(p^{2}_{x}+p^{2}_{y}+p_{z}^{2})+yp_{x}-xp_{y}+\frac{1}{8}x^2-\frac{3\sqrt{3}}{4}(1-2\mu)xy-
\frac{5}{8}y^2+\frac{1}{2}z^2\\
&-\frac{1}{\sqrt{x^2+y^2+z^2}},
\end{split}\end{equation}where $m_{1}=1-\mu$ and $m_{2}=\mu$.
\end{theorem}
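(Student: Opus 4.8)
The plan is to realize the Hill approximation as a two-step canonical change of variables --- a translation placing the origin at $m_3$, followed by the symplectic dilation by $m_3^{1/3}$ --- and then to expand the transformed Hamiltonian in powers of $m_3^{1/3}$, checking that the potentially divergent terms are either inessential constants or cancel outright.

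First I would translate the origin to the location $(x_3,y_3,0)$ of the tertiary. Since in the rotating frame Hamilton's equations for \eqref{originalhamiltonian} give $p_x=\dot x-y$, $p_y=\dot y+x$, $p_z=\dot z$, a body at rest at $(x_3,y_3,0)$ carries momentum $(-y_3,x_3,0)$; hence the appropriate shift is $x=x_3+\xi,\ y=y_3+\eta,\ z=\zeta,\ p_x=-y_3+p_\xi,\ p_y=x_3+p_\eta,\ p_z=p_\zeta$, which is symplectic because it is a constant affine translation in $(q,p)$. A direct substitution shows that the quadratic momentum part together with $yp_x-xp_y$ reproduces $\tfrac12(p_\xi^2+p_\eta^2+p_\zeta^2)+\eta p_\xi-\xi p_\eta$ plus the linear terms $-x_3\xi-y_3\eta$ and an additive constant, while $r_3$ becomes $\sqrt{\xi^2+\eta^2+\zeta^2}$ and each $r_i$ ($i=1,2$) becomes $\sqrt{d_i^2+2(a_i\xi+b_i\eta)+\xi^2+\eta^2+\zeta^2}$, with $a_i=x_3-x_i$, $b_i=y_3-y_i$, $d_i^2=a_i^2+b_i^2$.

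Next I would apply the dilation $(\xi,\eta,\zeta,p_\xi,p_\eta,p_\zeta)=m_3^{1/3}(X,Y,Z,p_X,p_Y,p_Z)$ and multiply the Hamiltonian by $m_3^{-2/3}$, the scalar that turns this conformally symplectic map into a canonical one. Writing $\eps=m_3^{1/3}$ and $m_3=\eps^3$, the Kepler term is $-m_3/r_3=-\eps^{2}/\sqrt{X^2+Y^2+Z^2}$, which after the $\eps^{-2}$ prefactor yields exactly $-1/\sqrt{X^2+Y^2+Z^2}$. The external potentials I would expand through $(1+u)^{-1/2}=1-\tfrac12u+\tfrac38u^2-\cdots$; after multiplication by $\eps^{-2}$ the expansion of $-m_i/r_i$ organizes into an $O(\eps^{-2})$ constant, an $O(\eps^{-1})$ term linear in $(X,Y)$, and an $O(1)$ quadratic term equal to $\tfrac{m_i}{2d_i^3}(X^2+Y^2+Z^2)-\tfrac{3m_i}{2d_i^5}(a_iX+b_iY)^2$.

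The crux --- and the step I expect to be the real obstacle --- is disposing of the divergent $O(\eps^{-2})$ and $O(\eps^{-1})$ contributions. The constants (including $-\tfrac12(x_3^2+y_3^2)$) are immaterial to Hamilton's equations and are simply discarded. The linear terms, carrying total coefficients $-x_3+\sum_{i}m_ia_i/d_i^3$ and $-y_3+\sum_i m_ib_i/d_i^3$ on $X$ and $Y$, must vanish identically; this is precisely the statement that $m_3$ occupies a relative equilibrium of the three primaries, i.e.\ that the centrifugal term $(x_3,y_3)$ balances the attraction $\sum_{i=1,2}m_i(x_3-x_i,y_3-y_i)/d_i^3$. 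Since the primaries move in the Lagrange (equilateral) relative equilibrium, this identity holds \emph{exactly} for every $m_3>0$, so the singular linear part cancels before any limit is taken. Finally I would read off the limiting geometry from the coordinate formulas: as $m_3\to0$ one gets $d_1=d_2\to1$, the primaries at $(-\mu,0)$ and $(1-\mu,0)$, and $m_3\to(\tfrac12-\mu,\tfrac{\sqrt3}{2})$, the $L_4$ point of the $(m_1,m_2)$ pair, whence $a_1=-a_2=\tfrac12$, $b_1=b_2=\tfrac{\sqrt3}{2}$. Substituting these values together with $m_1+m_2=1$ and $m_1-m_2=1-2\mu$ into the surviving quadratic form collapses it to $\tfrac18X^2-\tfrac{3\sqrt3}{4}(1-2\mu)XY-\tfrac58Y^2+\tfrac12Z^2$, which, combined with the kinetic, Coriolis, and Kepler terms, is exactly \eqref{hillhamiltonian}. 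The binomial expansion and the closing algebra are routine; the conceptual heart is recognizing the equilibrium identity that annihilates the singular linear terms and verifying the limiting positions of the primaries.
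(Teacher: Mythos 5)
Your proposal is correct and follows essentially the same route as the paper's source: the theorem is stated here without proof, with a citation to \cite{BurgosGidea}, where the argument is exactly this --- a symplectic translation to the tertiary (with the momentum shift $(-y_3,x_3,0)$), the conformally symplectic dilation by $m_3^{1/3}$ with the Hamiltonian rescaled by $m_3^{-2/3}$, Taylor expansion of the external potentials, cancellation of the singular linear terms via the Lagrange relative-equilibrium identity (which, as you note, holds exactly for every $m_3>0$ since the configuration is equilateral with the center of mass at the origin), and passage to the limit with the tertiary at the $L_4$ point of the $(m_1,m_2)$ pair. Your closing algebra, collapsing the surviving quadratic form to $\tfrac18 x^2-\tfrac{3\sqrt{3}}{4}(1-2\mu)xy-\tfrac58 y^2+\tfrac12 z^2$, reproduces \eqref{hillhamiltonian} exactly.
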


The resulting Hamiltonian also gives rise to a three-degree of freedom system depending on a parameter $\mu$ which represents the relative mass $m_{2}$. We can think this system as a Kepler problem, for the body with infinitesimal mass and the  tertiary, placed at origin, with additional quadratic terms produced by the gravitational influence of two large bodies placed at infinite distance from the tertiary. When we consider $\mu=0$ in the Hamiltonian (\ref{hillhamiltonian}), we recover the system of the classical Hill's problem. This will be obvious if we write the equations in a more recognizable form as we mention in the following corollary \cite{BurgosGidea}.

\begin{corollary} \label{main corollary} The equations of motion given by (\ref{hillhamiltonian}) are equivalent, via a rotation, to the system
\begin{equation}\begin{split}\label{finalequations}
\ddot{x}-2\dot{y}&=\Omega_{x},\\
\ddot{y}+2\dot{x}&=\Omega_{y},\\
\ddot{z}&=\Omega_{z},\end{split}
\end{equation}
where
\begin{equation}\label{rotatedeffective}
\Omega=\frac{1}{2}(\lambda_{2}x^2+\lambda_{1}y^2-z^2)+
\frac{1}{\sqrt{x^2+y^2+z^2}},
\end{equation}
and $\lambda_{2}=\frac{3}{2}(1+d)$, $\lambda_{1}=\frac{3}{2}(1-d)$, $d=\sqrt{1-3\mu+3\mu^2}$.
\end{corollary}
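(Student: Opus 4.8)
The plan is to first write down the equations of motion generated by the Hamiltonian \eqref{hillhamiltonian} and then to recognize that the only obstruction to the stated form is a single cross-term that a planar rotation can remove. Applying Hamilton's equations to \eqref{hillhamiltonian}, the relations $\dot x=p_x+y$, $\dot y=p_y-x$, $\dot z=p_z$ let me eliminate the momenta and pass to second order. A direct computation then yields
\begin{equation*}\begin{split}
\ddot{x}-2\dot{y}&=\tfrac{3}{4}x+\tfrac{3\sqrt{3}}{4}(1-2\mu)y-\frac{x}{r^{3}},\\
\ddot{y}+2\dot{x}&=\tfrac{3\sqrt{3}}{4}(1-2\mu)x+\tfrac{9}{4}y-\frac{y}{r^{3}},\\
\ddot{z}&=-z-\frac{z}{r^{3}},
\end{split}\end{equation*}
with $r=\sqrt{x^2+y^2+z^2}$, so that these equations already take the form \eqref{finalequations} for the effective potential $\widetilde\Omega=\tfrac12\bigl(\tfrac34 x^2+\tfrac94 y^2-z^2\bigr)+\tfrac{3\sqrt3}{4}(1-2\mu)xy+1/r$. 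The term $-\tfrac12 z^2$ and the Keplerian term $1/r$ match \eqref{rotatedeffective} verbatim; the discrepancy is precisely the off-diagonal quadratic form in $(x,y)$.

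Next I would isolate that planar quadratic form and diagonalize it by a rotation. Writing $\mathbf q=(x,y)^{\mathsf T}$, the quadratic part of $\widetilde\Omega$ in the $xy$-plane is $\tfrac12\mathbf q^{\mathsf T}M\mathbf q$ with
\[
M=\begin{pmatrix} \tfrac34 & \tfrac{3\sqrt3}{4}(1-2\mu)\\ \tfrac{3\sqrt3}{4}(1-2\mu) & \tfrac94\end{pmatrix}.
\]
Since $M$ is symmetric it is orthogonally diagonalizable, so I set $\mathbf q=R(\theta)\mathbf Q$ with $R(\theta)$ the planar rotation chosen so that $R(\theta)^{\mathsf T}MR(\theta)=\operatorname{diag}(\lambda_2,\lambda_1)$. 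The characteristic polynomial uses $\operatorname{tr}M=3$ and $\det M=\tfrac{27}{16}\bigl(1-(1-2\mu)^2\bigr)=\tfrac{27}{4}\mu(1-\mu)$, so the eigenvalues solve $\lambda^2-3\lambda+\tfrac{27}{4}\mu(1-\mu)=0$, namely $\lambda=\tfrac32\bigl(1\pm\sqrt{1-3\mu+3\mu^2}\bigr)$. These are exactly $\lambda_2=\tfrac32(1+d)$ and $\lambda_1=\tfrac32(1-d)$ with $d=\sqrt{1-3\mu+3\mu^2}$, as claimed in \eqref{rotatedeffective}.

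The remaining point, and in my view the only genuinely delicate one, is to check that this rotation leaves the rest of the system invariant in form. The Keplerian term is unaffected because $r$ depends only on $\|\mathbf q\|^2+z^2$, which a rotation preserves, and the $z$-equation does not involve $x,y$. For the gyroscopic (Coriolis) terms I would write the planar equations as $\ddot{\mathbf q}+2J\dot{\mathbf q}=\nabla_{\mathbf q}\widetilde\Omega$ with $J=\begin{pmatrix}0&-1\\1&0\end{pmatrix}$; substituting $\mathbf q=R(\theta)\mathbf Q$ and multiplying on the left by $R(\theta)^{\mathsf T}=R(\theta)^{-1}$ gives $\ddot{\mathbf Q}+2R(\theta)^{\mathsf T}JR(\theta)\dot{\mathbf Q}=R(\theta)^{\mathsf T}\nabla_{\mathbf q}\widetilde\Omega$. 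The key algebraic fact is that $J$ is itself the rotation $R(\pi/2)$, hence commutes with $R(\theta)$ and satisfies $R(\theta)^{\mathsf T}JR(\theta)=J$, so the Coriolis coupling keeps its form; meanwhile $R(\theta)^{\mathsf T}\nabla_{\mathbf q}\widetilde\Omega=\nabla_{\mathbf Q}\Omega$ by the chain rule, where $\Omega(\mathbf Q,z)=\widetilde\Omega(R(\theta)\mathbf Q,z)$ is precisely \eqref{rotatedeffective}. Renaming $\mathbf Q$ back to $(x,y)$ completes the argument. Thus the main obstacle is conceptual rather than computational: recognizing that because the angular-velocity generator $J$ commutes with every planar rotation, one may diagonalize the quadratic potential without disturbing the Coriolis terms.
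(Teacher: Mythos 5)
Your proposal is correct: the eliminations $\dot x=p_x+y$, $\dot y=p_y-x$, $\dot z=p_z$, the intermediate system with the cross term $\tfrac{3\sqrt3}{4}(1-2\mu)xy$, the invariants $\operatorname{tr}M=3$, $\det M=\tfrac{27}{4}\mu(1-\mu)$, and hence the eigenvalues $\tfrac32\bigl(1\pm\sqrt{1-3\mu+3\mu^2}\bigr)$ all check out, and the observation that $R(\theta)^{\mathsf T}JR(\theta)=J$ is exactly what makes the rotation harmless to the Coriolis coupling. Note, however, that this paper gives no proof of Corollary \ref{main corollary} at all; it is quoted from the reference \cite{BurgosGidea}, where the rotation is carried out one level up, on the Hamiltonian \eqref{hillhamiltonian} itself: one rotates $(x,y)$ and $(p_x,p_y)$ by the same angle, which is a symplectic change of variables that leaves $\tfrac12(p_x^2+p_y^2)$, the angular-momentum term $yp_x-xp_y$, and $1/\sqrt{x^2+y^2+z^2}$ invariant, and is chosen to diagonalize the planar quadratic part, producing the Hamiltonian \eqref{rotatedspatialhamiltonian} from which \eqref{finalequations} follows by Hamilton's equations. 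Your route is the same diagonalization idea implemented on the second-order Newtonian equations instead; what the Hamiltonian-level argument buys is that invariance of the gyroscopic terms is automatic (the symplectic rotation preserves $yp_x-xp_y$, so no separate commutation check is needed), while your version makes the mechanism explicit through the identity $R^{\mathsf T}JR=J$. One small point worth tightening: you should note that the orthogonal diagonalizer of $M$ can indeed be taken to be a proper rotation with the eigenvalues in the prescribed order ($\lambda_2$ on the new $x$-axis), e.g.\ by flipping the sign of a column and/or composing with $R(\pi/2)$, since both operations preserve the diagonalization.
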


In the work \cite{BurgosGidea} it was established that the system (\ref{finalequations}) is symmetric with respect to $\mu = 1/2$, so thereafter we assume $\mu \in [0,1/2]$.
\newline

We realize that system (\ref{finalequations}) is a particular case of a Keplerian binary system under the influence of an external gravitational perturbation of linear tidal type. The equations of motion look like \cite{Chicone}
\begin{eqnarray*}
\ddot{x}-\omega\dot{y}-\frac{\omega^2}{4}x+\frac{k_0x}{r^3}&=&-(k_{11}x+k_{12}y),\\
\ddot{y}+\omega\dot{x}-\frac{\omega^2}{4}y+\frac{k_0y}{r^3}&=&-(k_{21}x+k_{22}y),\\
\ddot{z}+\frac{k_0z}{r^3}&=&-k_{33}z,
\end{eqnarray*}
where $r = \sqrt{x^2+y^2+z^2}$, $\omega$ is a frequency associated to the rotating system, and $k_0$, $k_{11}$, $k_{12}$, $k_{21}$, $k_{22}$ are real parameters. In the case of the perturbation of a normally incident circularly polarized gravitational wave $k_{12}=k_{21}=0$, $k_{11}=4\epsilon\alpha$, $k_{22}=-4\epsilon\alpha$ and $k_{33}=-2\xi_0$, where $\epsilon$ is the perturbing parameter and $\alpha$ and $\xi_0$ are real constants related to the gravitational wave. The H4BP has the same structure as the former problem with $\Omega=2$, $k_0=1$, $k_{12}=k_{21}=0$, $k_{11}=1-\lambda_2$,  $k_{22}=1-\lambda_1$ and $k_{33}=1$.
\newline

The Jacobi constant C (\ref{jacobiintegral}) also holds for (\ref{finalequations}) with $\Omega$ given in (\ref{rotatedeffective}). It is not difficult to see that $\Omega_x$, $\Omega_y$ and $\Omega_z$, fulfill the conditions
\begin{equation*}
\begin{split}
&\Omega_{x}(x,\pm y, \pm z)=  \Omega_{x}(x,y,z),\\
&\Omega_{x}(-x,\pm y, \pm z)=  -\Omega_{x}(x,y,z),\\
&\Omega_{y}(\pm x,y, \pm z)=  \Omega_{y}(x,y,z),\\
&\Omega_{y}(\pm x,-y, \pm z)=  -\Omega_{y}(x,y,z),\\
&\Omega_{z}(\pm x,\pm y,z)= \Omega_{z}(x,y,z),\\
&\Omega_{z}(\pm x,\pm y,-z)= -\Omega_{z}(x,y,z),
\end{split}
\end{equation*}
for any combination of the sign in its variables.

The above properties allow the existence of symmetries and reversing symmetries for the equations of motion. Such transformations will be made explicit in Section \ref{sec:reversing}.
\newline

The equations (\ref{finalequations}) can be derived from the following Hamiltonian, which can be obtained rewritting (\ref{hillhamiltonian}) after the rotation given in Corollary \ref{main corollary}, that is
\begin{equation}\begin{split}
H(x,y,z,p_x,p_y,p_z)=&\frac{1}{2}(p_x^2+p_y^2+p_z^2)+yp_x-xp_y+ax^2+by^2+cz^2\\&-\frac{1}{\sqrt{x^2+y^2+z^2}},
\label{rotatedspatialhamiltonian}
\end{split}
\end{equation}
where $a=(1-\lambda_{2})/2$, $b=(1-\lambda_{1})/2$ and $c=1/2$.
\newline

\section{Reversing symmetries}\label{sec:reversing}

The symmetries of the equations of motions have been used successfully for studying periodic orbits of dynamical systems. Birkhoff \cite{Birkhoff} studied the symmetric orbits of the circular restricted three-body problem using symmetry lines \cite{Vogelaere}, which consist in the factorization of the Poincar\'e map as the product of two involutions. This technique has been used in the three-body problem with an inverse square law potential \cite{Jimenez2003}, the  St{\"o}rmer problem \cite{Jimenez1990}, and the motion of rigid body \cite{Chavoya1989}, among others.
\newline

The concept of symmetry line is closely related to reversing symmetry \cite{Lamb}. It has been used for studying periodic orbits of conservative and Hamiltonian systems \cite{Munoz1, Munoz2, Vanderbauwhede}, particularly $N$--body problems \cite{Bengochea1, Bengochea2}. In the following we introduce some concepts and useful results.
\newline

Let us consider the equations of motion in the standard form
\begin{equation} \label{standard}
\frac{d{\bf u}}{dt} = F({\bf u}),
\end{equation}
where ${\bf u} = (x,y,z,\dot{x},\dot{y},\dot{z})^T$, and $F$ is the field of the system, that is
$$
F(x,y,z,\dot{x},\dot{y},\dot{z}) = (\dot{x},\dot{y},\dot{z},2 \dot{y} + \Omega_x,-2 \dot{x} + \Omega_y, \Omega_z)^T.
$$
Here ${\bf u} \in D = {\mathbb{R}^3 \setminus \{(0,0,0) \} \times \mathbb{R}^3}$.\\

A symmetry is an involution $S: D \to D$, $(S^2 = id)$, in such a way
$$
\frac{d S ({\bf u})}{dt} = F \circ S ({\bf u})
$$
holds. In a similar way, a reversing symmetry is an involution $R: D \to D$, such that
$$
\frac{d R ({\bf u})}{dt} = -F \circ R ({\bf u}).
$$

\begin{remark}
The composition of two symmetries, or two reversing symmetries, lead to a symmetry, and the composition of a symmetry and a reversing symmetry give rise to a reversing symmetry.
\label{composition}
\end{remark}

Now, we introduce a Theorem \cite{Munoz1,Vanderbauwhede} that states the relationship between reversing symmetries and symmetric periodic orbits.
\begin{theorem}
Let ${\bf u}(t)$ be a solution defined for $t \in [0,T_0]$, which passes through fixed points of the reversing symmetries $R$ and $\widetilde R$ at times $t=0$ and $T_0$, respectively. Therefore, for all $t \in \mathbb{R}$, $m \in \mathbb{Z}$ we have that
$$
\begin{array}{c}
{\bf u}(-t) = R {\bf u}(t), \\
{\bf u}(t) = \widetilde R {\bf u}(2T_0 -t), \\
{\bf u}(2mT_0 + t) = (\widetilde R \circ R)^m {\bf u}(t).
\end{array}
$$
\label{Munoz}
\end{theorem}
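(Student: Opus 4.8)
The plan is to prove each identity by the same device: apply the relevant reversing symmetry together with a reflection of the time variable, observe that the resulting curve solves the same system \eqref{standard}, and then pin it down against ${\bf u}(t)$ at the appropriate fixed point by uniqueness. The only analytic ingredient is that $F$ is smooth on the open domain $D=\mathbb{R}^3\setminus\{(0,0,0)\}\times\mathbb{R}^3$, so that the standard existence--uniqueness theorem for ordinary differential equations applies and a trajectory is completely determined by its value at a single instant; implicitly this uses that ${\bf u}(t)$ stays in $D$ (avoids collision) on the interval under consideration.

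First I would treat ${\bf u}(-t)=R\,{\bf u}(t)$. Set ${\bf v}(t)=R({\bf u}(-t))$ and differentiate: the inner reflection $t\mapsto -t$ contributes one sign, while the defining relation $\frac{dR({\bf u})}{dt}=-F\circ R({\bf u})$ of a reversing symmetry contributes another, and the two cancel, giving $\dot{\bf v}(t)=F({\bf v}(t))$. Hence ${\bf v}$ is again a solution of \eqref{standard}. Since ${\bf u}(0)$ is a fixed point of $R$, we have ${\bf v}(0)=R({\bf u}(0))={\bf u}(0)$, so uniqueness yields ${\bf v}\equiv{\bf u}$; applying $R$ and using $R^2=\mathrm{id}$ gives the claim. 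The second identity ${\bf u}(t)=\widetilde R\,{\bf u}(2T_0-t)$ is obtained identically, now with ${\bf w}(t)=\widetilde R({\bf u}(2T_0-t))$: reflection about $t=T_0$ supplies the sign demanded by $\widetilde R$, so ${\bf w}$ solves \eqref{standard}, and because ${\bf u}(T_0)$ is fixed by $\widetilde R$ we get ${\bf w}(T_0)={\bf u}(T_0)$ and hence ${\bf w}\equiv{\bf u}$.

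The third identity then follows purely algebraically from the first two. Substituting $t\mapsto 2T_0+t$ in the second identity and eliminating ${\bf u}(-t)$ by means of the first produces the base case ${\bf u}(2T_0+t)=(\widetilde R\circ R)\,{\bf u}(t)$. A routine induction extends this to every $m\ge 0$, and the negative values of $m$ follow from the invertibility of $\widetilde R\circ R$, whose inverse is $R\circ\widetilde R$ since each factor is an involution.

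None of the steps is deep; the point that requires genuine care is the sign bookkeeping in the chain rule, namely verifying that the time reflection exactly compensates for the orientation reversal encoded in $R$ and $\widetilde R$, so that the auxiliary curves are genuine \emph{forward} solutions of \eqref{standard}. Only once that is checked is one entitled to invoke uniqueness, which is precisely what converts the two fixed-point hypotheses into the stated global relations.
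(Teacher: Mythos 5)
Your proof is correct. Be aware, though, that the paper itself contains no proof of Theorem \ref{Munoz}: it is quoted from the cited references (Mu\~noz-Almaraz et al.\ and Vanderbauwhede), so there is no internal argument to compare yours against; what you have reconstructed is precisely the standard proof from that literature. The crux is the sign bookkeeping, and yours is right: with the paper's definition of a reversing symmetry, $\frac{d}{dt}R({\bf u}) = -F\circ R({\bf u})$, the curve ${\bf v}(t)=R({\bf u}(-t))$ picks up one minus sign from the inner time reflection and one from the reversing property, hence is a forward solution of (\ref{standard}); evaluating at the fixed point ${\bf u}(0)$ and invoking uniqueness yields the first identity, the second is the same computation centred at $t=T_0$ with $\widetilde R$, and composing the two gives the $m=1$ case of the third, which your induction plus $(\widetilde R\circ R)^{-1}=R\circ\widetilde R$ extends to all $m\in\mathbb{Z}$. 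The one point that merits an explicit sentence rather than an implicit appeal: the theorem asserts the identities for all $t\in\mathbb{R}$, whereas ${\bf u}$ is only hypothesised on $[0,T_0]$. The identities are themselves the extension mechanism --- defining ${\bf u}$ on $[-T_0,0]$ by $R{\bf u}(-t)$ produces a curve that matches ${\bf u}$ in a $C^1$ fashion at $t=0$ (both one-sided derivatives equal $F({\bf u}(0))$) and stays in $D$ because $R$ maps $D$ to itself, and iterating with $\widetilde R$ propagates the solution to all of $\mathbb{R}$ --- so the global statement is legitimate, but it is obtained by construction, not presupposed; your phrase ``uniqueness yields ${\bf v}\equiv{\bf u}$'' should be read as agreement of maximal solutions on overlapping domains, which is exactly this bootstrapping.
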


\begin{remark}
Notice that if $(\widetilde R \circ R)^M = id$ for some $M \in \mathbb{N}$ then ${\bf u}(t)$ is periodic, with period $T= 2MT_0$.
\label{remark2}
\end{remark}

\begin{remark}
Let us assume the existence of a solution ${\bf u}(t)$ which passes through fixed points of reversing symmetries $R$, $\widetilde R$ in such a way $(\widetilde R \circ R)^M = id$ for some $M \in \mathbb{N}$ (the orbit is periodic). Moreover, suppose that whole periodic solution does not intersect any fixed point of a given symmetry (o reversing symmetry) $P$, then $P {\bf u}(0)$ defines an initial condition of a periodic orbit different from the original.
\label{newperiodic}
\end{remark}

The Hill's problem admits the symmetries
\begin{equation}
\begin{array}{l}
S_1(x,y,z,\dot{x},\dot{y},\dot{z}) = (-x,-y,z,-\dot{x},-\dot{y},\dot{z}), \\ [6pt]
S_2(x,y,z,\dot{x},\dot{y},\dot{z}) = (x,y,-z,\dot{x},\dot{y},-\dot{z}),
\end{array}
\label{symmetries}
\end{equation}
and reversing symmetries

\begin{equation}
\begin{array}{l}
R_1(x,y,z,\dot{x},\dot{y},\dot{z}) = (x,-y,-z,-\dot{x},\dot{y},\dot{z}), \\ [6pt]
R_2(x,y,z,\dot{x},\dot{y},\dot{z}) = (-x,y,-z,\dot{x},-\dot{y},\dot{z}), \\ [6pt]
R_3(x,y,z,\dot{x},\dot{y},\dot{z}) = (x,-y,z,-\dot{x},\dot{y},-\dot{z}), \\ [6pt]
R_4(x,y,z,\dot{x},\dot{y},\dot{z}) = (-x,y,z,\dot{x},-\dot{y},-\dot{z}).
\end{array}
\label{reversors}
\end{equation}

The reversing symmetries $R_2$, $R_3$ and $R_4$, given in (\ref{reversors}), correspond to the compositions $S_1 \circ R_1$, $S_2 \circ R_1$ and $S_1 \circ S_2 \circ R_1$, respectively. There are no other reversing symmetries that can be obtained from all possible combinations of (\ref{symmetries}) and (\ref{reversors}) (see Remark \ref{composition}).
\newline

In order to apply Theorem \ref{Munoz}, we observe that
\begin{equation}
\begin{array}{l}
R_i \circ R_i = id, i=1,\cdots,4, \\ [6pt]
(R_i \circ R_j)^2 = id, \enskip i, j \in \{1,\cdots,4 \}, \enskip i \not = j.
\end{array}
\label{tableofm}
\end{equation}
The first row of (\ref{tableofm}) is clear since $R_i$, $i=1,\cdots,4$ are involutions, and the second row can be verified easily by a straightforward computation. With this we have that if ${\bf u}(t)$ intersects two fixed points of the same reversing symmetry at times $t=0$ and $t=T_0$ then ${\bf u}(t)$ is periodic with period $T=2T_0$. On the other hand, if the solution passes through fixed points of different reversing symmetries at times $t = 0$ and $t = T_0$ then the solution is periodic, with period $T = 4T_0$.
\newline

In Section \ref{classification} we classify the periodic orbits according to its symmetry, which will be required for a better understanding of the numerical results presented in Section \ref{sec:numericalresults}.

\subsection{Classification of symmetric periodic orbits}\label{classification}

In this Section we describe  some types of symmetric periodic orbits that we have computed numerically; all of them pass through two fixed points of reversing symmetries at times $t=0$ and $t=T_0$ (in some cases the corresponding reversing symmetries at $t=0$, $t=T_0$ are equal to each other). In the following we use ${\bf u}(0) =(x_0,y_0,z_0,\dot{x}_0,\dot{y}_0,\dot{z}_0)$ and ${\bf u}(T_0) =(x_1,y_1,z_1,\dot{x}_1,\dot{y}_1,\dot{z}_1)$ for representing the state vector of an arbitrary orbit at times $t=0$ and $t=T_0$, respectively.
\newline

The classification of the orbits is as follows. The type \textbf{I} is characterized by solutions ${\bf u}(t)$ which pass through fixed points of $R_1$, we mean ${\bf u}(0) =(x_0,0,0,0,\dot{y}_0,\dot{z}_0)$ and ${\bf u}(T_0) =(x_1,0,0,0,\dot{y}_1,\dot{z}_1)$. Since $R_1$ is an involution, these kind of orbits are $T=2T_0$ periodic, and symmetric with respect to, hereafter written as wrt, $x-$axis. For the type \textbf{II} we have that at $t=0$ and $T_0$ the solutions pass through fixed points of $R_3$, we mean ${\bf u}(0) = (x_0,0,z_0,0,\dot{y}_0,0)$ and ${\bf u}(T_0)=(x_1,0,z_1,0,\dot{y}_1,0)$. In this case the orbits are also $T=2T_0$ periodic, and symmetric wrt $xz-$plane. In the type \textbf{III} the solutions ${\bf u}(t)$ pass through fixed points of $R_1$ and $R_3$ at times $t=0$ and $t=T_0$, respectively, we mean ${\bf u}(0)=(x_0,0,0,0,\dot{y}_0,\dot{z}_0)$ and ${\bf u}(T_0)=(x_1,0,z_1,0,\dot{y}_1,0)$, therefore these orbits are $T = 4T_0$ periodic and symmetric with respect to both $x-$axis and $xz-$plane. Finally, the symmetric orbits of type \textbf{IV} pass  through fixed points of $R_1$ and $R_4$ at times $t=0$ and $t=T_0$, respectively, therefore ${\bf u}(0)=(x_0,0,0,0,\dot{y}_0,\dot{z}_0)$ and ${\bf u}(T_0)=(0,y_1,z_1,\dot{x}_1,0,0)$. In this case the orbits are also $T = 4T_0$ periodic, and symmetric wrt to both $x-$axis and $yz-$plane.
\newline

All computed periodic orbits in this work belong to one of the types \textbf{I}-\textbf{IV} described above. The results of the numerical study of symmetric periodic orbits are shown in Section \ref{sec:numericalresults}.

\section{Some invariant sets and dynamical properties}\label{sec:invariantsets}

One of the advantages of considering the H4BP as a limit case of the R4BP is that the computation of the positions of the equilibrium points can be performed in terms of closed formulas depending on the mass parameter $\mu$. We recall that $\mu$ represents the relative mass of the second distant body, namely $m_{2}$. Therefore, the evolution of the position and the stability of the equilibrium points can be performed by using closed expressions that provide exact information for all the admissible values of $\mu$. An opposite situation happens for the R4BP, where the lack of closed-form expressions for the positions of the equilibrium points in terms of the masses of the system, except for very few particular cases, represents an obstruction for a general analytic study of the stability. It is worth mentioning that nowadays it is possible to develop numerical studies with high order precision to be used for \textit{a posteriori} analysis in order to provide \textit{computer assisted proofs} of the existence of equilibrium points, periodic orbits and invariant manifolds. The interested reader can consult \cite{Kepley}, \cite{KepleyII}, \cite{BurgosLessardJames} and references therein for more details. Nevertheless, the H4BP allows us to obtain general formulas for the equilibrium points and their stability could be analized by pen and paper techniques, as a consequence, we will privilege this analysis in the following subsection.
\newline

\subsection{Equilibrium points and stability}\label{equilibriumpoints}

In \cite{BurgosGidea} was shown that the H4BP has four planar equilibrium points given by the expressions
\begin{eqnarray*}
L_{1}=\left(\frac{1}{\sqrt[3]{\lambda_{2}}},0,0\right),
L_{2}=\left(-\frac{1}{\sqrt[3]{\lambda_{2}}},0,0\right),
L_{3}=\left(0,\frac{1}{\sqrt[3]{\lambda_{1}}},0\right),
L_{4}=\left(0,-\frac{1}{\sqrt[3]{\lambda_{1}}},0\right).
\end{eqnarray*}

Since the equilibrium points are planar, the characteristic polynomial $P(\eta)$, that provides the information on the linear stability of the equilibrium point in $\mathbb{R}^3$, can be written as

\begin{equation*}
\label{spatialcharpoly}
P(\eta)=P_{spatial}(\eta)P_{planar}(\eta)
\end{equation*}
where
\[
P_{spatial}(\eta):=\eta^2-\Omega_{zz}.
\]
In general, the stability for the planar case is defined by the polynomial
\[
P_{planar}(\eta):=\eta^4-(\Omega_{yy}+\Omega_{xx}-4)\eta^2+\Omega_{xx}\Omega_{yy}-\Omega_{xy}^2.
\]
It is immediate to see that the roots of $P_{spatial}(\eta)$ are given by $\sqrt{\Omega_{zz}}$ and $-\sqrt{\Omega_{zz}}$. Since the equilibrium points are symmetric, it will be enough to consider the evaluation at the points $L_1$ and $L_3$, namely
\[
\Omega_{zz}(L_1)=-(1+\lambda_2), \enskip \Omega_{zz}(L_3)=-(1+\lambda_1).
\]

As we know $\lambda_1>0$,  $\lambda_2>0$, so we obtain two purely imaginary eigenvalues for each equilibrium point, i.e., for the point $L_1$ corresponds $\{\textit{i}\omega_{z1},-\textit{i}\omega_{z1}\}$ and for the point $L_3$ corresponds $\{\textit{i}\omega_{z3},-\textit{i}\omega_{z3}\}$ where $\omega_{z1}=\sqrt{1+\lambda_2}$, $\omega_{z3}=\sqrt{1+\lambda_1}$.

The linear stability on the $xy-$plane, for the equilibrium points $L_{1}$ and $L_{2}$, is characterized by the following proposition \cite{BurgosGidea}.

\begin{proposition} The equilibrium points $L_{1}$ and $L_{2}$ are unstable for $\mu\in[0,1/2]$, with the eigenvalues of the linearized system being of the form $\pm\Lambda$ and $\pm\textit{i}\omega$ with $\Lambda>0$ and  $\omega>0$.
\end{proposition}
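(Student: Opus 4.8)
The plan is to reduce the question to the sign of the constant term of the in-plane characteristic quartic. First I would compute the Hessian of $\Omega$ at a generic point, obtaining
\[
\Omega_{xx}=\lambda_2-\frac{1}{r^3}+\frac{3x^2}{r^5},\qquad
\Omega_{yy}=\lambda_1-\frac{1}{r^3}+\frac{3y^2}{r^5},\qquad
\Omega_{xy}=\frac{3xy}{r^5},
\]
with $r=\sqrt{x^2+y^2+z^2}$. Evaluating at $L_1=(\lambda_2^{-1/3},0,0)$, where $r^3=1/\lambda_2$ and $y=z=0$, these collapse to $\Omega_{xx}(L_1)=3\lambda_2$, $\Omega_{yy}(L_1)=\lambda_1-\lambda_2$, and $\Omega_{xy}(L_1)=0$. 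Because $\Omega$ is even in $x$ and $\Omega_{xy}$ vanishes on the $x$-axis, the same three values occur at $L_2=(-\lambda_2^{-1/3},0,0)$, so it suffices to treat $L_1$.

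Next I would insert these values into the planar factor $P_{planar}(\eta)=\eta^4-(\Omega_{yy}+\Omega_{xx}-4)\eta^2+\Omega_{xx}\Omega_{yy}-\Omega_{xy}^2$ and set $u=\eta^2$, reducing the analysis to the quadratic
\[
u^2-(\lambda_1+2\lambda_2-4)\,u+3\lambda_2(\lambda_1-\lambda_2)=0 .
\]
The in-plane eigenvalues come in the pairs $\pm\sqrt{u}$ for each root $u$, so the claimed spectral type $\{\pm\Lambda,\pm i\omega\}$ with $\Lambda,\omega>0$ is exactly equivalent to this quadratic having one strictly positive and one strictly negative root.

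The decisive step is to examine the product of the two roots, which is the constant term $3\lambda_2(\lambda_1-\lambda_2)$. Using $\lambda_1=\tfrac{3}{2}(1-d)$ and $\lambda_2=\tfrac{3}{2}(1+d)$ gives $\lambda_1-\lambda_2=-3d$, so the product equals $-9\lambda_2 d$. Since $d=\sqrt{1-3\mu+3\mu^2}>0$ for every $\mu\in[0,1/2]$ (indeed the radicand, viewed as a quadratic in $\mu$, has negative discriminant and positive leading coefficient, hence is positive for all real $\mu$) and $\lambda_2>0$, this product is strictly negative. A quadratic with strictly negative constant term has two real roots of opposite sign, so $u_+=\Lambda^2>0$ and $u_-=-\omega^2<0$; consequently $\eta=\pm\Lambda$ is a real pair and $\eta=\pm i\omega$ a purely imaginary pair, and the presence of the positive eigenvalue $\Lambda$ forces instability.

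I expect no genuine obstacle beyond the bookkeeping of the Hessian; the only point requiring care is recording that the sign of the product of roots is governed entirely by $\lambda_1-\lambda_2<0$, equivalently by $d>0$, which holds throughout the admissible range of $\mu$. Note also that the sum and discriminant of the quadratic need not be controlled: a negative product already guarantees real roots of opposite sign, so the qualitative spectral picture is automatic.
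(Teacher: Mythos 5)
Your proof is correct and proceeds along the same lines the paper relies on: it uses the paper's own factorization $P(\eta)=P_{spatial}(\eta)P_{planar}(\eta)$, evaluates the Hessian of $\Omega$ at $L_1$ (giving $\Omega_{xx}=3\lambda_2$, $\Omega_{yy}=\lambda_1-\lambda_2$, $\Omega_{xy}=0$), and notes that the constant term $3\lambda_2(\lambda_1-\lambda_2)=-9\lambda_2 d<0$ forces one positive and one negative root in $u=\eta^2$, i.e.\ the spectrum $\{\pm\Lambda,\pm i\omega\}$ and hence instability. The paper itself quotes this proposition from \cite{BurgosGidea} rather than proving it, but your quantities coincide exactly with the constants the paper does display, $A=4-2\lambda_2-\lambda_1$, $B=3\lambda_2(\lambda_1-\lambda_2)$, $D=A^2-4B$: your negative root is $u_-=-(A+\sqrt{D})/2$, which reproduces the paper's stated frequency $\omega=\sqrt{A+\sqrt{D}}/\sqrt{2}$, so your argument is precisely the computation underlying the cited result.
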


It is possible to obtain the value of $\omega$ in terms of $\lambda_1$ and $ \lambda_2$,  therefore in terms of $\mu$. Such value is given by the expression

\begin{equation*}
\label{frequenciyforL1}
\omega=\frac{\sqrt{A+\sqrt{D}}}{\sqrt{2}},
\end{equation*}
with $A=4-2\lambda_2-\lambda_1$, $B=3\lambda_2(\lambda_1-\lambda_2)$ and $D=A^2-4B$. On the other hand, the linear stability on the $xy-$plane for the equilibria $L_{3}$ and $L_{4}$ is described by the following proposition \cite{BurgosGidea}.

\begin{proposition} The equilibrium points $L_{3}$ and $L_{4}$ have the following properties: there exists a value $\mu_{0}$ such that for $\mu\in(0,\mu_{0})$, the eigenvalues of the linearized system are of the form $\pm\textit{i}\omega_{1}$ and $\pm\textit{i}\omega_{2}$, for $\mu=\mu_{0}$ we have a pair of the eigenvalues $\pm\textit{i}\omega$ of multiplicity 2, and  for $\mu\in(\mu_{0},1/2]$ the eigenvalues are of the form
$\pm\alpha\pm\textit{i}\omega$ with $\alpha>0$ and $\omega>0$.
\end{proposition}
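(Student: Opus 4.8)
The plan is to reduce the spectral question at $L_3$ (and at $L_4$, which has an identical linearization) to locating the roots of the biquadratic factor $P_{planar}(\eta)$, regarded as a quadratic in $\xi=\eta^2$, and then to track those roots as $\mu$ runs over $(0,1/2]$. First I would compute the Hessian of $\Omega$ given in \eqref{rotatedeffective}. Writing $r=\sqrt{x^2+y^2+z^2}$, one has $\Omega_{xx}=\lambda_2+r^{-5}(3x^2-r^2)$, $\Omega_{yy}=\lambda_1+r^{-5}(3y^2-r^2)$ and $\Omega_{xy}=3xyr^{-5}$. Evaluating at $L_3=(0,\lambda_1^{-1/3},0)$, where $r^{-3}=\lambda_1$, gives $\Omega_{xx}(L_3)=\lambda_2-\lambda_1$, $\Omega_{yy}(L_3)=3\lambda_1$ and $\Omega_{xy}(L_3)=0$; the vanishing of the mixed term (since $x=0$) together with the evenness of the remaining entries in $y$ shows at once that $L_4=(0,-\lambda_1^{-1/3},0)$ yields the same linearization. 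Substituting into $P_{planar}$ and using $\lambda_1+\lambda_2=3$, $\lambda_2-\lambda_1=3d$, the characteristic factor becomes $\xi^2-b_1\xi+b_0$ with $b_1=\Omega_{xx}+\Omega_{yy}-4=\lambda_1-1$ and $b_0=\Omega_{xx}\Omega_{yy}=3\lambda_1(\lambda_2-\lambda_1)$.

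The second step is the sign analysis of $b_1$ and $b_0$, which already fixes the qualitative type of the spectrum. Since $d\in[1/2,1)$ on $(0,1/2]$ one has $\lambda_1=\tfrac32(1-d)\in(0,\tfrac34]$, whence $b_1=\lambda_1-1<0$ and $b_0=3\lambda_1(\lambda_2-\lambda_1)=9\lambda_1 d>0$ throughout. Thus the product $\xi_+\xi_-=b_0>0$ and the sum $\xi_++\xi_-=b_1<0$ of the two $\xi$-roots are, respectively, positive and negative. Consequently, whenever the roots are real they are both strictly negative, so $\eta=\pm\sqrt{\xi}$ produces two purely imaginary pairs $\pm i\omega_1,\pm i\omega_2$; and whenever they form a complex-conjugate pair $\xi_\pm=p\pm iq$ with $q\neq0$, their common real part is $p=b_1/2<0$, and writing $\eta=\alpha+i\omega$ in $\eta^2=p+iq$ forces $2\alpha\omega=q\neq0$, so both $\alpha$ and $\omega$ are nonzero and the four roots are exactly $\pm\alpha\pm i\omega$ with $\alpha,\omega>0$.

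It remains to show that the transition between these two regimes happens exactly once, which is governed by the discriminant $\Delta=b_1^2-4b_0$. A routine computation expresses it as $4\Delta=225d^2-222d+1$, a quadratic in $d$ that I would analyze on the range of $d$. Since $d(\mu)=\sqrt{1-3\mu+3\mu^2}$ satisfies $\tfrac{d}{d\mu}(d^2)=3(2\mu-1)<0$, the map $\mu\mapsto d$ is a strictly decreasing bijection of $(0,1/2]$ onto $[1/2,1)$. The quadratic $g(d)=225d^2-222d+1$ is positive at $d=1$ (value $4$) and negative at $d=1/2$ (value $-215/4$), so it has exactly one root $d^\ast$ in $(1/2,1)$; for $d\in(d^\ast,1)$ we get $\Delta>0$ and for $d\in[1/2,d^\ast)$ we get $\Delta<0$. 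Pulling this back through the decreasing change of variables, I set $\mu_0=\tfrac12-\tfrac16\sqrt{12(d^\ast)^2-3}$, so that $\mu\in(0,\mu_0)$ corresponds to $\Delta>0$ (two imaginary pairs), $\mu\in(\mu_0,1/2]$ to $\Delta<0$ (the quadruple $\pm\alpha\pm i\omega$), and $\mu=\mu_0$ to $\Delta=0$. At that single value the double root $\xi=b_1/2<0$ is negative, giving the pair $\pm i\omega$ of multiplicity two, which is precisely the claimed picture.

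The main obstacle is this last step: proving that $\Delta$ changes sign exactly once on $(0,1/2]$. Everything reduces to confirming that $g(d)$ has precisely one root in the attained interval $[1/2,1)$ and that $d(\mu)$ is monotone, so that no spurious additional crossings occur; the endpoint signs of $g$ and the strict monotonicity of $d(\mu)$ settle this cleanly. The only other points requiring care are verifying that the double root at $\mu_0$ is genuinely negative (immediate from $b_1<0$) and that in the complex regime the real and imaginary parts are both strictly positive (immediate from $q\neq0$), both of which follow from the sign bookkeeping already established.
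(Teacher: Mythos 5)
Your proof is correct. Note that the paper does not actually prove this proposition: it is recalled from \cite{BurgosGidea}, and the only checkable datum in the text is the exact value of $\mu_0$. Your argument reproduces that value precisely: the root of $g(d)=225d^2-222d+1$ in $(1/2,1)$ is $d^\ast=(37+8\sqrt{21})/75$ (since $\sqrt{222^2-900}=48\sqrt{21}$), and then
$\mu_0=\tfrac12-\tfrac16\sqrt{12(d^\ast)^2-3}=\tfrac{1}{450}\bigl(225-\sqrt{3(5227+2368\sqrt{21})}\bigr)$,
exactly the value stated in Section \ref{equilibriumpoints}. The route you take --- Hessian at $L_3$, reduction of the quartic to a quadratic in $\eta^2$, Vieta sign bookkeeping ($b_1<0$, $b_0>0$), and a single sign change of the discriminant pulled back through the monotone map $\mu\mapsto d$ --- is the standard argument for this type of statement and is evidently the one behind the cited result. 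One point you navigated correctly and deserves emphasis: the paper is internally inconsistent about $\lambda_1,\lambda_2$ (Corollary \ref{main corollary} sets $\lambda_1=\tfrac32(1-d)$, $\lambda_2=\tfrac32(1+d)$, while Section \ref{resonances} swaps them); you used the Corollary's convention, which is the one consistent with the equilibrium positions, whereas with the swapped convention one would get $b_0=3\lambda_1(\lambda_2-\lambda_1)<0$ at $L_3$, i.e.\ a plane saddle for all $\mu$, contradicting the proposition.
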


The exact value of $\mu$ where the change of linear stability occurs is
\[\mu_{0}=\frac{1}{450}\left(225-\sqrt{3(5227+2368\sqrt{21})}\right),\]
which is approximately $\mu_{0}\approx 0.011942$.  In the Solar System there exists several equilateral configurations being Sun-Jupiter-Asteroids the most famous, in this case $\mu = 0.00095$ associated to Jupiter-Sun mass-ratio. The corresponding equilibrium points $L_{3}$ and $L_{4}$ are linearly stable, as happens in other cases with potential applications.
\newline

\subsection{Resonant periodic orbits}\label{resonances}

Another advantage of Hill's approach is that we can study the resonances between frequencies associated to the eigenvalues of the equilibrium points. If $L_{3}$ and $L_{4}$ are linearly stable we can consider the ratio $
\omega_{2}/\omega_{1}=k$ with $k\in\mathbb{Z}$ to find a sequence of values of $\mu$ where such ratio it is satisfied. In \cite{Burgos} are reported the values

\begin{equation}
\label{resonancesequation}
\begin{split}
\mu_k=&\frac{1}{2}-\frac{1}{6\sqrt{3}}\sqrt{\frac{5227+1184r-5K^2-32Kr-38K}{(K-25)^2}}
\end{split}
\end{equation}
with $r=\sqrt{84-3K}$ and $K=(k^2-1)^2/(k^2+1)^2$.  For instance, the exact values for $k=2,3$ are respectively

\begin{eqnarray*}
\mu_2&=\frac{1}{2}-\frac{1}{462}\sqrt{\frac{5}{3}(10181+458\sqrt{2073})},\\
\mu_3&=\frac{1}{2}-\frac{1}{1218}\sqrt{5(24077+6464\sqrt{57})}.
\end{eqnarray*}

In Table \ref{table:nonlin} we provide some values of $\mu_k$ with 6 decimals, however (\ref{resonancesequation}) provides exact results that can be taken with any order of precision.

\begin{table}[ht]
\caption{Resonances between frequencies.} 
\centering 
\begin{tabular}{c c c c}
\hline\hline 
$k$ & $\mu$ & $k$ & $\mu$ \\ [0.5ex] 
\hline 
1 & $\mu_{0}$ & 6 & 0.001293 \\ 
2 & 0.007733 & 7 & 0.000965 \\
3 & 0.004390 & 8 & 0.000746 \\
4 & 0.002713 & 9 & 0.000594 \\
5 & 0.001817 & 10 & 0.000483 \\ [1ex] 
\hline 
\end{tabular}
\label{table:nonlin} 
\end{table}
In Table \ref{table:nonlin} we can observe that the sequence $\{\mu_k\}$ is decreasing. It is not difficult to see that if $k\rightarrow\infty$ then $K\rightarrow1$, $r\rightarrow9$ and $\mu_k\rightarrow0$.
\newline

For the equilibrium points $L_{1}$ and $L_{2}$ we can also study the resonances among the vertical and planar frequencies in order to determine periodic motions in the linear approximation. Following  the notation of chapter $10$ of \cite{Sz}, the general solution of the linear equations of motion around a saddle$\times$center$\times$center equilibrium point is

\begin{eqnarray}\label{linearmotionL1}
\begin{split}
\xi&=\zeta_0\cos(\omega t)+\frac{\eta_0}{\alpha}\sin(\omega t),\\
\eta&=\eta_0\cos(\omega t)+\alpha\xi_0\sin(\omega t),\\
\zeta&=C_1\cos(\omega_{z1} t)+C_2\sin(\omega_{z1} t),
\end{split}
\end{eqnarray}
where $\zeta_0$, $\eta_0$, $\alpha$, $\xi_0$, $C_1$ and $C_2$ are constant defined by initial conditions and the second derivative $\Omega_{zz}$. The periodic solutions can be classified in three families: the class (a) corresponds to one-dimensional vertical Liapunov orbits for $\eta=\xi=0$, the class (b) corresponds to planar Liapunov orbits for $\zeta=0$, and the class (c) corresponds to spatial Lissajous orbits that satisfy

\begin{eqnarray*}
\begin{split}
\xi&=\zeta_0\cos(\omega t)+\frac{\eta_0}{\alpha}\sin(\omega t),\\
\eta&=\eta_0\cos(\omega t)+\alpha\xi_0\sin(\omega t),\\
\zeta&=C_1\cos\left(\frac{q}{p}\omega t\right)+C_2\sin\left(\frac{q}{p}\omega t\right),
\end{split}
\end{eqnarray*}
where $q/p$ is a rational number so that $q/p=\omega_{z1}/\omega\leq1$. In the case of the H3BP we have that $\omega_{z1}/\omega=2/\sqrt{2\sqrt{7}-1}\approx1$. Nevertheless, with the aim of completing the study of resonances among the equilibrium points in the H4BP, to be used in further analytical studies, we can examine the behaviour of the ratio $q/p$ as function of $\mu$
\newline

We start recalling that both quantities $\lambda_1$ and $\lambda_2$ can be written as $\lambda_2=(3/2)(1-d)$ and $\lambda_1=(3/2)(1+d)$, where $d=\sqrt{1-3\mu+3\mu^2} \in [1/2,1]$ is a decreasing function for $\mu \in [0,1/2]$. So we can write the frequencies $\omega$ and $\omega_{z1}$ in terms of $d$. A straightforward computation shows that the equation $q/p = \omega_{z1}/\omega$ can be written as

\begin{equation*}
\label{r in terms of d}
r(d):=\frac{q}{p}=\frac{\sqrt{6d+10}}{\sqrt{-1-3d+\sqrt{1+222d+225d^{2}}}}\, .
\end{equation*}

The next result shows that this function is indeed a bijection.

\begin{proposition}\label{proprd}
   For all $\mu\in[0,1/2]$, the function $r:[1/2,1]\longrightarrow[r(1),r(1/2)]$ is a bijection.
\end{proposition}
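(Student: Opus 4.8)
The plan is to exploit the elementary fact that a continuous, strictly monotone function on a compact interval maps it bijectively onto the closed interval spanned by its two endpoint values. Since the stated codomain is $[r(1),r(1/2)]$, I would aim to show that $r$ is continuous and strictly \emph{decreasing} on $[1/2,1]$. Before anything else I would confirm that $r$ is well defined, real and positive there: the only delicate point is the denominator $g(d):=-1-3d+\sqrt{1+222d+225d^2}$. Rationalizing, $g(d)=\frac{(1+222d+225d^2)-(1+3d)^2}{\sqrt{1+222d+225d^2}+1+3d}=\frac{216\,d(1+d)}{\sqrt{1+222d+225d^2}+1+3d}$, which is manifestly positive for $d>0$; hence $r$ is a positive, continuous function on $[1/2,1]$, and $r(1/2),r(1)$ are genuine (finite, positive) values.

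For monotonicity, since $r>0$ it suffices to show that $r^2$ is strictly decreasing. Writing $r(d)^2=\dfrac{6d+10}{g(d)}$ and $S=\sqrt{1+222d+225d^2}$, a direct differentiation shows that $(r^2)'<0$ is equivalent to $6g<(6d+10)g'$, where $g'=-3+\frac{111+225d}{S}$. I would then clear the single radical by multiplying through by $S>0$ and replace $S^2$ by $1+222d+225d^2$; the $d^2$ terms cancel and the whole inequality collapses to the clean statement $\sqrt{1+222d+225d^2}<66d+46$. Both sides being positive on the range, squaring is an equivalence and reduces the claim to $4131d^2+5850d+2115>0$, which is immediate since every coefficient is positive. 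Thus $r^2$, and therefore $r$, is strictly decreasing on $[1/2,1]$.

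Finally, combining continuity with strict monotonicity, the intermediate value theorem gives $r([1/2,1])=[r(1),r(1/2)]$ while strict monotonicity gives injectivity, so $r$ is the desired bijection. The only substantive work is the monotonicity step, and the main obstacle is purely computational: carrying the derivative reduction far enough that the radical expression dissolves into the elementary inequality $S<66d+46$. The two points that need care are verifying $g>0$ throughout (so that the manipulations, and in particular division by $g^2$, are legitimate) and checking that both sides of $S<66d+46$ are positive before squaring.
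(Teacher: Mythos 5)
Your proposal is correct and takes essentially the same route as the paper: both arguments establish that $r$ is strictly decreasing on $[1/2,1]$ via a derivative computation whose sign reduces to the identical key inequality $\sqrt{1+222d+225d^{2}}<46+66d$, and then conclude bijectivity from continuity plus strict monotonicity. The only differences are cosmetic and in completeness: you differentiate $r^{2}$ instead of $r$ itself, and you explicitly verify the positivity of the denominator (by rationalizing) and the key inequality (by squaring into the positive-coefficient quadratic $4131d^{2}+5850d+2115>0$), steps which the paper merely asserts.
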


\begin{proof}
   Let us write the derivative
   \[ r'(d)=\frac{6\sqrt{2}(-2(23+33d)+\sqrt{1+222d+225d^{2}})}{\sqrt{5+3d}\sqrt{1+222d+225d^{2}}(-1-3d+\sqrt{1+222d+225d^{2}})^{3/2}}
   \]
   and realize that the numerator satisfies $-2(23+33d)+\sqrt{1+222d+225d^{2}}<0$ for all $d\in[1/2,1]$, that is, the function $r(d)$ is monotically decreasing, with \[ r(1)=\frac{2}{\sqrt{2\sqrt{7}-1}}\approx0.96544\,.\]
\end{proof}

Proposition \ref{proprd} ensures the existence of a unique value $\bar{d}$ such that $r(\bar{d})=1$ which corresponds to $\bar{d}=5/6$. So, we can state that  $q/p\in[r(1),1] \approx [0.96544,1]$. In other words, the ratio stays \textit{close} to $1:1$ resonance.
\newline


Since $r(d)$ is a bijection, we can even compute the range of values of $\mu$ for which $q/p\in[0.96544,1]$. The  proof of the following Proposition is an easy algebraic manipulation of two-second degree algebraic polynomials so will be omitted.

\begin{proposition}\label{proprd2} The range of values of $\mu$ for the existence of local spatial periodic motion of class (c) around the equilibrium point $L_1$ is $\mu\in[0,\mu_p]$ with
\[
\mu_p=\frac{1}{18}\left(9-4\sqrt{3}\right),
\]
which is approximately $\mu_p\approx0.1151$.
\end{proposition}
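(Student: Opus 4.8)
The plan is to translate the defining constraint of the class (c) orbits, namely $q/p = \omega_{z1}/\omega \le 1$, successively into a constraint on $d$ and then on $\mu$, exploiting the monotonicity established in Proposition \ref{proprd} together with the monotonicity of $d(\mu) = \sqrt{1 - 3\mu + 3\mu^2}$. Since $r$ is a strictly decreasing bijection on $[1/2,1]$ (Proposition \ref{proprd}) whose unique preimage of $1$ is $\bar d = 5/6$, the inequality $r(d) \le 1$ is equivalent to $d \ge 5/6$. The complementary bound $r(d) \ge r(1)$ is automatic, since $r(1)$ is the minimum of $r$ on the admissible interval; thus the feasible band is exactly $d \in [5/6, 1]$.

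First I would confirm $r(\bar d) = 1$ at $\bar d = 5/6$ by direct substitution into the closed form of $r(d)$: the numerator radicand is $6d + 10 = 15$, the inner radical evaluates to $\sqrt{1 + 222\cdot\tfrac56 + 225\cdot\tfrac{25}{36}} = \sqrt{1369/4} = 37/2$, and hence the outer denominator radicand is $-1 - 3\cdot\tfrac56 + \tfrac{37}{2} = 15$ as well; the two cancel to give $r(5/6) = \sqrt{15}/\sqrt{15} = 1$. This is the only genuine computation on the $d$-side, the rest following from the strict monotonicity of $r$ already proved.

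Next I would transfer the band $d \in [5/6, 1]$ to the $\mu$-axis. Because $\mu \mapsto d(\mu)$ is decreasing on $[0,1/2]$ with $d(0) = 1$, the condition $d \ge 5/6$ is equivalent to $\mu \le \mu_p$, where $\mu_p$ is the solution in $[0,1/2]$ of $d(\mu) = 5/6$. Squaring gives $3\mu^2 - 3\mu + 11/36 = 0$, i.e.\ $108\mu^2 - 108\mu + 11 = 0$, whose discriminant is $108^2 - 4\cdot 108 \cdot 11 = 6912 = (48\sqrt3)^2$; hence $\mu = \tfrac12 \pm \tfrac{2\sqrt3}{9}$. Selecting the root lying in $[0,1/2]$ yields
\[
\mu_p = \frac12 - \frac{2\sqrt3}{9} = \frac{1}{18}\left(9 - 4\sqrt3\right) \approx 0.1151,
\]
as claimed, so the existence region for class (c) motion is $\mu \in [0,\mu_p]$.

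Since every step is elementary, there is no real obstacle; the only points demanding care are the bookkeeping of the two monotonicities and the rejection of one spurious root. Composing the decreasing maps $r(d)$ and $d(\mu)$ produces an \emph{increasing} function $\mu \mapsto r(d(\mu))$, so the Lissajous constraint $q/p \le 1$ translates into the \emph{upper} bound $\mu \le \mu_p$ rather than a lower bound, while the domain constraint $\mu \ge 0$ supplies the left endpoint (and corresponds to $q/p = r(1)$). Finally, squaring $d(\mu) = 5/6$ introduces the extraneous root $\tfrac12 + \tfrac{2\sqrt3}{9} > 1/2$, which falls outside the admissible range $[0,1/2]$ and must be discarded.
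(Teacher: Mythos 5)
Your proposal is correct and follows exactly the route the paper has in mind: the paper omits the proof, describing it as ``an easy algebraic manipulation of two second-degree algebraic polynomials,'' and your argument supplies precisely those two quadratics (the verification that $r(5/6)=1$, and $108\mu^2-108\mu+11=0$ for $\mu_p$), combined with the two monotonicities from Proposition \ref{proprd} and $d(\mu)$. The only cosmetic remark is that the rejected root $\tfrac12+\tfrac{2\sqrt3}{9}$ is not extraneous from squaring (it genuinely satisfies $d(\mu)=5/6$, being the mirror image of $\mu_p$ under the symmetry about $\mu=1/2$); it is discarded, as you also correctly state, because it lies outside the admissible range $[0,1/2]$.
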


The above study provides a first insight of the center manifold associated to the equilibrium points $L_1$ and $L_3$. However, it is well known that higher order terms will produce non-linear coupling effects between planar and spatial motions. For instance, when the frequencies in the center manifold become equal to each other we obtain the so called Halo orbits. Therefore, Propositions \ref{proprd} and \ref{proprd2} suggest that there exist Halo orbits in the interval $[0,\mu_p]$ since the ratio $\omega_{z1}/\omega\leq1$. Of course, a further analytical study, including higher-order terms, is necessary in order to prove this conjecture. Nevertheless, the numerical explorations given in subsection \ref{haloorbits} support such conjecture.\\

{\bf Stability of periodic orbits}. For completeness, we wish to recall some basics aspects regarding the stability of periodic orbits. The set of eigenvalues associated to a  periodic orbit in a Hamiltonian system with 3 degrees of freedom looks like $\Lambda=\{1, 1, \lambda_1, \lambda^{-1}_1,\lambda_2, \lambda^{-1}_2\}$, where the eigenvalues $\lambda_1$ and $\lambda_2$ can be computed as the roots of the characteristic polynomial

\begin{equation}\label{periodicspatial charpoly}
Q(\lambda)=\lambda^4-\alpha\lambda^3+\beta\lambda^2-\alpha\lambda+1,
\end{equation}
obtained from the $4\times4$ matrix resulting by restricting the study of periodic orbits to a suitable one dimensional Poincar\'e at a fixed energy level, i.e., we restrict the study to a co-dimension 2 surface $\Sigma$. If we denote such  matrix as $\Phi_\Sigma$, the coefficients of (\ref{periodicspatial charpoly}) are given by
\[
\alpha=Tr(\Phi_\Sigma)\quad \text{and}\quad \beta=\frac{1}{2}\left(Tr(\Phi_\Sigma)^2-Tr(\Phi^2_\Sigma)\right).
\]

The study of (\ref{periodicspatial charpoly}) is simplified with the use of the auxiliary variable $s=\lambda+\lambda^{-1}$. In terms of $s$, the characteristic polynomial becomes
\[
Q(s)=s^2-\alpha s+\beta-2,
\]
whose roots are the so called \textit{stability parameters}. It is easy to see that the \textit{stability parameters} are given by the expressions
\[
s_1=\frac{1}{2}\left(\alpha+\sqrt{D}\right) \quad s_1=\frac{1}{2}\left(\alpha-\sqrt{D}\right),
\]
with $D=\alpha^2-4(\beta-2)$. Therefore, the criterion for linear stability can be stated by requiring that $D\geq0$ and $\vert s_i\vert<2$ for $i=1,2$. When one of the stability parameters satisfies $\vert s_i\vert=2$ the corresponding orbit is called \textit{critical orbit}; these orbits are relevant because they point out possible bifurcations. The interested reader in the details of the above discussion and/or in a deeper study of stability and bifurcations can consult \cite{Scheereslibro} and \cite{Zagouras} for more information.

\subsection{Invariant $z$--axis}\label{invariantz}

It is not difficult to see that the $z-$axis $\{(x,y,z)\,|\,x=y=0\,,p_{x}=p_{y}=0\}$ is an invariant set, and that the corresponding dynamics is defined by equations
\begin{eqnarray*}
     \dot{z}&=&p_{z}\,,\\
     \dot{p}_{z}&=&-z-\frac{z}{|z|^{3}}\, ,
\end{eqnarray*}
which are obtained from the Hamiltonian (\ref{rotatedspatialhamiltonian}) restricted to the vertical axis, we mean

\begin{equation}\label{hamiltonianzaxis}
   H(z,p_{z})=\frac{1}{2}p_{z}^{2}+\frac{1}{2}z^{2}-\frac{1}{|z|}\, .
\end{equation}

The corresponding phase space is shown in Figure \ref{verticalmotion}. All motions are bounded in position unlike momentum due to binary collisions and there is no dependence on the parameter $\mu$.
\newline

\begin{figure}
  \centering
  \includegraphics[width=3in,height=2.5in]{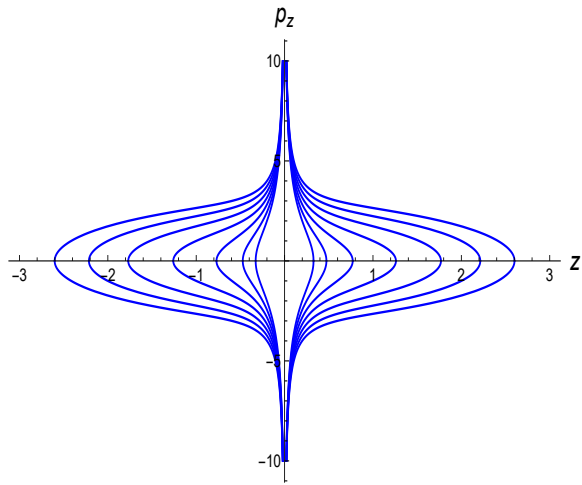}
  \caption{Phase space of the invariant $z$-axis.}\label{verticalmotion}
\end{figure}

The analysis performed in \cite{BelbrunoII} shows that, when $z>0$ in the Hamiltonian (\ref{hamiltonianzaxis}),  there exists a \textit{polar orbit} $z(t)$ for which the particle moves between $z=0$ and $z=d(h)$, where $d(h)$ is a solution of the equation $\frac{z^{2}}{2}-\frac{1}{z}=h$, with $h$ a constant energy. Since this equation has solution for any value of $h$ (equivalently for any $C$) the polar orbit exists for all values of $h$ and its amplitude depends on the energy level as it is shown in Figure \ref{verticalmotion}. It is not difficult to see that something similar happens when $z<0$ so we obtain another \textit{polar orbit} for this case. We will call \textit{northern} and \textit{southern} \textit{polar orbits} to \textit{polar orbits} with $z>0$ and $z<0$, respectively. For both kinds of polar orbits, the period is uniformly bounded by $\pi$ for any value of $h$.
\newline

The numerical explorations of the H4BP, given in Section \ref{sec:numericalresults}, suggest that the \textit{polar orbits} bifurcate with several families of periodic orbits.

\section{Automatic differentiation for the Hill four-body problem: the equivalent polynomial system}\label{sec:autoDiff}
\noindent
An efficient way to implement high-order Taylor methods for the case where the vector field is not a polynomial is by means of \textit{automatic differentiation}, a process in which we add auxiliary variables and differential equations in order to obtain a polynomial vector field  whose solutions correspond to the original non-polynomial vector field. We must mention that  the idea of replacing a given nonlinear system of differential equation with a polynomial vector field is not new, many examples  appear in the literature as early as the works \cite{Knuth} and \cite{Jorba}. The main advantage of using this method is that we can obtain high precision numerical computations which provides very good approximations for real periodic orbits and as a consequence we can use this data to perform \textit{a posteriori} analysis in order to develop rigorous proofs of the existence of some orbits of particular interest. The interested reader in this technique for proving existence of periodic orbits, and other invariant objects, may consult the works \cite{BurgosLessardJames}, \cite{Lessard}, \cite{Mireles} and references therein. The reader who is not familiar with the implementation of automatic differentiation in the Taylor method may consult \cite{BurgosRodriguez} to find an elementary introduction to this subject.
\newline

The computation of periodic orbits and their characteristic multipliers requires the numerical integration of the equations of motion and variational equations. For this purpose, we have implemented a variable step and order Taylor method controlled by tolerances $2.22045\times10^{-14}$ and machine epsilon for relative and absolute errors, respectively. It is worth mentioning that the developed Taylor-based numerical integrator is just controlled by the feasible tolerances given by the float point arithmetic at hand. If extended precision arithmetic is used then more stringent tolerances can be set into the numerical integrator, therefore the order of the method will be adapted automatically in order to satisfy such tolerances. However, in the classical double-precision arithmetic other numerical methods can be implemented to perform the computations with the tolerances used in this work, for instance, a Runge-Kutta method with step size control of orders $7$ and $8$, or a numerical integrator \textit{ode113} (based on Adams-Bashfort-Moulton methods) of \textit{Matlab} can work as well, nevertheless, sometimes these methods are time-consuming and are restricted by a maximum order so they would be insufficient in order to extend the computations to multiple precision. In the following lines we offer some details of the implementation of the  variable step and order Taylor method with automatic differentiation.
\newline

Consider the equations \eqref{finalequations}, and define the automatic
differentiation variable
\[
r(x,y,z) := \sqrt{x^2 + y^2+ z^2},
\]
so that
\[
\Omega_x = \lambda_2 x - \frac{x}{r(x,y,z)^3},\quad \Omega_y= \lambda_1 y - \frac{y}{r(x,y,z)^3},\quad \Omega_z= - z  - \frac{z}{r(x,y,z)^3}.
\]

We also introduce the variables
\[
x_1 := x, \quad
x_2 := y , \quad
x_3 := z , \quad
x_4 := \dot{x}, \quad
x_5 := \dot{y}, \quad
x_6 := \dot{z}, \quad
\]
thus the equations of motion, associated to the co-rotating frame (\ref{finalequations}), or in a equivalent way (\ref{standard}), becomes
\begin{equation}\label{eq:SH4BP}
\left(
\begin{array}{c}
\dot{x}_1 \\
\dot{x}_2 \\
\dot{x}_3 \\
\dot{x}_4 \\
\dot{x}_5 \\
\dot{x}_6
\end{array}
\right) =
\left(
\begin{array}{c}
x_4 \\
x_5 \\
x_6 \\
2 x_5 + \Omega_{x_1} \\
-2 x_4 + \Omega_{x_2} \\
 \Omega_{x_3}
\end{array}
\right).
\end{equation}
Now we append the variable
\[
x_7 = \frac{1}{r}.
\]
The meaning of the previous equation is trying to rewrite the term $1/r(x,y,z)$ in the equations. Note that
\begin{equation*}
\dot{x}_7= \frac{-1}{r^2} \dot{r} = - x_7^3(x_1x_4+x_2x_5+x_3x_6).
\end{equation*}

Therefore, in terms of the new variables, we can write
\begin{align*}
\Omega_{x_1} &=
\lambda_2x_1-x_1x_7^3,
\\
\Omega_{x_2} &=
 \lambda_1x_2-x_2x_7^3,
 \\
\Omega_{x_3} &= -x_3-x_3x_7^3.
\end{align*}

Finally, in terms of state vector ${\bf x} = (x_1,x_2,x_3,x_4,x_5,x_6,x_7)^T$, the equations of motion (\ref{eq:SH4BP}) define a polynomial vector field, we mean
\begin{equation*}
\label{augmentedvectorfield}
\dot{\textbf{x}} = g(\textbf{x})
\end{equation*}
where
\begin{equation*}
g(\textbf{x}) =
\begin{pmatrix}
x_4 \\
x_5 \\
x_6 \\
2x_5 + \lambda_2x_1-x_1x_7^3 \\
 -2 x_4 + \lambda_1x_2-x_2x_7^3\\
- x_3 - x_3x_7^3\\
 - x_7^3(x_1x_4+x_2x_5+x_3x_6)
\end{pmatrix}.
\end{equation*}

A natural question that arises from this transformation is if the periodic orbits of both systems are equivalent in the sense that a periodic orbit of the augmented system corresponds to a periodic orbit of the original system. In \cite{Kepley} we can find a study which gives a positive answer to the  former question in the phase space with restriction $x_7=1/r$, so we can work with the augmented system in order to compute periodic orbits for the original one.
\newline

Our scheme of numerical continuation is based on a classic predictor-corrector method that exploits the pseudo-arclength continuation algorithm (e.g. see \cite{Keller}). The use of this scheme requires to append the variational equations to the seven-dimensional augmented vector field. In other words, we need to consider $6$ systems of differential equations of the form

\begin{equation*}
\label{variationalequations}
\dot{\textbf{c}}_i=DF\textbf{c}_i,
\end{equation*}
where $DF$ is the Jacobian matrix of the original vector field (\ref{standard}) expressed with new variables and $\textbf{c}_i=(c_i^1,c_i^2,c_i^3,c_i^4,c_i^5,c_i^6)^T\in\mathbb{R}^6$ is a six-dimensional vector for $i=1,2,...,6$. Of course, we can form a fundamental matrix of solutions, denoted as $\Psi(t)$, by considering $\Psi(t)=col(c_1, c_2,...,c_6)$ to obtain information about the linear stability of the orbits (it will be discussed in Section \ref{sec:numericalresults}). In summary, we have developed a variable step and order integrator for a set of $43$ differential equations. The codes for the numerical integrator of the vector field of the H4BP can be found in \cite{Burgossoftware}.

\section{Numerical explorations}\label{sec:numericalresults}

In Section \ref{classification} we classified the different types of symmetric periodic orbits that appeared in our study, we mean symmetric periodic orbits of types \textbf{I-IV}. In order to compute these orbits we solve numerically specific boundary value problems (BVPs) with the use of certain {\it seeds}. The BVP depends on the type of symmetry of the periodic orbit. According to the description given in Section \ref{classification}, if we use
$$
{\bf u}(t) = (x(t),y(t),z(t),\dot{x}(t),\dot{y}(t),\dot{z}(t))
$$
as vector solution, and denote the characteristic time by $T_0$, the symmetric periodic orbits of types \textbf{I-IV} are solutions of (\ref{standard}) restricted to the following boundary conditions:

\begin{itemize}

\item type \textbf{I}
$$
\begin{array}{l}
y(0) = 0, \enskip z(0) = 0, \enskip\dot{x}(0) = 0, \\
y(T_0) = 0, \enskip z(T_0) = 0, \enskip \dot{x}(T_0) = 0, \\
\end{array}
 $$

\item type \textbf{II}
$$
\begin{array}{l}
y(0) = 0, \enskip \dot{x}(0) = 0, \enskip \dot{z}(0) = 0, \\
y(T_0) = 0, \enskip \dot{x}(T_0) = 0, \enskip \dot{z}(T_0) = 0,
\end{array}
$$

\item type \textbf{III}
$$
\begin{array}{l}
y(0) = 0, \enskip z(0) = 0, \enskip \dot{x}(0) = 0, \\
y(T_0) = 0, \enskip \dot{x}(T_0) = 0, \enskip \dot{z}(T_0) = 0.
\end{array}
$$

\item type \textbf{IV}
$$
\begin{array}{l}
y(0) = 0, \enskip z(0) = 0, \enskip \dot{x}(0) = 0, \\
x(T_0) = 0, \enskip \dot{y}(T_0) = 0, \enskip \dot{z}(T_0) = 0.
\end{array}
$$

\end{itemize}

The solution of each BPV is given by the initial condition ${\bf u}(0) = {\bf u}_0$ and characteristic time $T_0$ (for instance, for the family of type \textbf{I} we have that ${\bf u}_0 = (x_0,0,0,0,\dot{y}_0,\dot{z}_0)$). In a generic way, for a fixed value of $\mu$, the solution of each BVP depends on one parameter.  For a fixed value of $\mu$, by the cylinder theorem \cite{Meyer}, each symmetric periodic family conforms a curve in the space of initial conditions (in this case $x\dot{y}\dot{z}-$space).  On the other hand, if we also vary the value of the mass parameter $\mu$, the initial conditions of corresponding families give rise to a surface in the space of initial conditions.
\newline

The required {\it seeds} for solving the BVPs are obtained considering bifurcations among the families. For instance, in the case of three dimensional vertical orbits, Halo and Lane orbits, the {\it seeds} are obtained as initial conditions of symmetric \textit{planar critical orbits} to generate initial conditions. Nevertheless, we will see that these three families are related in an interesting way as the linear approximation suggests.  This approach was considered in the seminal work of M. H\'enon \cite{Henon1974} where the author introduced the so-called vertical stability index (denoted as $a_{v}$) to detect planar periodic orbits that can be continued to the spatial case (this index provides information about whether a planar periodic orbit belongs at the same time to a family of spatial periodic orbits).  We remark that planar periodic orbits and their stability indexes are studied in \cite{Henon} for the classical Hill's problem; it was shown that five families, named $a1v$, $a2v$, $a3v$, $g1v$ and $g'2v$, possess critical orbits. In a later work, M. Michalodimitrakis continued these orbits to the spatial case \cite{Michalodimitrakis}.
\newline

In the following we show the numerical study of symmetric periodic orbits. The corresponding initial conditions were obtained by means of solving numerically different boundary value problems (see beginning of this Section) with the techniques of Section \ref{sec:autoDiff}. We remark that with the application of an appropriate symmetry, or reversing symmetry to the computed periodic orbits (to be shown), then other periodic orbits can be obtained (see Remark \ref{newperiodic}). For instance, suppose a solution ${\bf u}(t)$ of type ${\bf II}$, we mean, it passes through fixed points of reversing symmetry $R_3$ at times $t=0$ and $t=T_0$, so is $2T_0$ periodic, then $R_1 {\bf u}(0)$ defines the initial condition of a $2T_0$ periodic orbit different from the original one.
\newline

The numerical results concerning symmetric periodic orbits around the tertiary are grouped in Section \ref{group2} where we distinguish two kind of orbits: ones coming from the classical Hill's problem and other new which do not. In a similar way, the results of the study of symmetric periodic orbits, related with the dynamics around $L_1$, and therefore around $L_2$, are shown in Section \ref{group1}. In order to get those orbits, we have developed a numerical continuation by varying $C$ for several values of $\mu$.  We should mention that we have focused on symmetric periodic orbits since these objects have shown to be a good starting point for further explorations of other families of periodic orbits with some astronomical interest.
\newline

\subsection{Periodic orbits around the tertiary}\label{group2}
In this Section we explore some families of periodic orbits around the tertiary body for a couple of values of $\mu$ with some astronomical interest. We have considered $\mu=0.00095$, related to the mass of Jupiter, and $\mu=0.00547$, that corresponds to the system $HD28185$; both values of $\mu$ satisfy Routh's criterion for equilateral configuration stability.

\subsubsection{Periodic Orbits emanating from planar direct orbits}
The families $g1v$ and $g'2v$ emanate from planar direct orbits $g$ as it was reported in \cite{Burgos} and possess symmetry of type \textbf{III}. We have found that family $g1v$ for $\mu=0.00095$ starts at a bifurcation with the planar family $g$ and all of its orbits are quite unstable until a turning point which is reached at $C=2.483432$. After this point, the orbits remain unstable until a double critical orbit appears, that is $\vert s_1\vert=\vert s_2\vert=2$, which happens at $C=4.12327$. As we keep increasing the value of $C$ a stable zone appears at $C=4.133271$ until the family ends with a bifurcation with a planar orbit at $C=4.284141$. The initial conditions for the critical orbits are shown in Table \ref{Criticalg1vcase1}. The behaviour of this family is qualitatively similar for the second case $\mu=0.00547$, nevertheless, there are more critical orbits in the family and the stable zone is reduced to the interval $C\in[4.132271,4.279181]$. The initial conditions for the critical orbits are shown in Table \ref{Criticalg1vcase2}. The shape of the orbits for this case can be seen in Figure \ref{g1vfigures}. We emphasize that the family $g1v$ is reported as unstable in the classical H4BP, therefore the main effect of a fourth perturbing body is again the appearance of stable motion.
\newline

\begin{figure}
  \centering
\begin{tabular}{cc}
  \includegraphics[width=2.0in,height=2.0in]{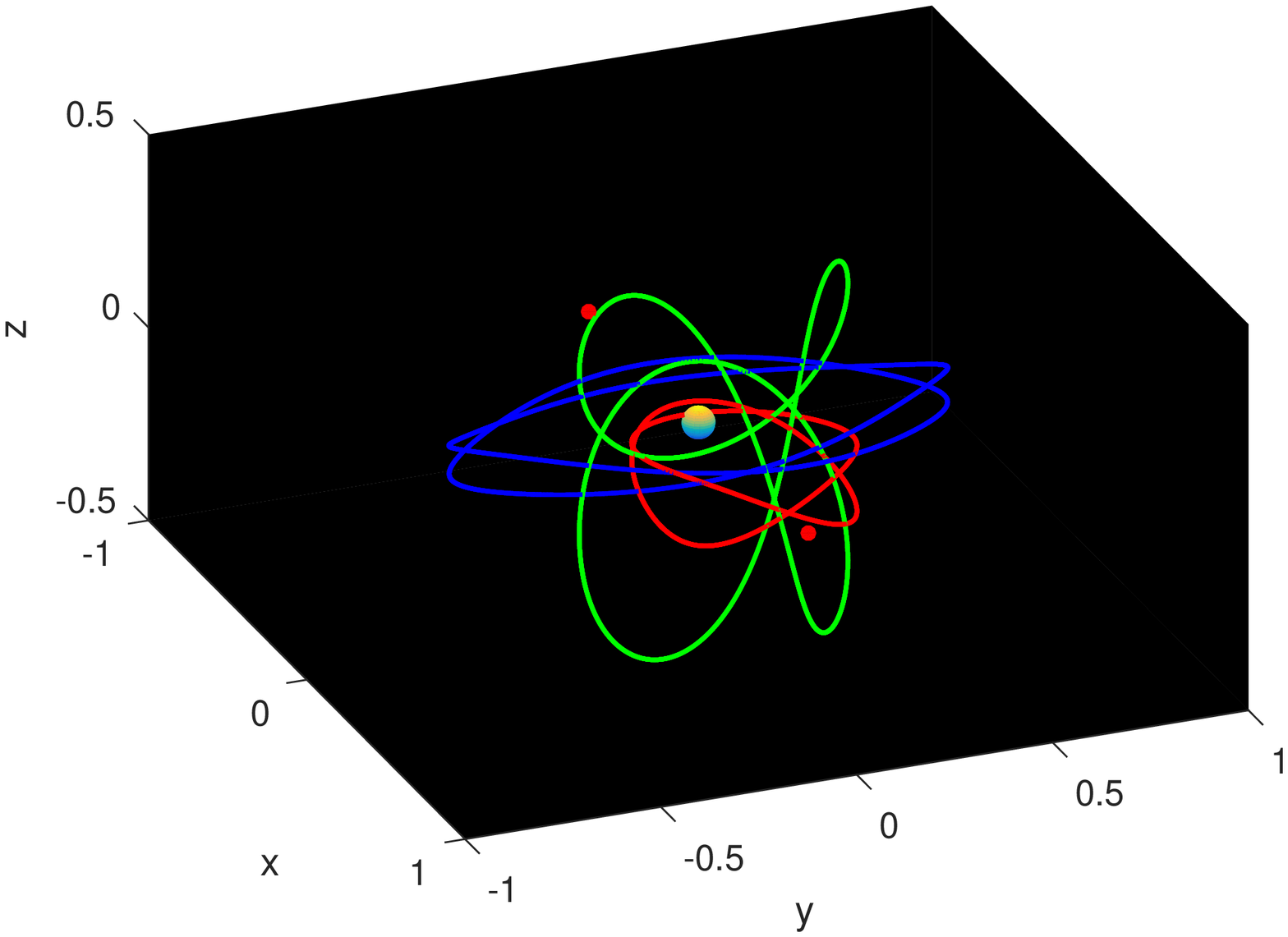}& \includegraphics[width=2.0in,height=2.0in]{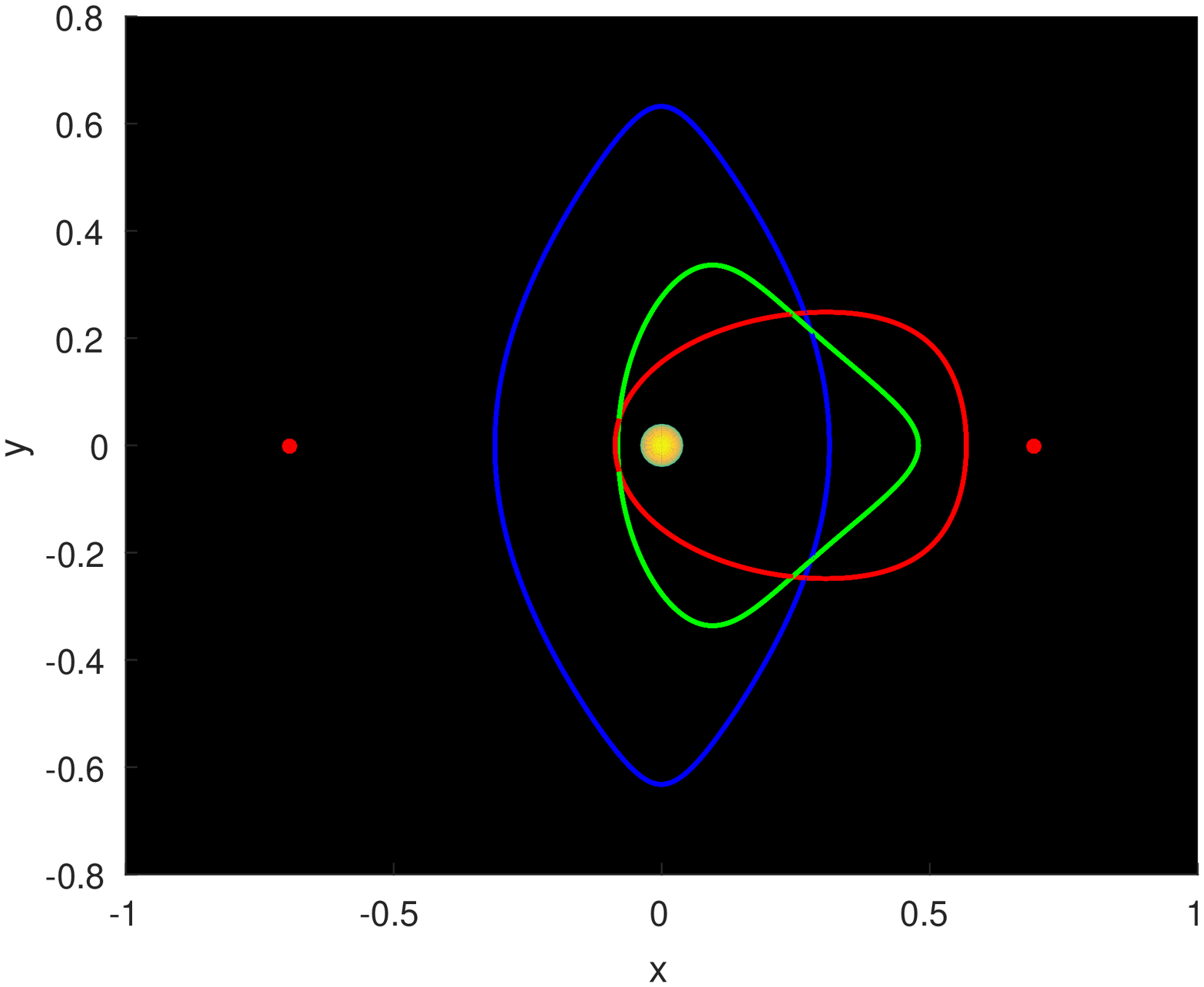}
\end{tabular}
 \caption{Periodic orbits belonging to family $g1v$ for $\mu=0.00547$ with different views. The blue orbit emanates from the planar critical orbit.} Red dots correspond to equilibrium points.\label{g1vfigures}
\end{figure}

\begin{table}[ht]
\caption{Initial conditions for critical orbits, case $\mu=0.00095$, family $g1v$.} 
\centering 
\resizebox{14cm}{!} {
\begin{tabular}{c c c c c c c}
\hline\hline 
$s_1$ & $s_2$ & $x_{0}$ & $\dot{z}_{0}$ & $\dot{y}_{0}$ & $T/4$ & $C$\\ [0.5ex] 
\hline 
2 & 23317  & 0.311044457819049  &  0.0 &   1.913203128661090  &    1.472442101072728 & 3.059642  \\
2 & 183  & 0.401582652445777 &  1.645050043578752  &   0.523578470955765 &    1.459054927378793  & 2.483432   \\
2 & 2 &  0.560062992514573 &  0.447754036062618 &   0.433143806642077 &    1.070414684724931  & 4.12327 \\
1.673 & 2 &   0.570845794094986 &  0.0 &   0.443069460233186&      1.051596721798848 & 4.284141   \\
\hline 
\end{tabular}
}
\label{Criticalg1vcase1} 
\end{table}

\begin{table}[ht]
\caption{Initial conditions for critical orbits, case $\mu=0.00547$, family $g1v$.} 
\centering 
\resizebox{14cm}{!} {
\begin{tabular}{c c c c c c c }
\hline\hline 
$s_1$ & $s_2$ & $x_{0}$ & $\dot{z}_{0}$ & $\dot{y}_{0}$ & $T/4$ & $C$\\ [0.5ex] 
\hline 
3.07 & 2  & 0.312348437026665 &  0.0 &   1.902484026495065  &    1.468067850531849 &3.075145 \\
2 &  21722  & 0.312368430078792 &  0.041340285729567  &   1.901990415167071 &    1.468078799429547   & 3.074942 \\
2 & 21629 &  0.312485459713801 &  0.108285970081361 &   1.899097496123795 &    1.468146125495763 & 3.073742 \\
2 &  21603 &   0.312514707890854 &  0.119270884090533 &   1.898374889382144&     1.468163017617618  &  3.073442  \\
 2 &   161 &   0.402791728939239 &  1.637749036679892 &   0.522580949085163 &      1.455590819188837  &2.494762  \\
  2 &   2 &   0.560171301047021 &  0.446609759343781 &   0.434884051267122 &      1.071939946966573  &4.119271 \\
     1.695 &   2 &   0.570959729608385  &  0.0&   0.444602520352016 &     1.053297958011874  &4.279181\\
\hline 
\end{tabular}
}
\label{Criticalg1vcase2} 
\end{table}

The family $g'2v$ for $\mu=0.00095$ starts from a bifurcation with a planar orbit at $C=4.279530$. As the value of $C$ is decreased monotonically, the orbits are composed of increasing vertical oscillations combined with a circulation around the $z-$axis as it is shown in Figure \ref{g2vfigures}. In this case there are not stable periodic orbits, but a pair the critical orbits appear at $C=0.035753$ and $C=2.159953$.  The behaviour is qualitatively similar when $\mu=0.00547$ in the sense that the family rises from a planar critical orbit and then it tends to periodic orbits with increasing vertical oscillations. Nevertheless, there is a zone of stability for both cases. The evolution of the stability coefficients are shown in Figure \ref{g1vandg2vstabilityfigures} and the initial conditions of the critical orbits are given in Tables \ref{Criticalg2vcase1} and \ref{Criticalg2vcase2}.

\begin{figure}
  \centering
\begin{tabular}{cc}
  \includegraphics[width=2.0in,height=2.0in]{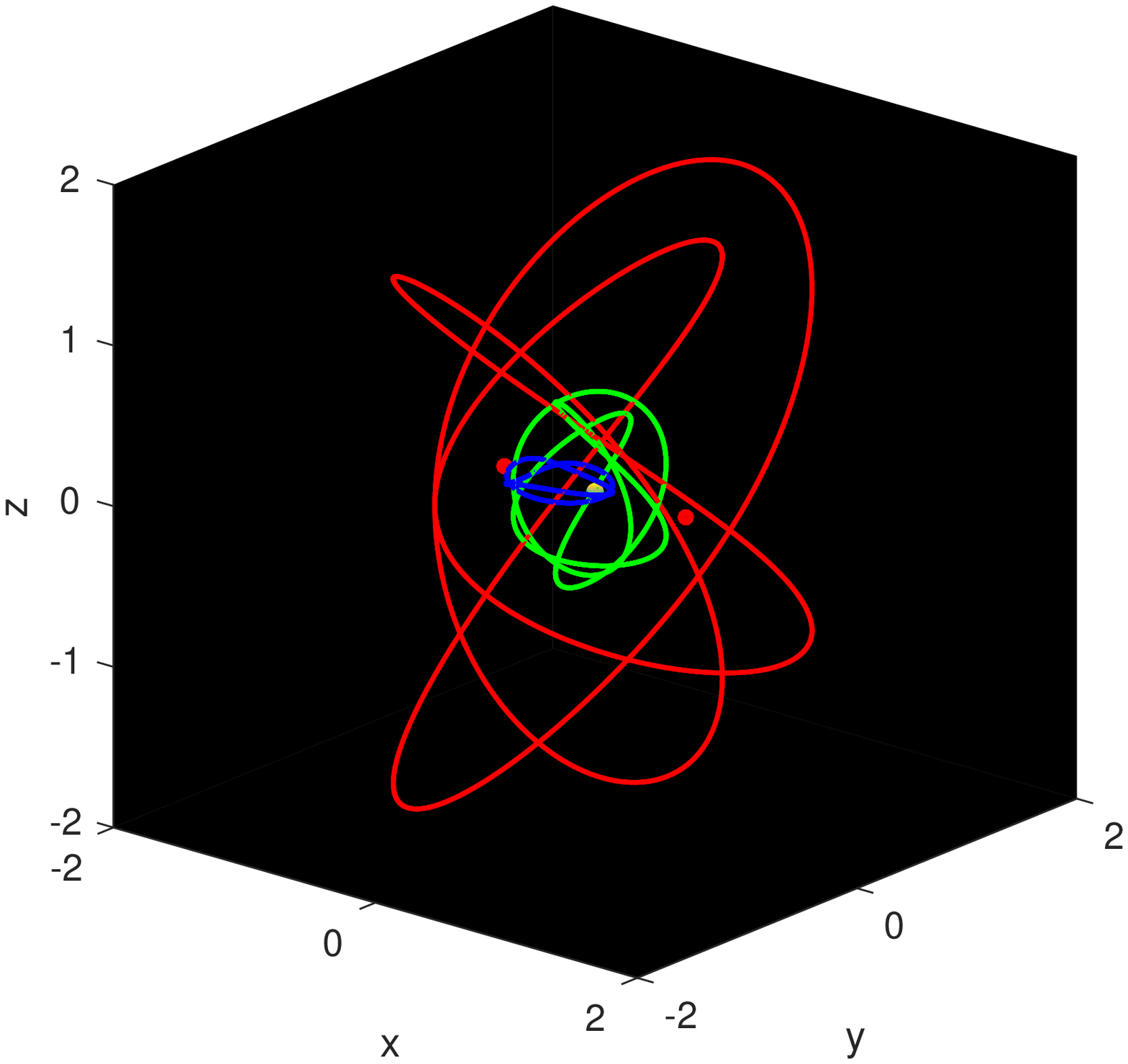}& \includegraphics[width=2.0in,height=2.0in]{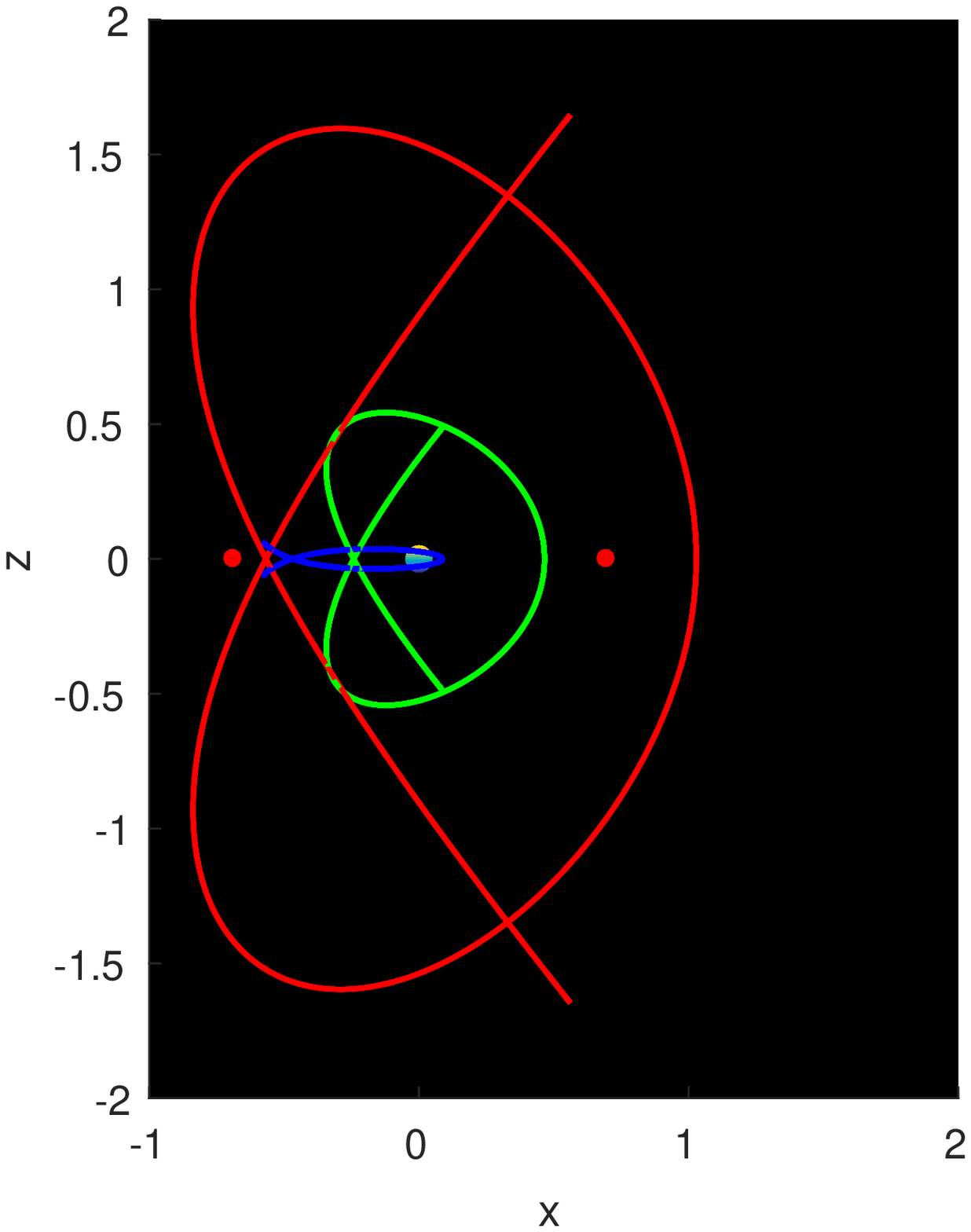}\\
  \includegraphics[width=2.0in,height=2.0in]{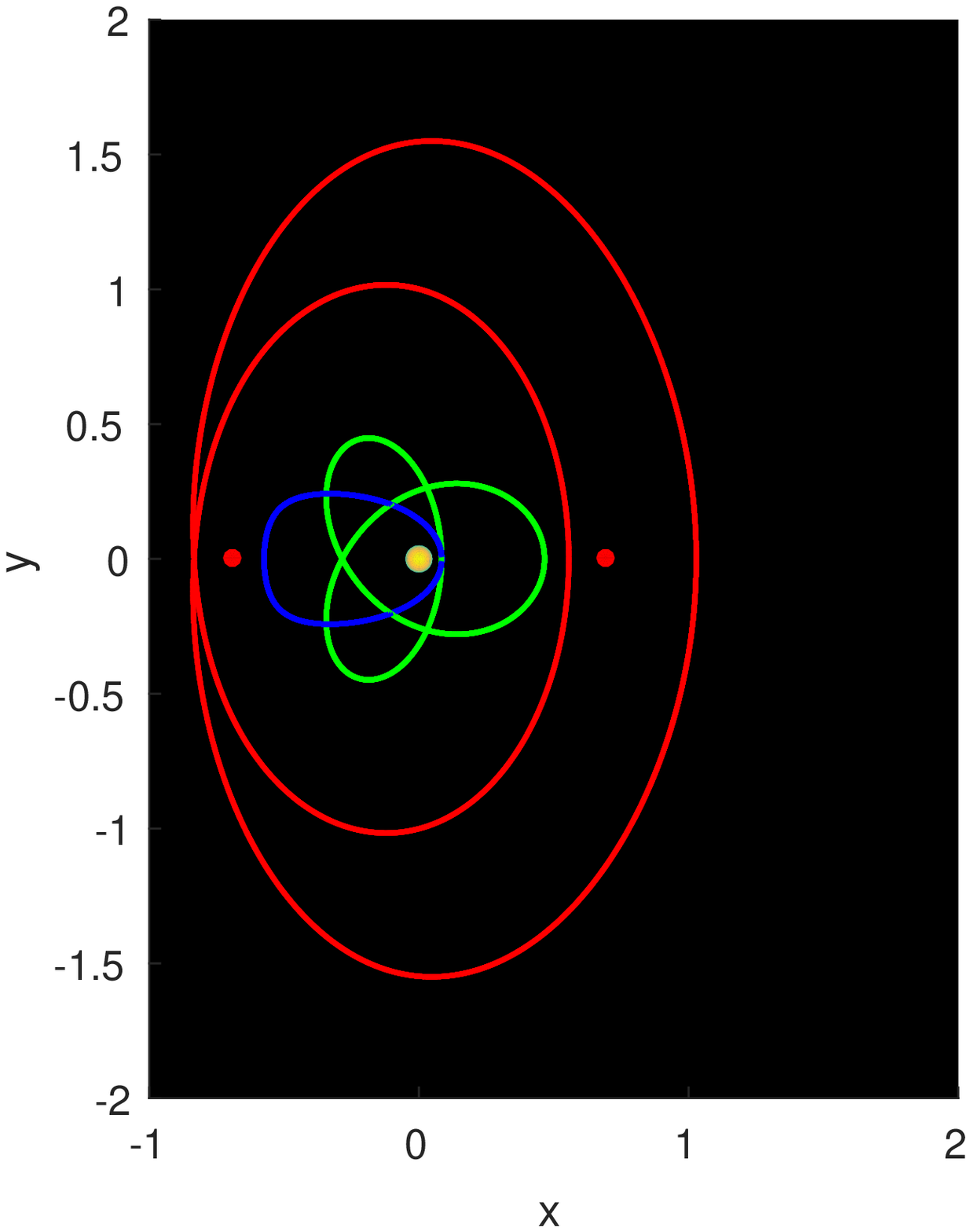}& \includegraphics[width=2.0in,height=2.0in]{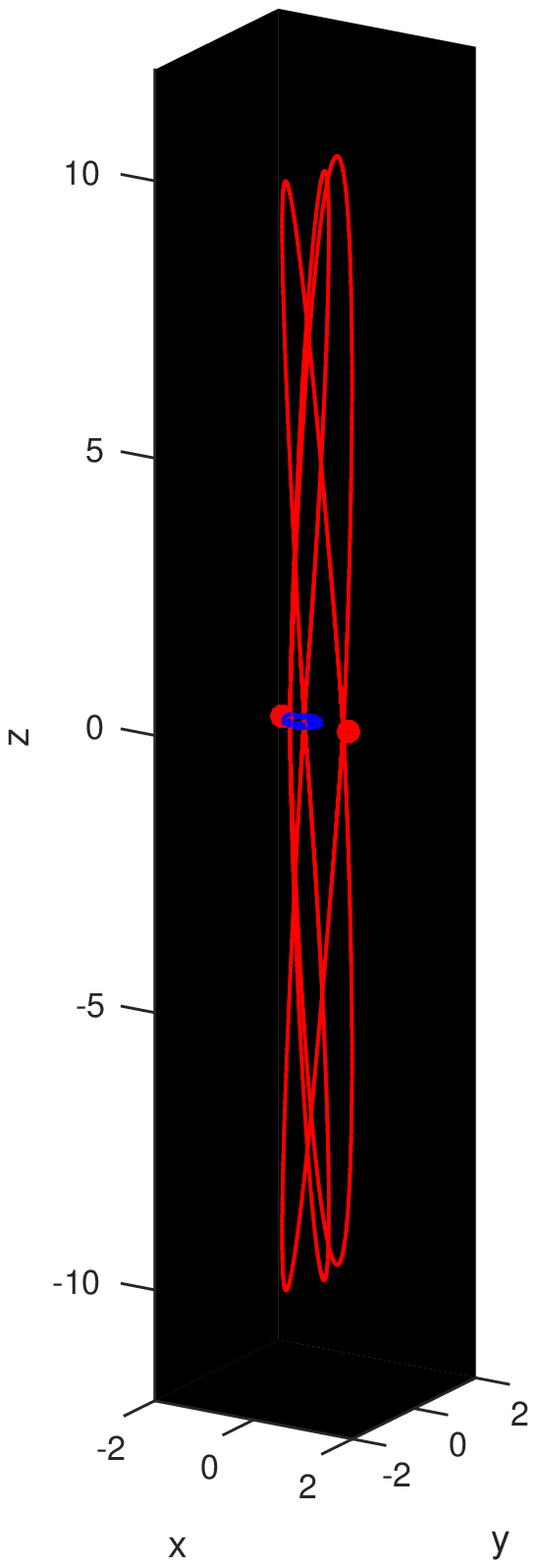}
\end{tabular}
 \caption{ Periodic orbits belonging to family $g'2v$ (top and bottom-left frames) for $\mu=0.00095$ with different views and values of $C$ starting in an orbit (blue) close to a planar critical orbit of $g'2v$. Planar critical orbit (blue) and last computed orbit for $C=-10^4$ are shown in bottom-right frame. Red dots correspond to equilibrium points.}\label{g2vfigures}
\end{figure}

\begin{table}[ht]
\caption{Initial conditions for critical orbits belonging to family $g'2v$ for $\mu=0.00095$.} 
\centering 
\resizebox{14cm}{!} {
\begin{tabular}{c c c c c c c}
\hline\hline 
$s_1$ & $s_2$ & $x_{0}$ & $\dot{z}_{0}$ & $\dot{y}_{0}$ & $T/4$ & $C$\\ [0.5ex] 
\hline 
1.189 & 2  & 0.082382785420325 &  0.0 &    4.474112396195602  &    1.083381688739804 & 4.279581\\
2 & 43.04  & 0.363286428454016  &  1.904537068852055 &    -0.337242456915256  &    1.518547036633827 & 2.159953 \\
2 & 2.023  & 0.672579519372136 &  0.547559528845478 &     -1.998543113394389 &     2.510515447099607 & 0.035753\\
\hline 
\end{tabular}
}
\label{Criticalg2vcase1} 
\end{table}

\begin{table}[ht]
\caption{Initial conditions for critical orbits belonging to family $g'2v$ for $\mu=0.00547$} 
\centering 
\resizebox{14cm}{!} {
\begin{tabular}{c c c c c c c}
\hline\hline 
$s_1$ & $s_2$ & $x_{0}$ & $\dot{z}_{0}$ & $\dot{y}_{0}$ & $T/4$ & $C$\\ [0.5ex] 
\hline 
1.156 & 2  & 0.082169168742817 &  0.0 &  4.481802309476099 &    1.092034641406013 & 4.273650\\
2 & 41.72  & 0.363167874041121  &  1.904782813283298  &    -0.329994220045412  &    1.517987521371478 & 2.164053\\
2 & 2.025  & 0.669543395827101 &  0.560226165381085 &     -1.989110192397709 &     2.492778188972285 & 0.056053\\
\hline 
\end{tabular}
}
\label{Criticalg2vcase2} 
\end{table}

\begin{figure}
  \centering
\begin{tabular}{cc}
  \includegraphics[width=2.0in,height=2.0in]{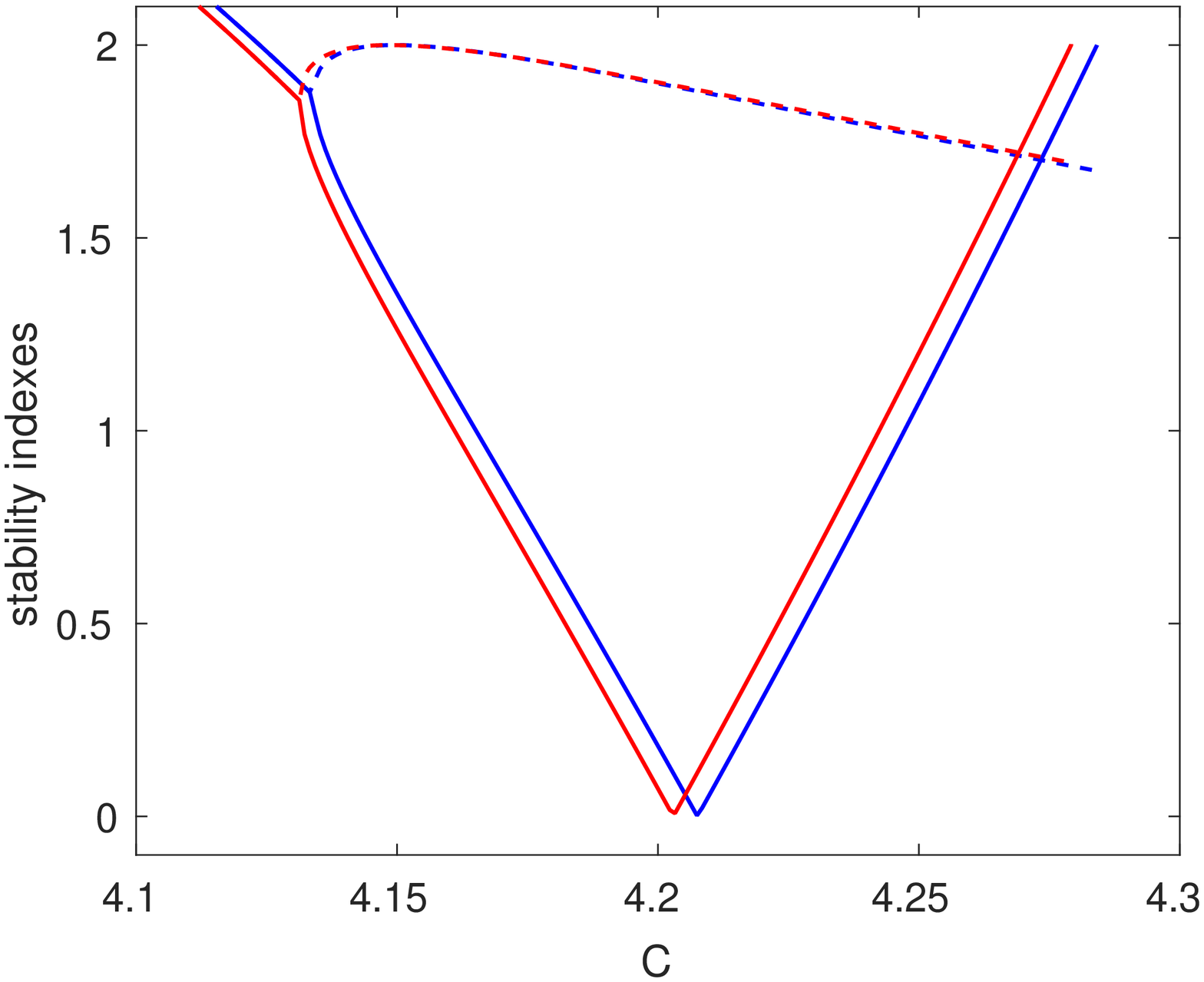}& \includegraphics[width=2.0in,height=2.0in]{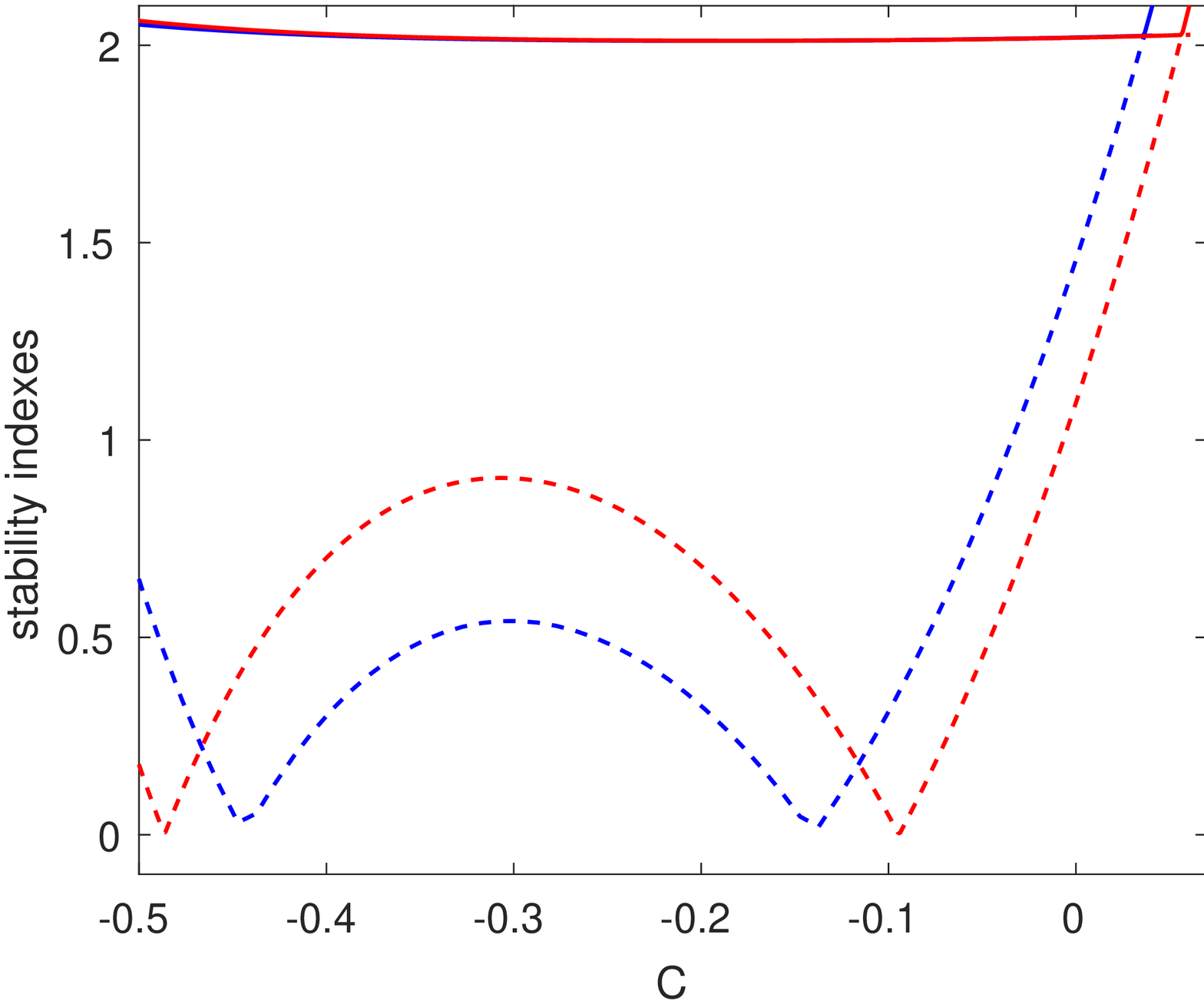}\\
  \includegraphics[width=2.0in,height=2.0in]{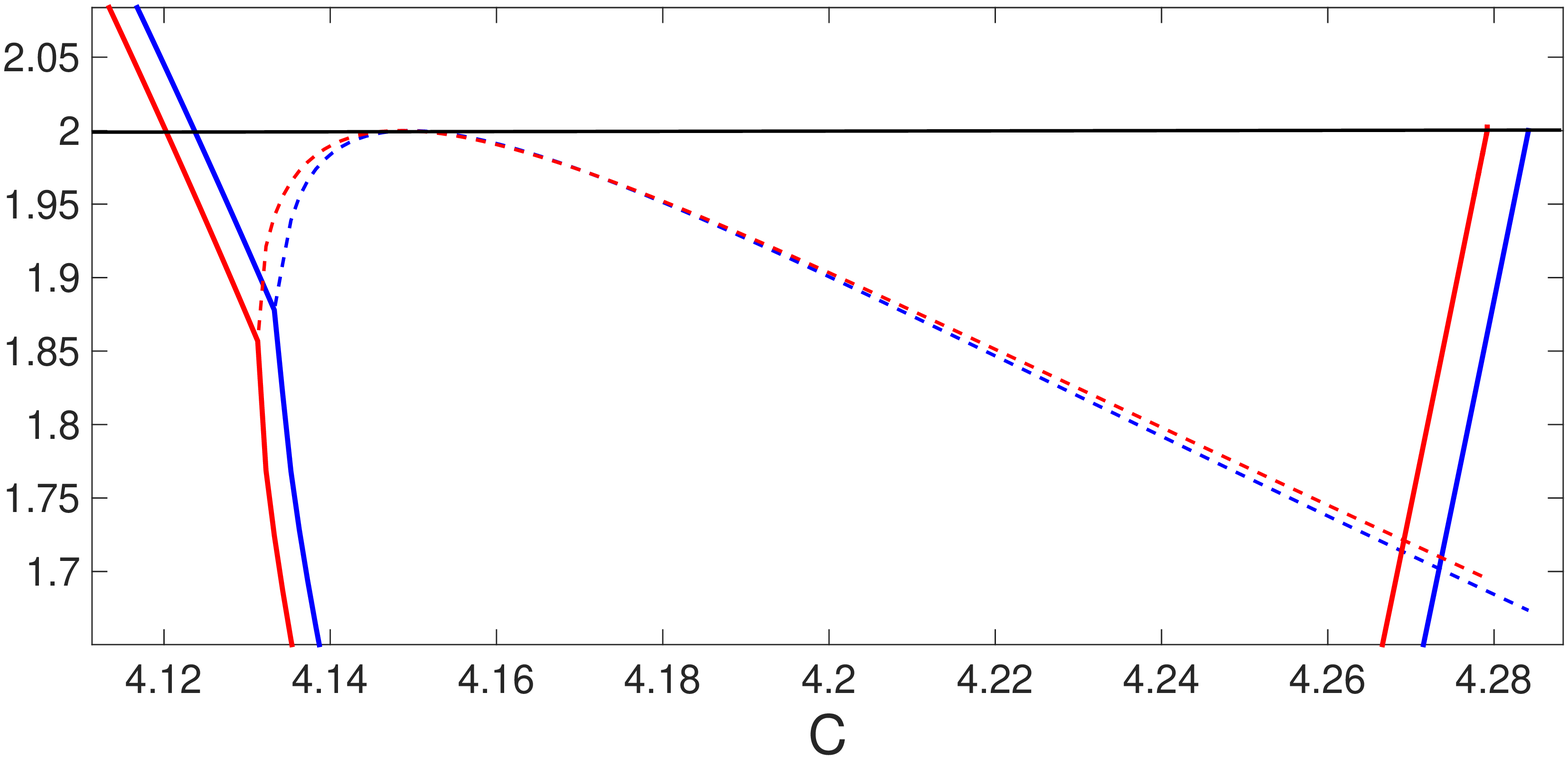}& \includegraphics[width=2.0in,height=2.0in]{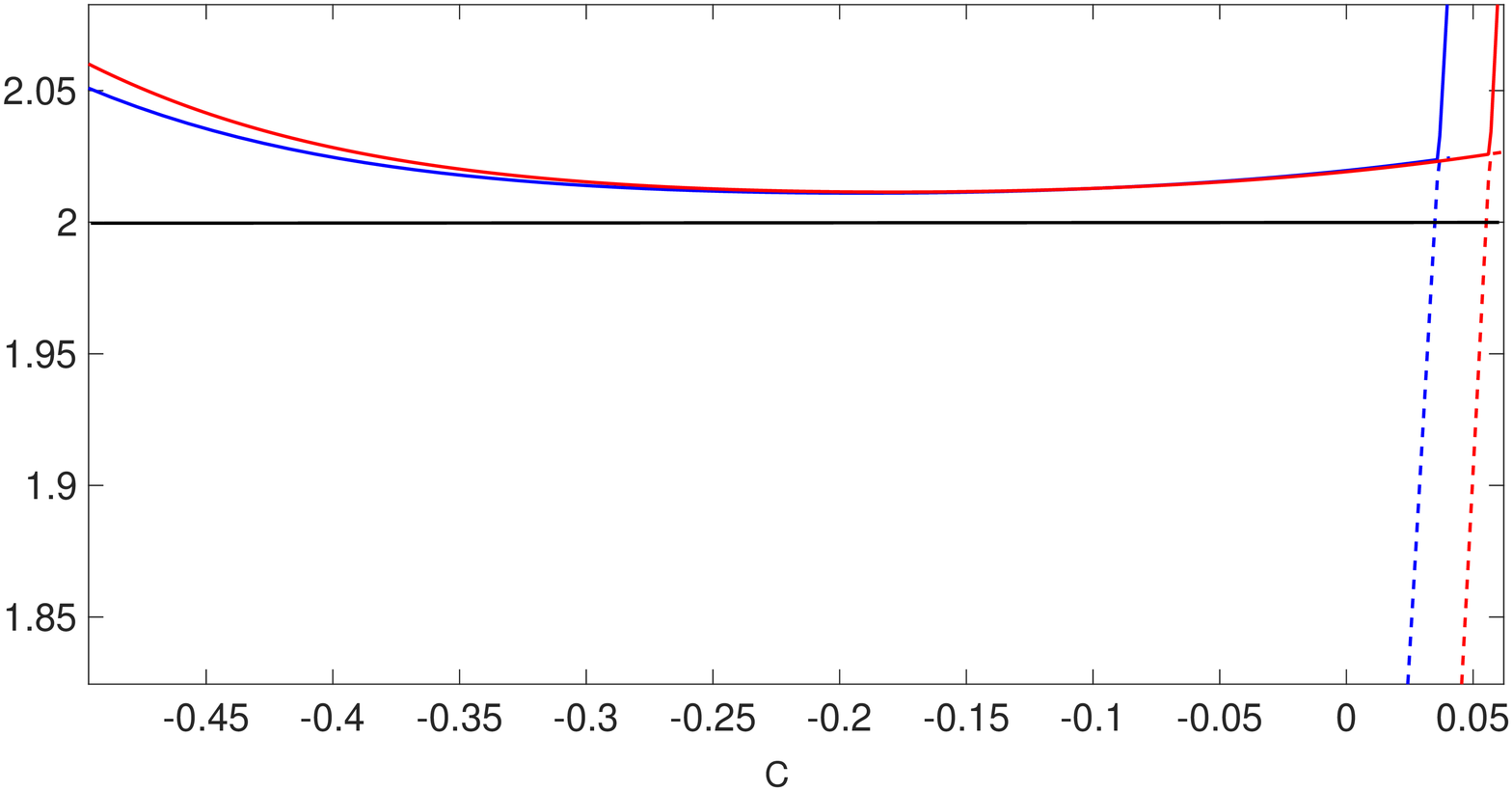}
\end{tabular}
 \caption{Top frames. Stability indexes for families $g1v$ (left) and $g'2v$ (right), for $\mu=0.0095$ (blue) and $\mu=0.00547$ (red). Bottom frames. Magnification of the previous figures. Slashed and solid curves represent indexes $s_1$ and $s_2$, respectively.}
\label{g1vandg2vstabilityfigures}
\end{figure}

\subsubsection{Periodic orbits emanating from planar retrograde orbits}

Let us start with a summary of results related to retrograde periodic orbits of the H3BP that will be necessary for exposing our results. According to pioneer numerical explorations \cite{Henon1974}, \cite{Michalodimitrakis}, there are not vertical critical orbits in the family of retrograde planar periodic orbits in the classical Hill's problem ($\mu=0$ in our case), so called $f$ family. Further explorations of three dimensional periodic orbits confirm that there are not periodic orbits emanating from the retrograde planar family $f$. For instance, in \cite{Russell} we can find a full exploration of spatial periodic orbits where the author uses the restricted three-body problem with mass ratios so small that his results are similar to those who appear in classical Hill's problem; such explorations support that H3BP does not possess periodic orbits as the ones shown in this Section.  The reader can consult the book \cite{Scheereslibro} to find a good compilation of work related to Hill's problem and its applications for orbital motion.
\newline

We underline that $f$ is not the unique family of planar retrograde motion in the classical Hill's problem, there exists other family which is named $l$, an analogous to the family of the restricted three--body problem. The family $l$ contains retrograde periodic orbits more distant than the members of the $f$ family. If we explore the members of this family for the H3BP we can find that there are not vertical critical families neither in this case.  Therefore, we can conclude that it is not possible to generate spatial orbits through planar families of retrograde periodic orbits in the H3BP.
\newline

Nevertheless, the situation turns out interesting when we go beyond the classical Hill's problem by increasing the mass parameter for $\mu>0$. Although $f$ remains stable, see \cite{Burgos}, the family $l$ now contains vertical critical orbits for $\mu=0.0095$ that corresponds to the Sun-Jupiter case.  The results in \cite{Burgos} show that as we increase the value of $\mu$ more critical orbits appear, even now in the family $f$ for $\mu=0.03$ and the approximated values $C\approx-1.84$, $C\approx-2.07$. As a consequence of this new feature, we can follow the bifurcating planar orbits from the family $f$ to the three dimensional case. Our numerical explorations produced two families of spatial orbits that we have named $f1v$ and $f2v$, whose members have symmetries \textbf{II} and \textbf{IV}, respectively. The evolution of these families and their relation with the so-called family $l$ is given in the following.
\newline

The geometry of the orbits can be appreciated in Figures \ref{f1vfigures} and \ref{f2vfigures}. As the reader can observe, at the beginning of each family the orbits are near Keplerian ones and as the value of $C$ decreases monotonically the orbits become more elongated in such a way that they tend asymptotically to an orbit on the $z-$axis which is composed by the union of {\it southern} and {\it northern polar orbits}. Actually, this behaviour holds for each value of $\mu \in (0,0.5]$. There exist other remarkable properties of the families related to the stability. For instance, for the family $f1v$ we have wide ranges, in the Jacobi constant, for which the orbits are stable. In Figure \ref{f1vandf2vstabilityfigures} we show the stability indexes for the cases $\mu=0.0095$, $\mu=0.00547$ and $\mu=0.03$. By means of increasing $\mu$, we obtained that the orbits tend to become more unstable, in such a way that for $\mu=0.5$ all of them are unstable. Something similar happens to $f2v$. However, we do not have stability since the stability index $s_2$ is slightly bigger than $2$ for $\mu=0.0095$, $\mu=0.00547$ and $\mu=0.03$.
\newline

The evolution of the stability indexes are shown in Figure \ref{f1vandf2vstabilityfigures}. We should mention that we are plotting the absolute value of the indexes versus the value of $C$, this is why some peaks appear, for instance, in the curves for family $f1v$ when $\mu=0.03$ the index $s_2$ changes its sign since for $C\approx-43.8999$, $s_2\approx-0.000130$ and for $C\approx-43.9099$, $s_2\approx 0.000038$. Something similar happens for the family $f2v$ when $\mu=0.03$, for $C\approx-15.099$, $s_1\approx-0.004$ and for $C\approx-15.109$, $s_1\approx-0.007$.  The pitchfork structure observed in the left side in Figure \ref{f1vandf2vstabilityfigures} corresponds to a change of stability index, from real to complex  for $C\approx-58.0999$.
\newline

\begin{figure}
  \centering
\begin{tabular}{cc}
  \includegraphics[width=2.0in,height=2.0in]{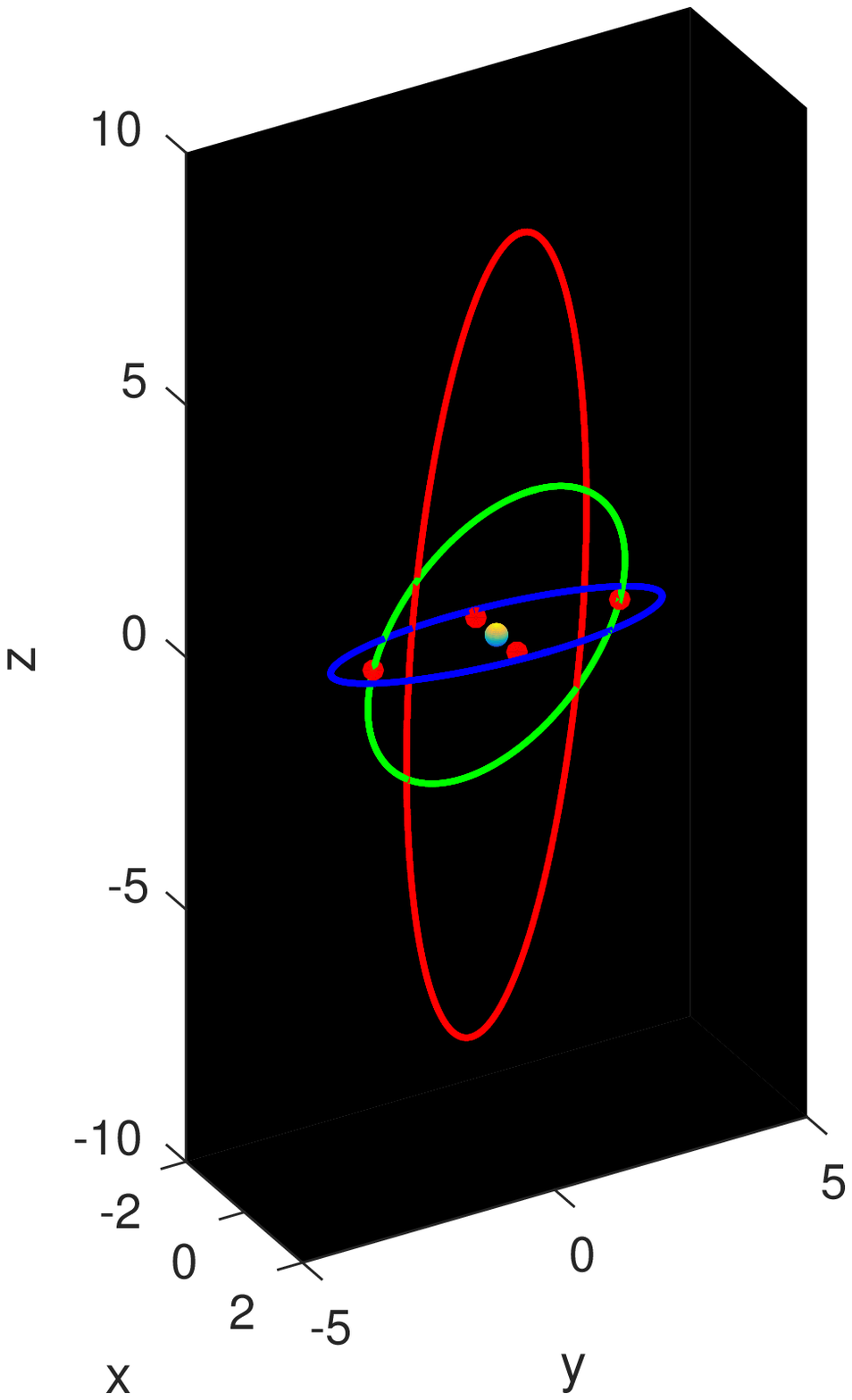}& \includegraphics[width=2.0in,height=2.0in]{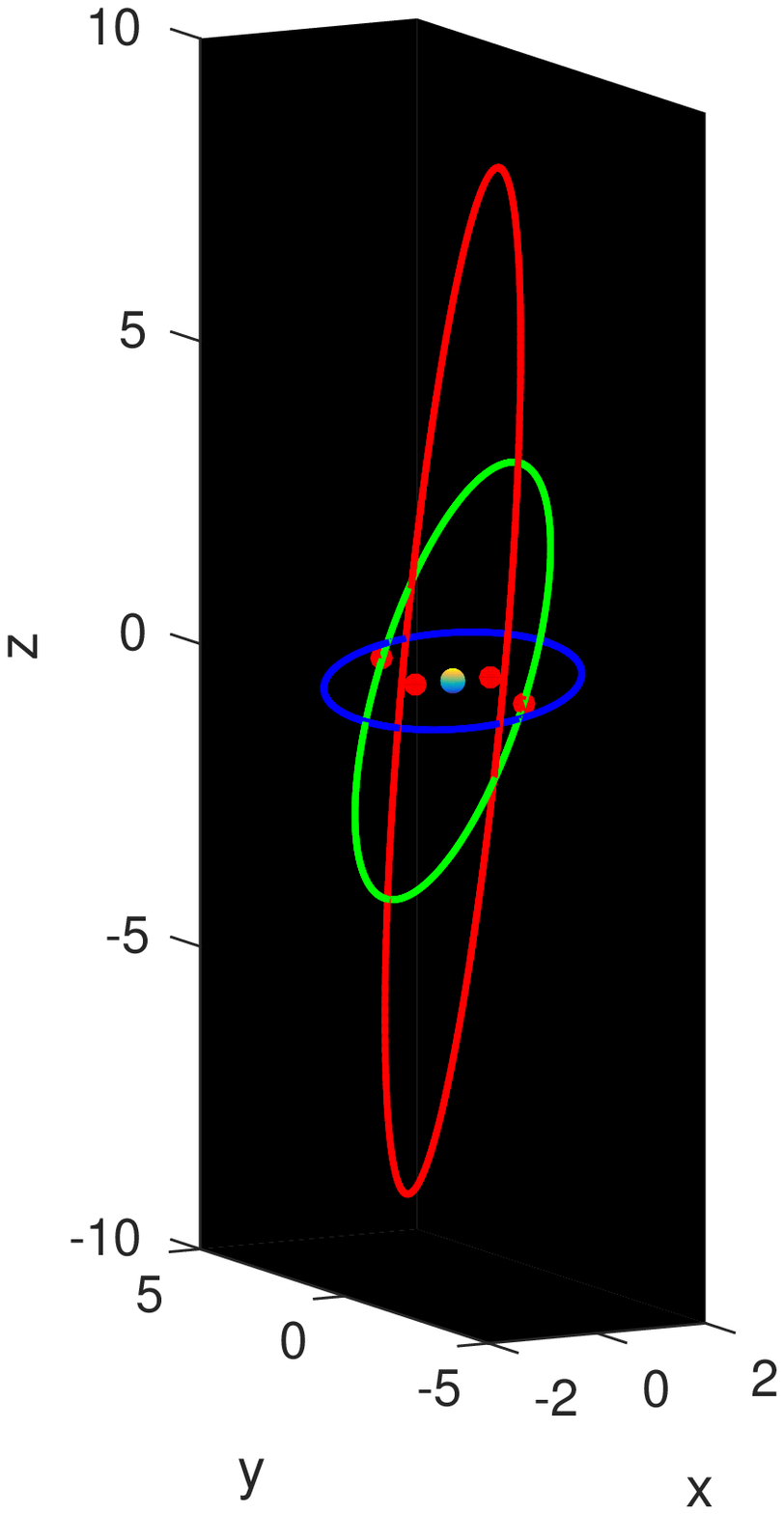}\\
  \includegraphics[width=2.0in,height=2.0in]{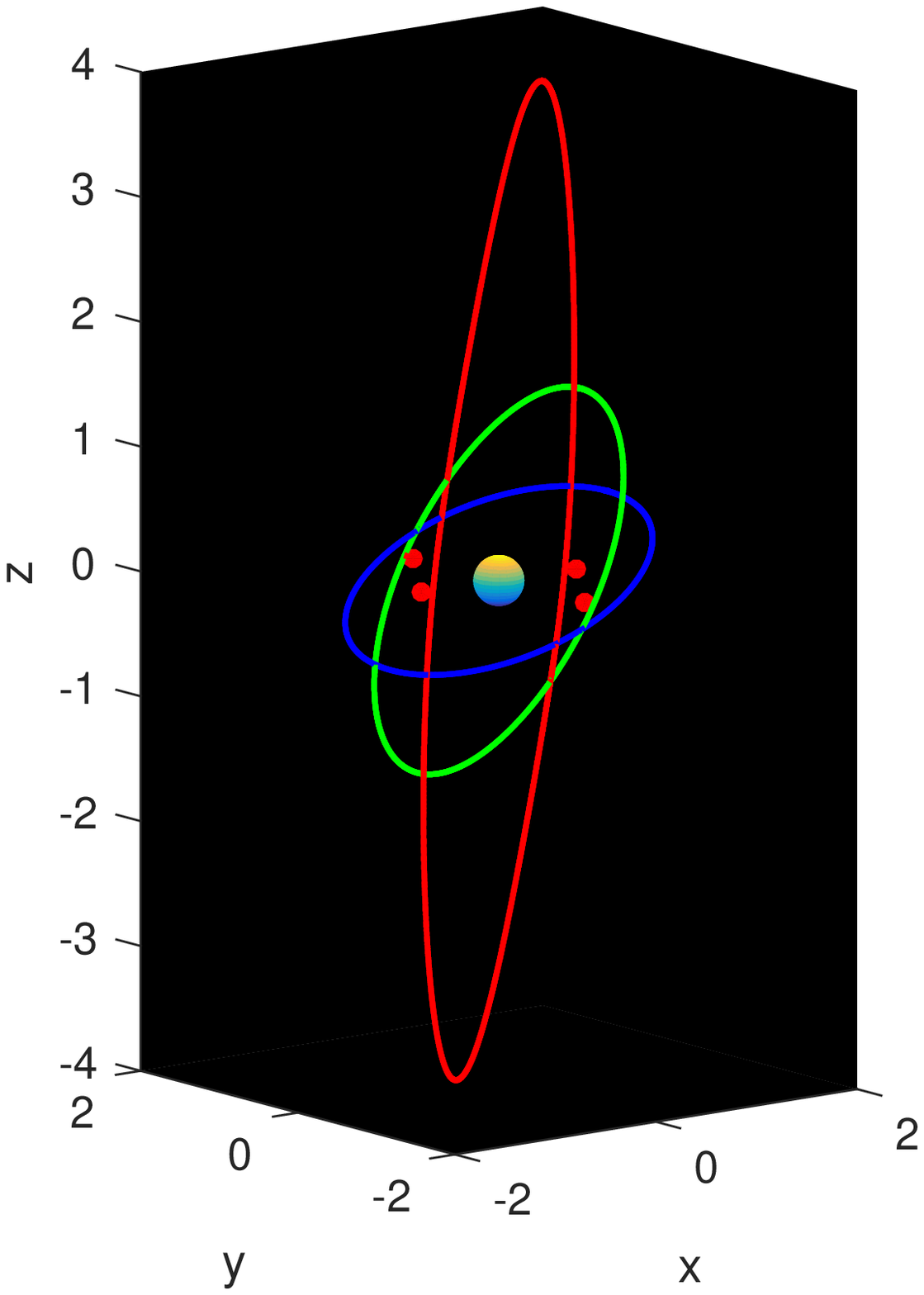}& \includegraphics[width=2.0in,height=2.0in]{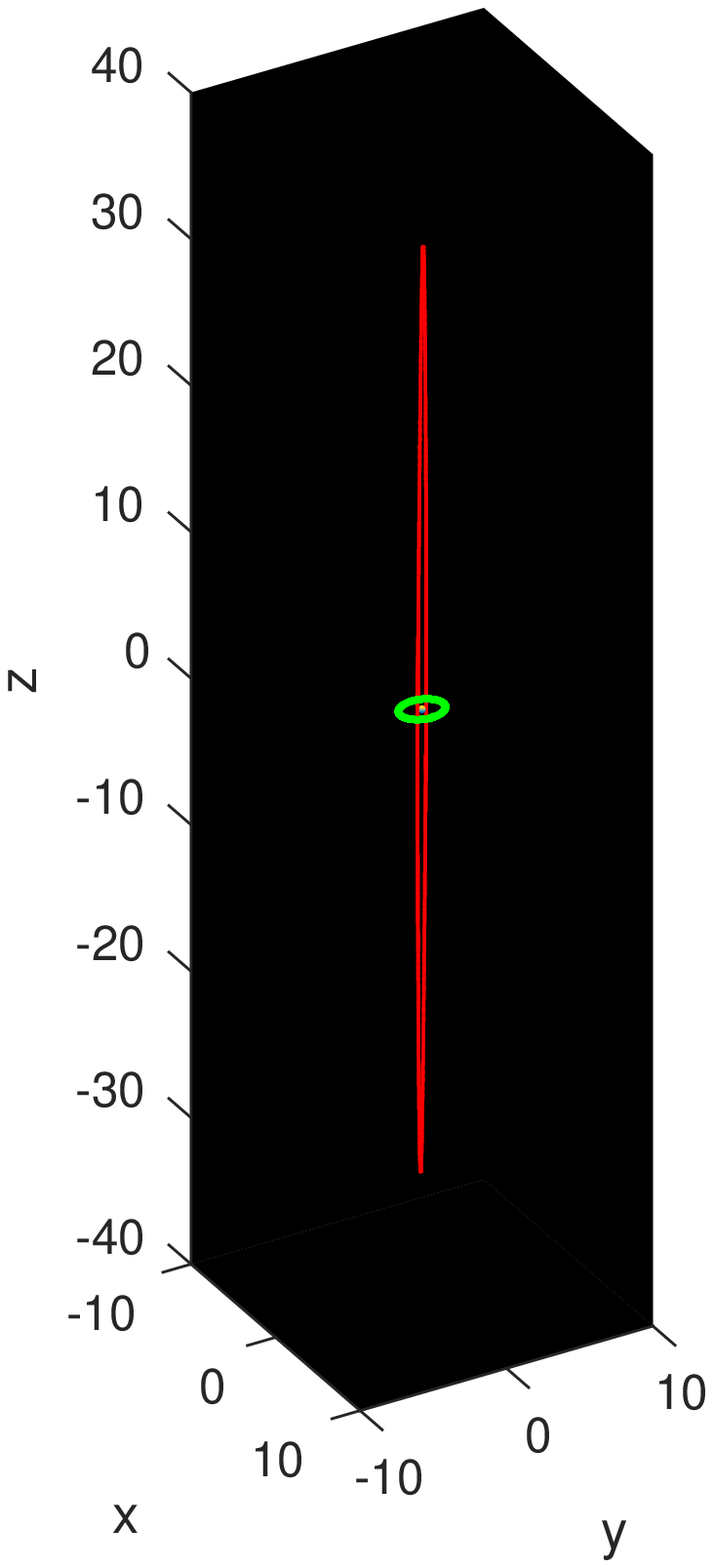}
\end{tabular}
 \caption{Different views for periodic orbits within family $f1v$, for $\mu=0.03$ and $\mu=0.5$, are given in first and second row, respectively. Red dots correspond to equilibrium points.}
 \label{f1vfigures}
\end{figure}

There are critical orbits within $f1v$ and $f2v$ for all values of $\mu$ mentioned above. In Table \ref{Criticalf2v} we have shown initial conditions for each critical orbit in the family $f2v$. For the case $\mu=0.00095$ there are two critical orbits at $C=-31.8301$ and $C=-252.429$, for $\mu=0.00547$ two critical orbits at $C=-9.35730$ and $C=-78.7190$, for $\mu=0.03$ two critical orbits at $C=-2.07890$ and $C=-25.7990$, and finally when $\mu=0.5$ the same situation happens at $C=1.39270$ and $C=-178.5990$.
\newline

\begin{figure}
  \centering
\begin{tabular}{cc}
  \includegraphics[width=2.0in,height=2.0in]{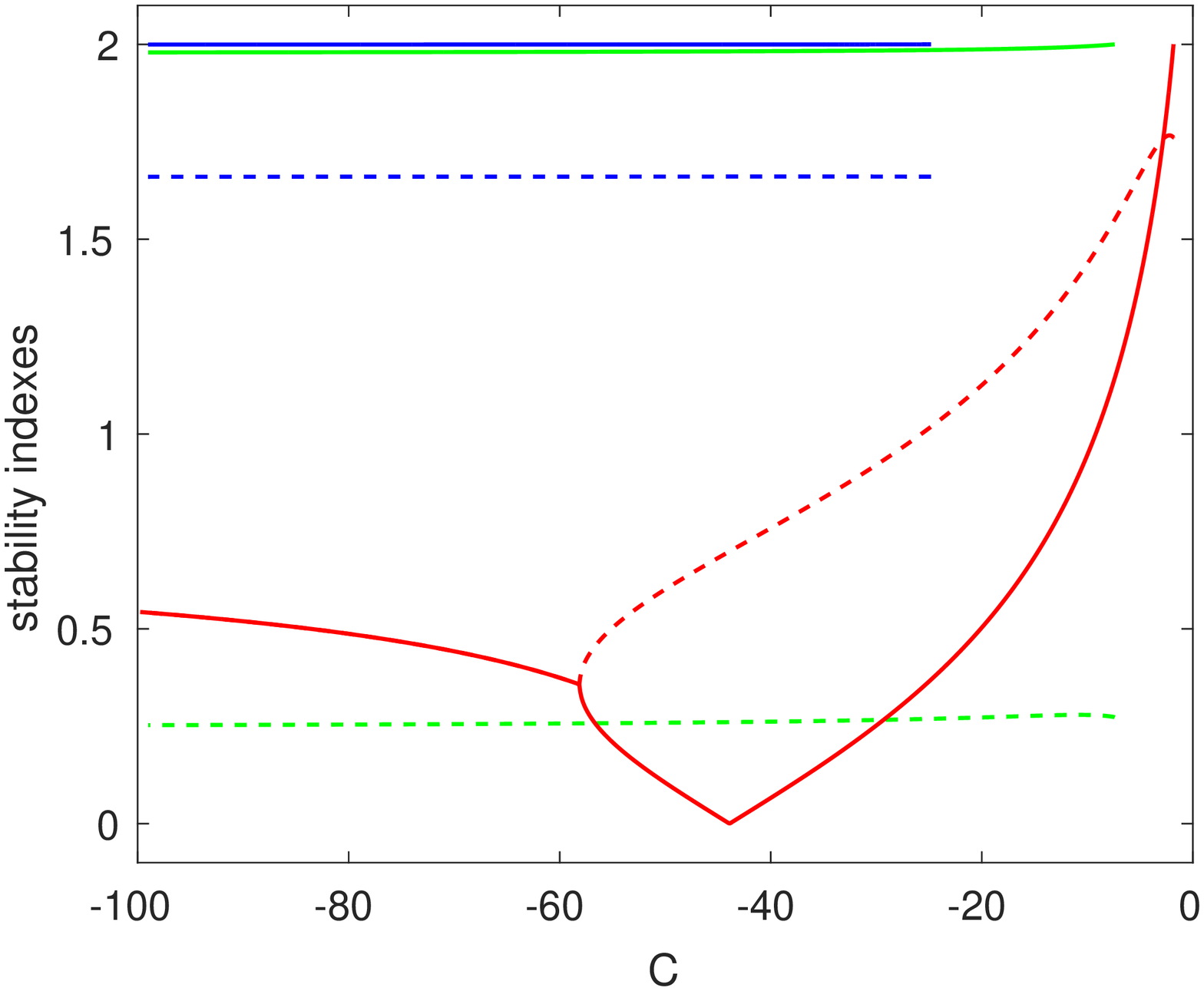}& \includegraphics[width=2.0in,height=2.0in]{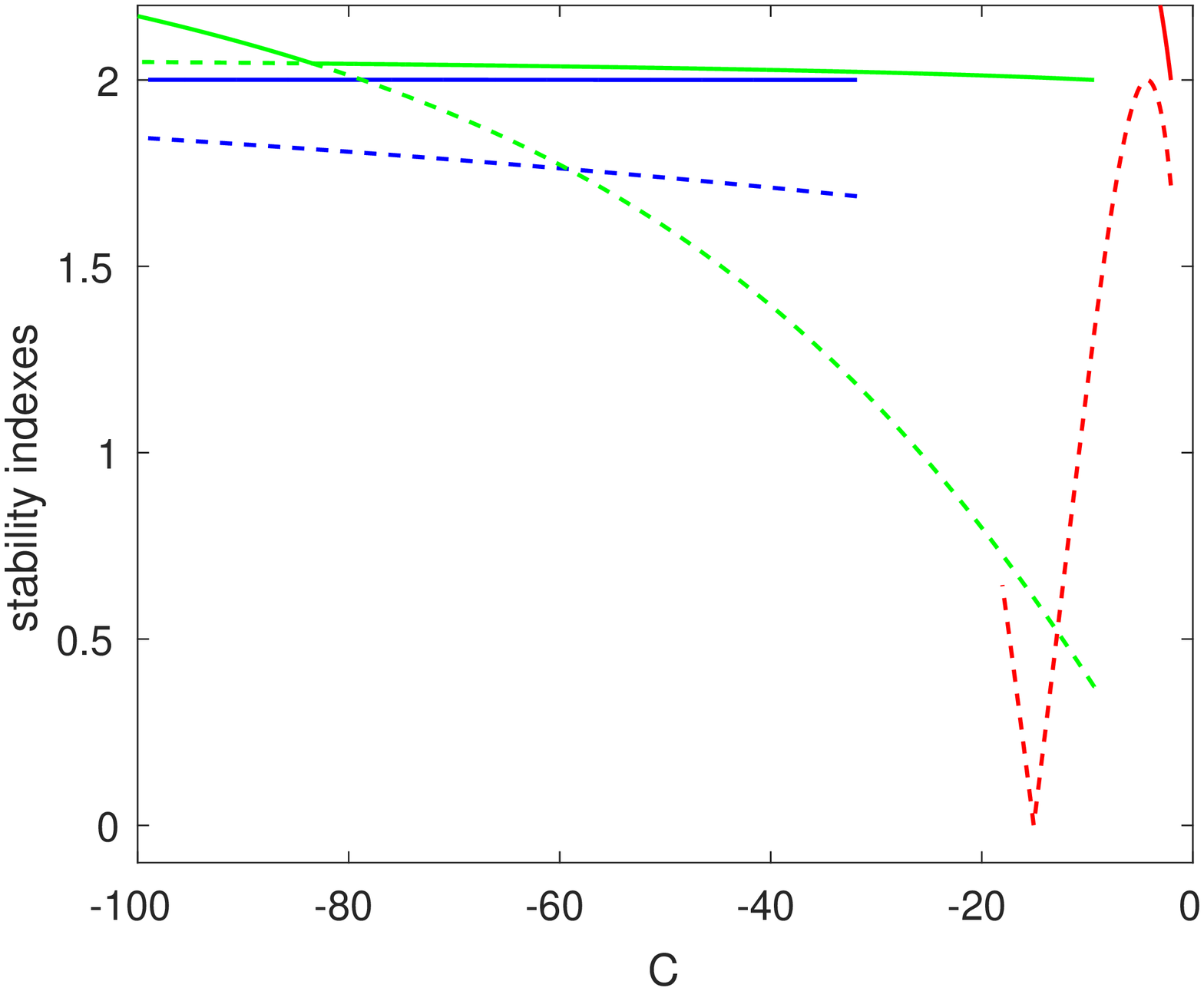}\\
 \includegraphics[width=2.0in,height=2.0in]{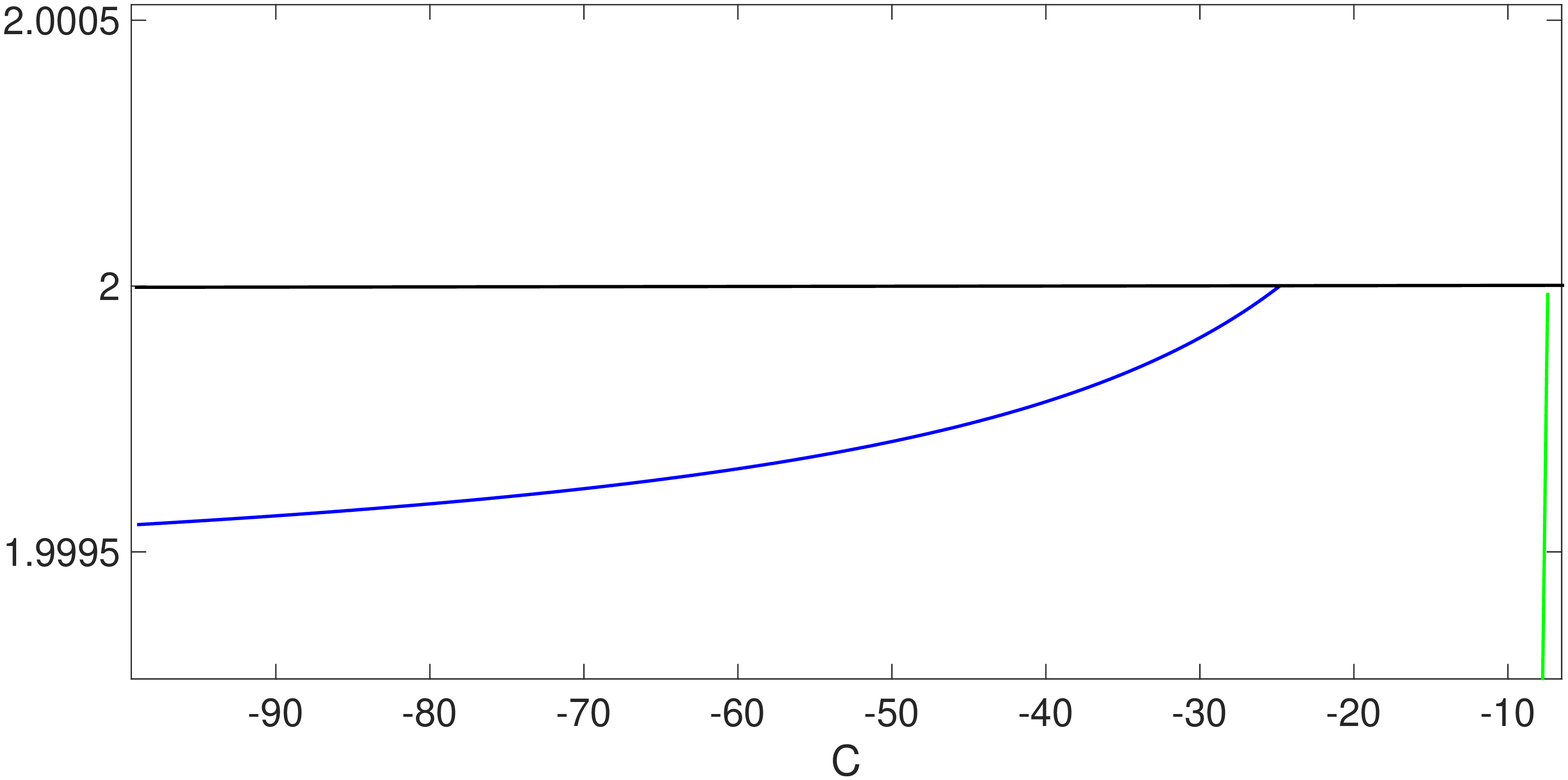}& \includegraphics[width=2.0in,height=2.0in]{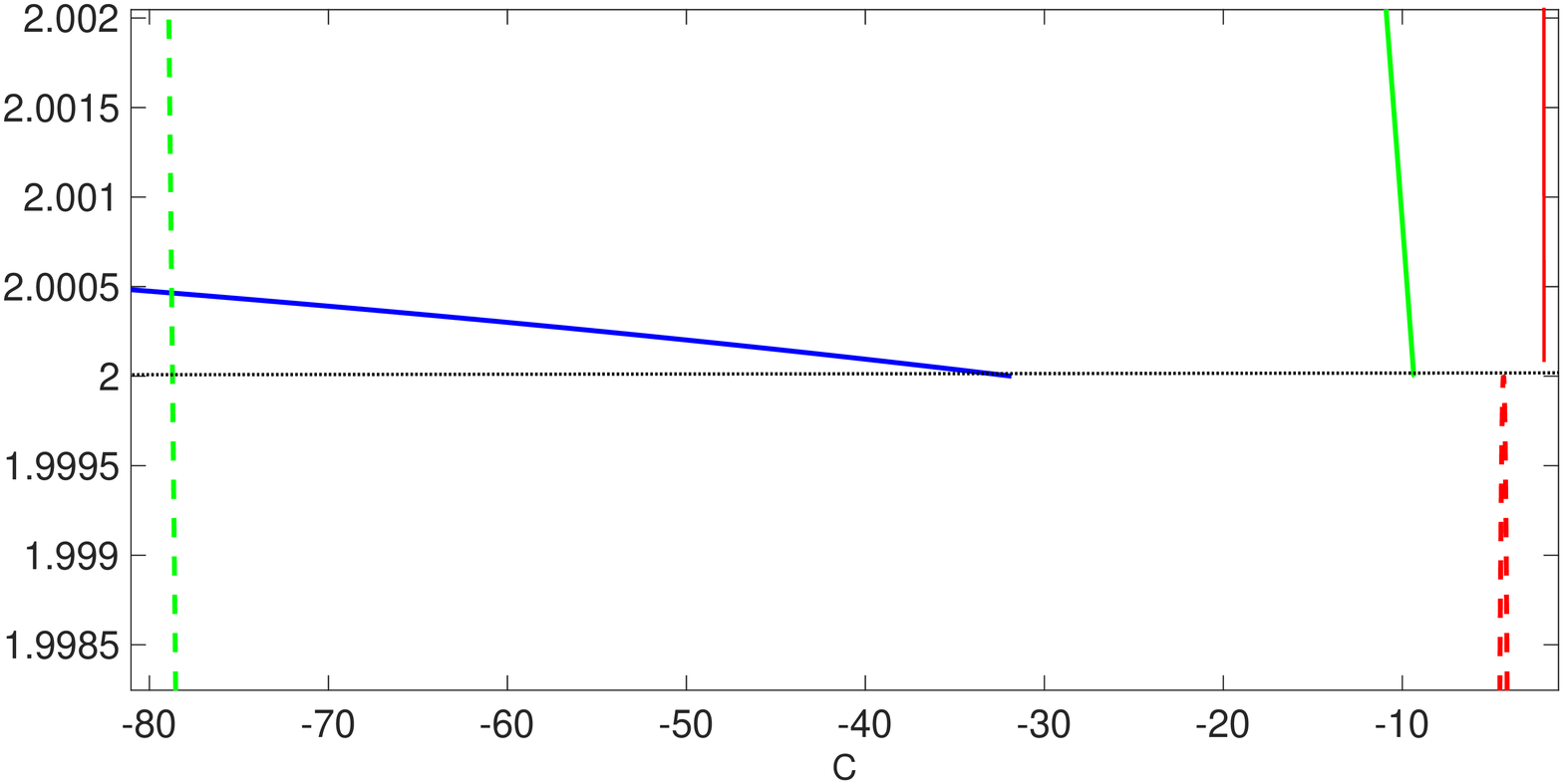} 
\end{tabular}
\caption{Top frames. Stability indexes for families $f1v$ (left) and $f2v$ (right), for $\mu=0.0095$ (blue), $\mu=0.00547$ (green) and $\mu=0.03$ (red). Bottom frames. Magnification of the previous figures. Slashed and solid curves represent indexes $s_1$ and $s_2$, respectively. }
\label{f1vandf2vstabilityfigures}
\end{figure}

There are not critical orbits in the family $f1v$, except the ones corresponding to bifurcations with planar orbits. For the values $\mu=0.00095$, $\mu=0.00547$, $\mu=0.03$, $\mu=0.5$ the bifurcation occurs at $C=-24.81470$, $C=-7.40680$, $C=-1.83040$ and $C=1.11540$, respectively. The initial conditions of each case can be seen in Table \ref{Criticalf1v}.

\begin{figure}
  \centering
\begin{tabular}{cc}
  \includegraphics[width=2.5in,height=1.8in]{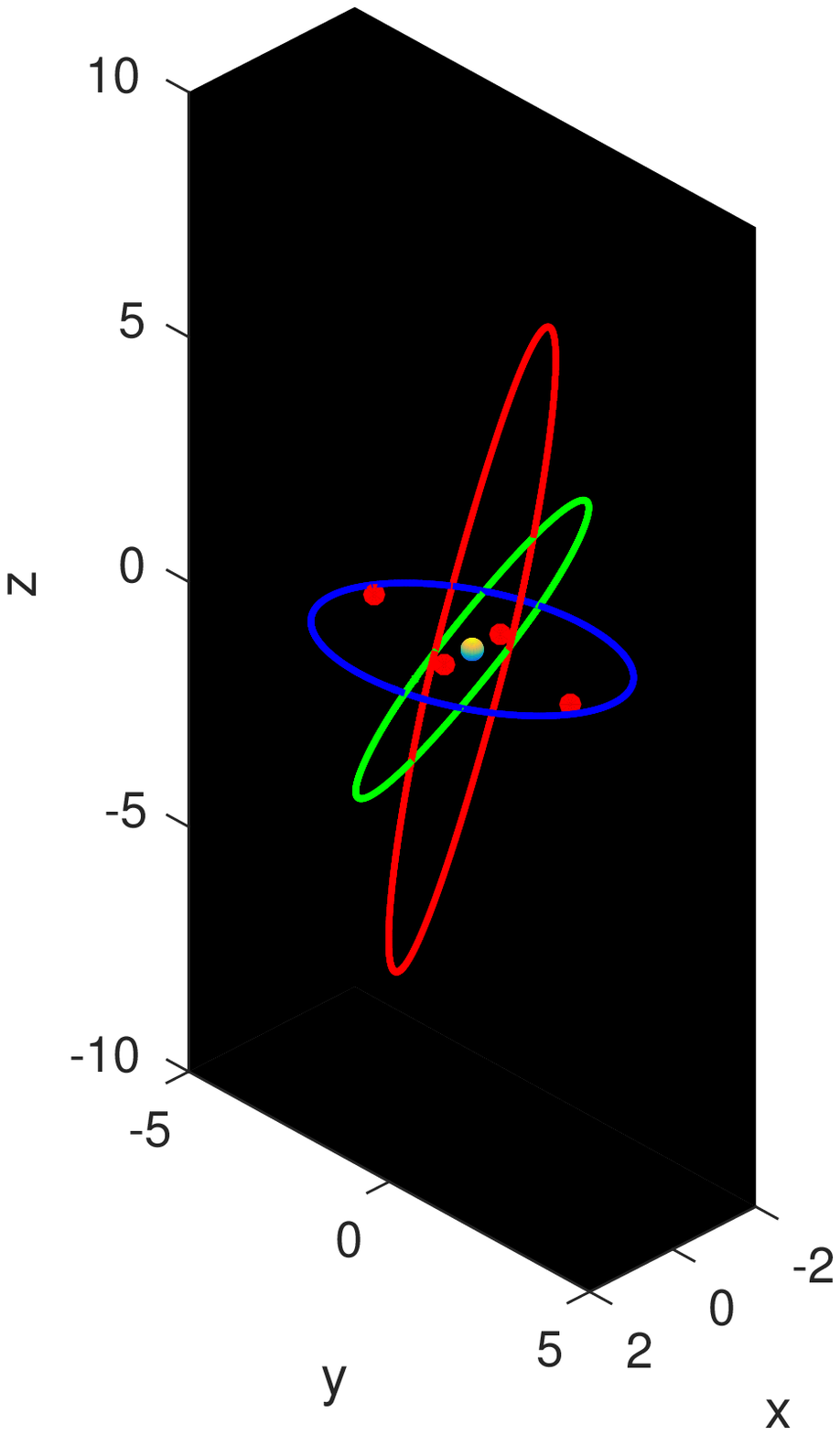}& \includegraphics[width=2.5in,height=1.8in]{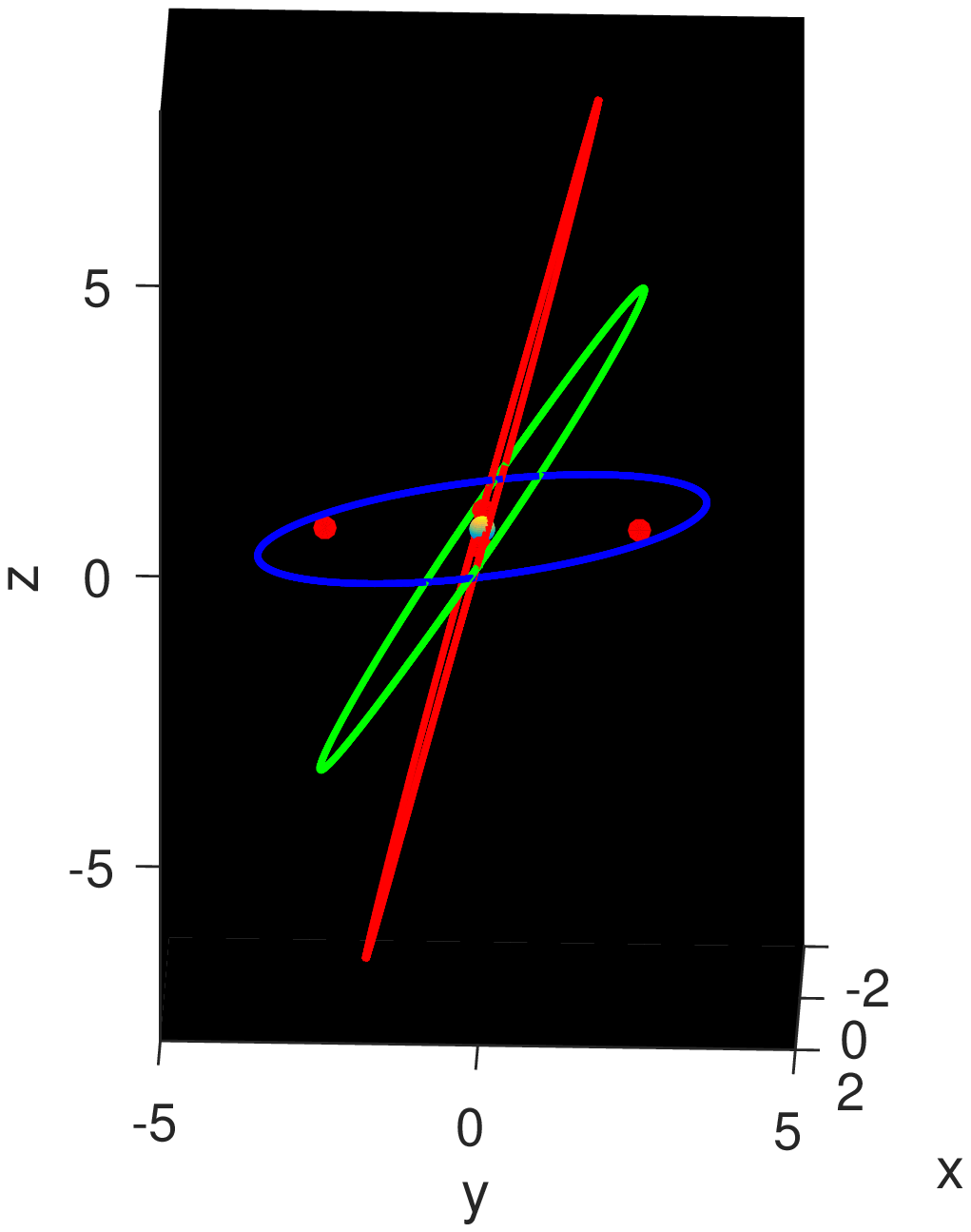}\\
  \includegraphics[width=2.5in,height=1.8in]{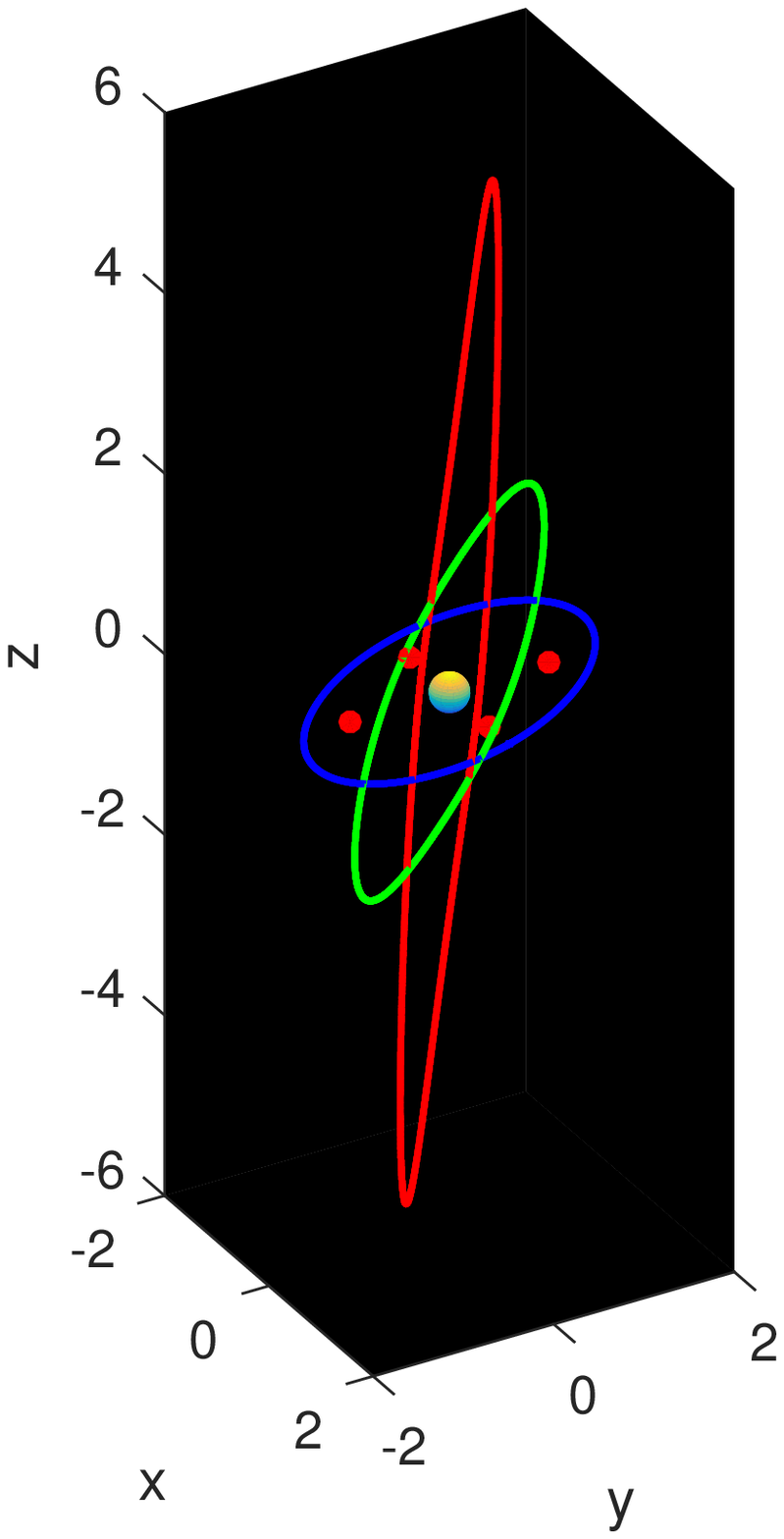}& \includegraphics[width=2.5in,height=1.8in]{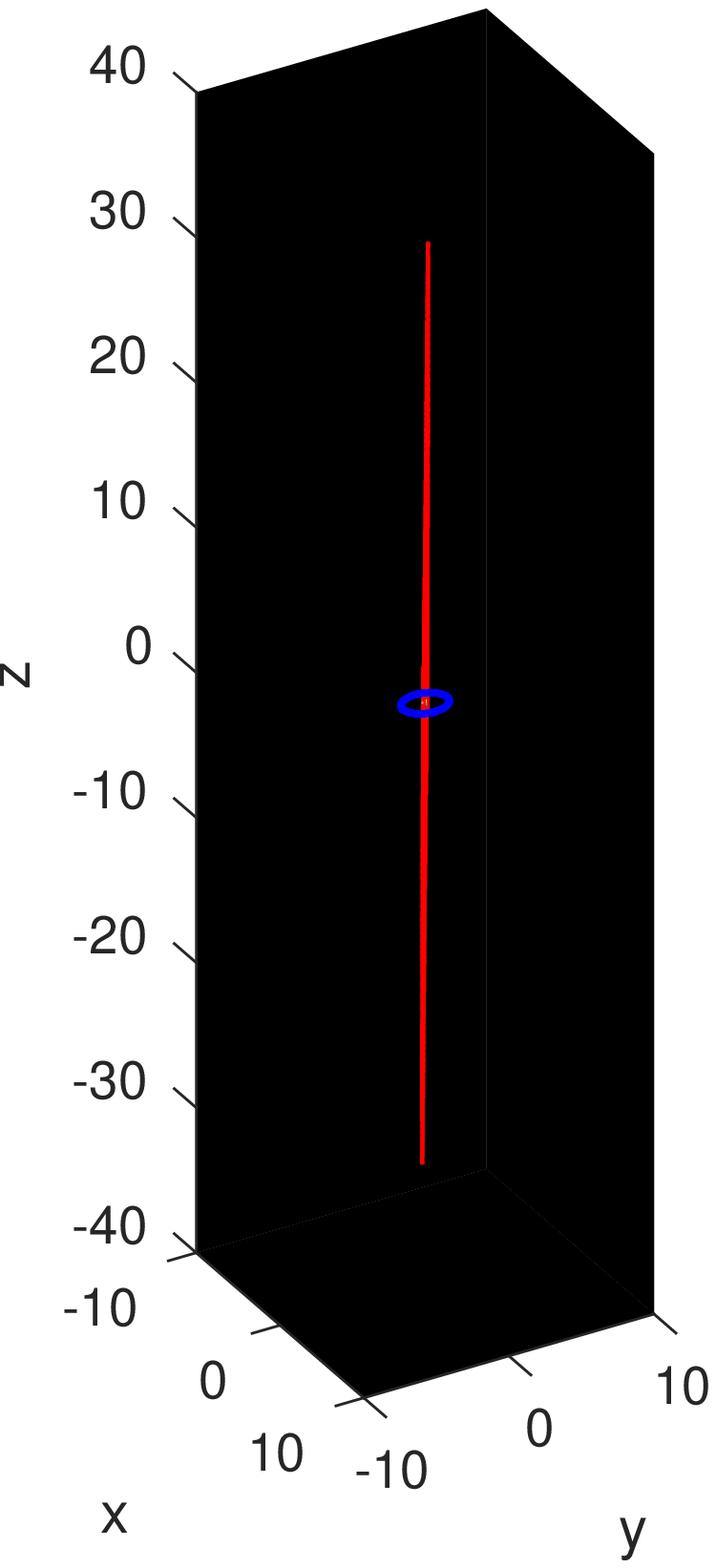}
\end{tabular}
\caption{Different views for orbits within family $f2v$, for $\mu=0.03$ and $\mu=0.5$, are given in first and second row, respectively. Red dots correspond to equilibrium points.}
\label{f2vfigures}
\end{figure}

\begin{table}[ht]
\caption{Initial conditions for critical orbits belonging to family $f2v$. First and second rows correspond to $\mu=0.00095$, third and fourth rows to $\mu=0.00547$, fifth and sixth to $\mu=0.03$ and seventh and eighth rows to $\mu=0.5$.} 
\centering 
\resizebox{14cm}{!} {
\begin{tabular}{c c c c c c c}
\hline\hline 
$s_1$ & $s_2$ & $x_{0}$ & $\dot{z}_{0}$ & $\dot{y}_{0}$ & $T/4$ & $C$\\ [0.5ex] 
\hline 
1.687 & 2  & 5.684318357140213 &  0.0 &   -11.359899708252261  &     1.569933896365237 &-31.830100 \\
2 &  2.001  & 3.805153135319000 &  -15.446312770335599  &   -7.600832458845457 &    1.570412322635000  & -252.42890   \\
0.372 & 2 &  3.196249565181473 &  0.0 &   -6.364393688192154 &    1.565861003785000 &- 9.357300 \\
2 &  2.043 &   2.137000533224141 &  -8.677008319211858 &   -4.243654513195445&     1.568605219743404  &-78.71900  \\
1.717 &   2 &   1.885684739136071 &  0.0 &   -3.683559571730871 &     1.544420430266403  & -2.078700  \\
  2 &   3.151 &   1.250803179799384 &  -5.117165015123839&   -2.408598233735526 &     1.559429117736457 &  -25.79900 \\
  2 &   322.7 &    1.167202998543122  &  0.0&    -1.840139773623922 &      1.352460950907277  & 1.392700\\
    2 &   148.7 &    0.346672927777300  &  -13.580640736995900&    -0.452474876303797 &      1.569114389268850 & -178.5990\\
\hline 
\end{tabular}
}
\label{Criticalf2v} 
\end{table}

\begin{table}[h!]
\caption{Initial conditions for critical orbits belonging to family $f1v$. First, second, third and fourth rows correspond to $\mu=0.00095$, $\mu=0.00547$, $\mu=0.03$ and  $\mu=0.5$, respectively.} 
\centering 
\resizebox{14cm}{!} {
\begin{tabular}{c c c c c c c}
\hline\hline 
$s_1$ & $s_2$ & $x_{0}$ & $z_{0}$ & $\dot{y}_{0}$ & $T/2$ & $C$ \\ [0.5ex] 
\hline 
1.660 & 2  & 5.018825684425040 &  0.0 &   -10.036194771688093  &    3.134509395884316 & -24.81470\\
0.272 & 2 &  2.841184752918077 &  0.0 &   -5.677016018058075 &    3.102515880798499 & -7.406800 \\
1.761 &   2 &   1.726239106614033 &  0.0 &   -3.424781515503815 &     2.969523669896602   &-1.830400  \\
  2 &   195.2 &    1.139963051312378  &  0.0&    -1.887578683046186 &      2.515936791605256 & 1.115400 \\
\hline 
\end{tabular}
}
\label{Criticalf1v} 
\end{table}

Of course, a natural question would be: what does happen with periodic orbits when $\mu\to0$? The explorations revealed, on one hand, an asymptotic behaviour  to $z-$axis, and on the other hand the orbits tend to planar critical orbits of the family $l$ above described!  In Figure \ref{lfigures} we observe the vertical stability curve for the family $l$ with $\mu=0.00095$ where two vertical critical orbits appear and correspond to the ending of families $f1v$ and $f2v$.  We realize the size of the orbits of the family $l$ increases as $\mu\to0$. We conjecture that the three dimensional orbits tend to infinity while they surround the equilibrium points $L_3$ and $L_4$ and these points tend to infinity as $\mu\to0$.

\begin{figure}[h!]
  \centering
\begin{tabular}{cc}
  \includegraphics[width=2.5in,height=1.8in]{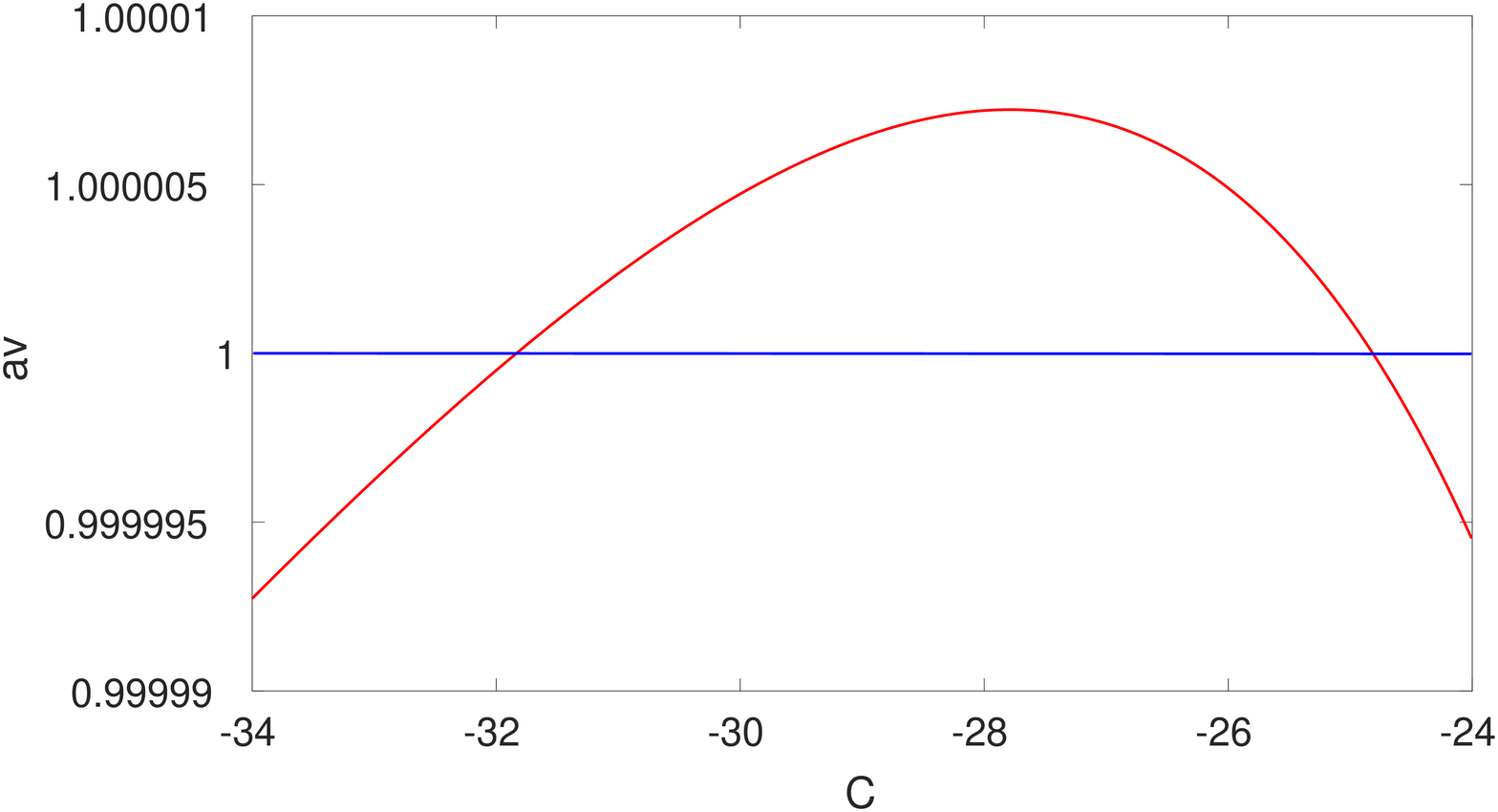}& \includegraphics[width=2.5in,height=1.8in]{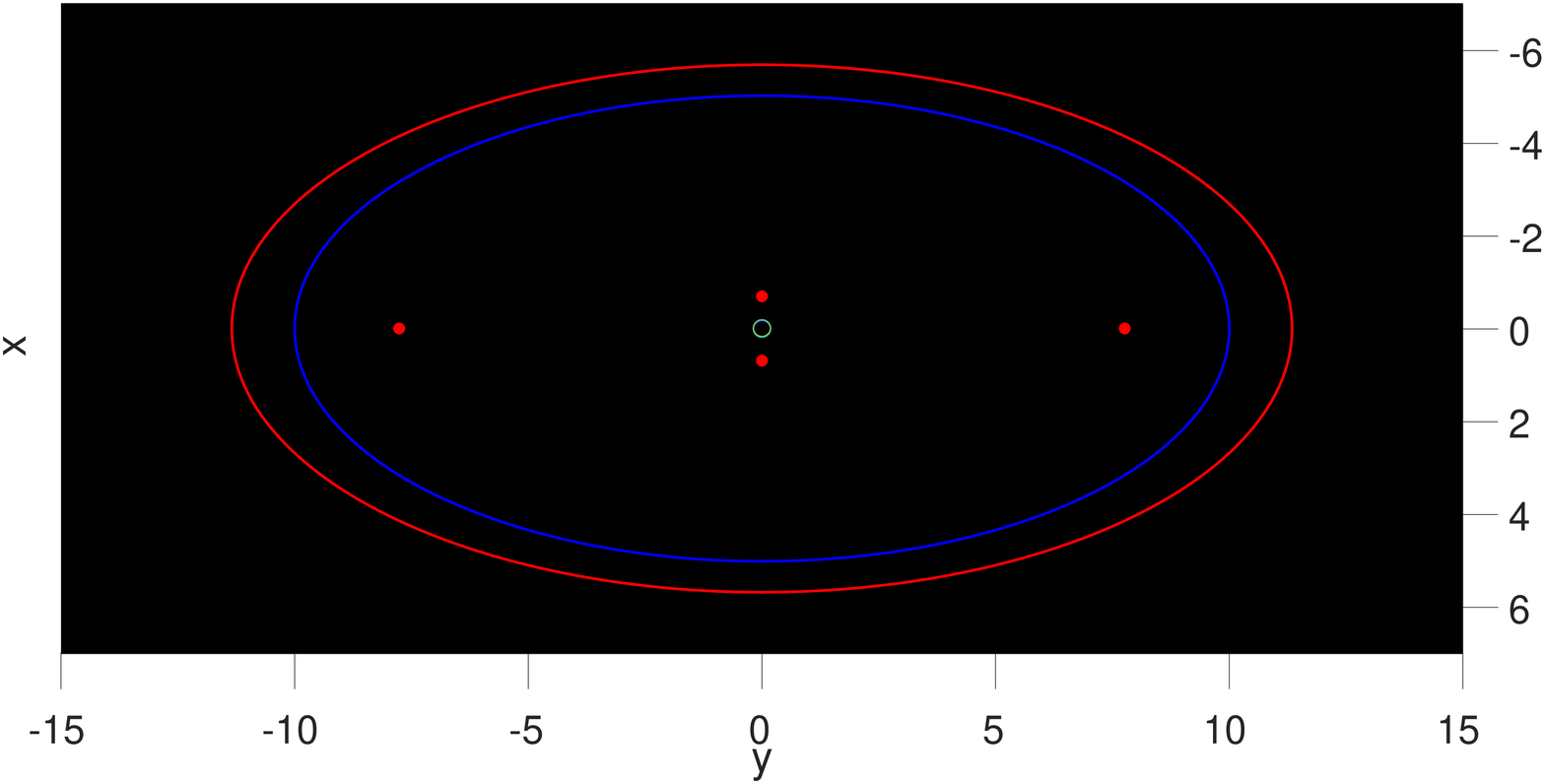}
\end{tabular}
\caption{Left. Characteristic curve of the vertical stability index $a_v$ for the planar family $l$ for $\mu=0.00095$. Right. Bifurcating planar orbits for the family $f2v$ (red) and $f1v$ (blue).}
\label{lfigures}
\end{figure}

\subsection{ Three dimensional periodic orbits around $L_1$}\label{group1}

\subsubsection{Vertical Orbits}
This family of eight-shaped periodic orbits corresponds to a continuation of linear vertical oscillations for motion class $(a)$ of the equations (\ref{linearmotionL1}). The members of this family possess symmetry of type \textbf{III}. In the H3BP, it was found \cite{Pelaez} that the size of these eight shaped orbits increases as the value of $C$ decreases monotonically; this behaviour stands even when $C\to-\infty$.  All orbits are unstable and there is just one critical spatial orbit for $C = 0.512431$. In particular, the behaviour of the corresponding vertical families for the three collinear equilibrium points for the R3BP is quite different from the reported  in \cite{ZagourasII} and \cite{Gomez}. In the R3BP case, the vertical families end at bifurcation with a planar orbit for several values of $\mu$.
\newline

For the H4BP we have found that the vertical orbits are unstable and show the same qualitative behaviour as in the classical H3BP. However, now there are several critical orbits for some $\mu\in(0,1/2]$ whose initial conditions can be found in Table \ref{Criticalvertical}. As a matter of fact, these critical orbits correspond to bifurcations with other families of periodic orbits of this problem.  In Section \ref{haloorbits}, we will see that the vertical orbits are connected in an interesting way with the family of Halo orbits. Moreover, in Section \ref{Lanefamily} we will show that the family of vertical orbits is connected to the one of planar Liapunov orbits through another family of periodic orbits which have been named as Lane family. The critical orbits shown in Table \ref{Criticalvertical} are true bifurcations that will be followed in Section \ref{Lanefamily}.

\begin{table}[ht]
\caption{Initial conditions for critical orbits belonging to the vertical family. First row corresponds to $\mu=0.00095$, second and third rows to $\mu=0.03$, fourth row to $\mu=0.1$ and fifth row to $\mu=0.5$.} 
\centering 
\resizebox{14cm}{!} {
\begin{tabular}{c c c c c c c}
\hline\hline 
$s_1$ & $s_2$ & $x_{0}$ & $\dot{z}_{0}$ & $\dot{y}_{0}$ & $T/4$ & $C$\\ [0.5ex] 
\hline 
2 & 435.8  & 0.540712768638617  &  1.923960163342287  &    -0.581718029353020  &    1.058338628606103 & 4.293059  \\
2 &  526.4  & 0.561988213414619   &  1.749237807880294  &    -0.446479184775191  &     0.982011756142002  & 1.225958  \\
2 &  525.7  & 0.561922346788838   &  1.749818438199988  &    -0.446891515959117  &     0.982243141474447  & 1.223758  \\
2 &  1424.2  & 0.670700818421059    &   0.892265688101642   &    -0.086904548188239  &    0.837337764530224 & 3.429539  \\
2 &  625.3  & 0.721064718369786    &   0.871684257590844  &    -0.082685710206543  &    0.900258042567795  & 3.176858 \\
 [1ex] 
\hline 
\end{tabular}
}
\label{Criticalvertical} 
\end{table}

\subsubsection{Halo Orbits}
\label{haloorbits}
This family of periodic orbit was named $a1v$ for the H3BP in \cite{Michalodimitrakis} where it was reported the evolution of such family as follows: it starts at a bifurcating orbit of the family of planar Lyapunov orbits at $C = 4.00531$,  then as the value $C$ decreases the orbits become more inclined with respect to the plane $xy$. After passing a reflection (turning point), the value of $C$ increases and the width of the orbits starts to decrease in such a way the orbits tend to collision. Furthermore, the author made the conjecture: the orbits tend asymptotically towards a rectilinear collision motion on the half-axis $x=y=0$, $z\leq0$. Further explorations, as the contained in \cite{Pelaez}, provide data about the stability of this family; almost all orbits within the family are unstable except in a short region of $C$ with length $\Delta C=0.02612$.
\newline

Considering the boundary problem for the symmetry \textbf{II}, we perform a numerical continuation for $C$ taking several values of $\mu$. We found that there exists some range for $\mu$ in the interval $[0,1/2]$ where the behaviour of the families is analogous to the one of the H3BP; the families start at bifurcation with the planar family, pass by a reflection and tend to rectilinear collision motion by means of reaching \textit{southern polar orbits}.  However, we noticed that the main effect of considering a mass parameter $\mu$ different from zero in the continuation is on the stability of the orbits. For instance, for $\mu=0.00095$ there are several critical orbits and a stable zone of length $\Delta C=0.026307$ between the values $C=1.069223$ and $C=1.09553$ which is slightly bigger than the range in the H3BP. Something similar happens for other values of $\mu$; when $\mu=0.03$ the stable zone has length $\Delta C=0.032528$ and appears between $C=1.074566$ and $C=1.07094$, and for $\mu=0.1$ the stable zone has width $\Delta C=0.04856$ and it is delimited by $C=1.133094$ and $C=1.084534$. The case $\mu=0.5$ is not an exception; the stable zone has length $\Delta C=0.1317$ and appears between $C=1.089994$ and $C=1.221694$. 
\newline

 Nevertheless, there is a remarkable property of this family, there exists a value of $\mu_c$ for which the family does not start at a bifurcation with the planar family of Lyapunov orbits, it starts at a bifurcation with the vertical family as it is shown in Figure \ref{halofigures2}. We can observe that the characteristic curves reach  the value $z_0=0$ for $\mu=0.1$ and $\mu=\mu_p$ (considering its value in double precision with round up). However, for $\mu$ slightly greater than $\mu=\mu_p$ the characteristic curve does not reach the value $z_0=0$ any more. Moreover,  it was observed that this property remains true for $\mu\in(\mu_p,0.5]$ in such a way the family starts in a bifurcation with the vertical family and it ends with rectilinear collision motion. We should stress that this is not just a qualitative argumentation since in the upper right corner of Figure \ref{halofigures2} we can observe that the stability index $s_1$ reach the value $2$ for $\mu=\mu_p$ and $\mu=0.11755$ but for the first case the family ends at a planar Lyapunov family and in the second case does not.
\newline

\begin{figure}
  \centering
\begin{tabular}{cc}
  \includegraphics[width=2.5in,height=1.8in]{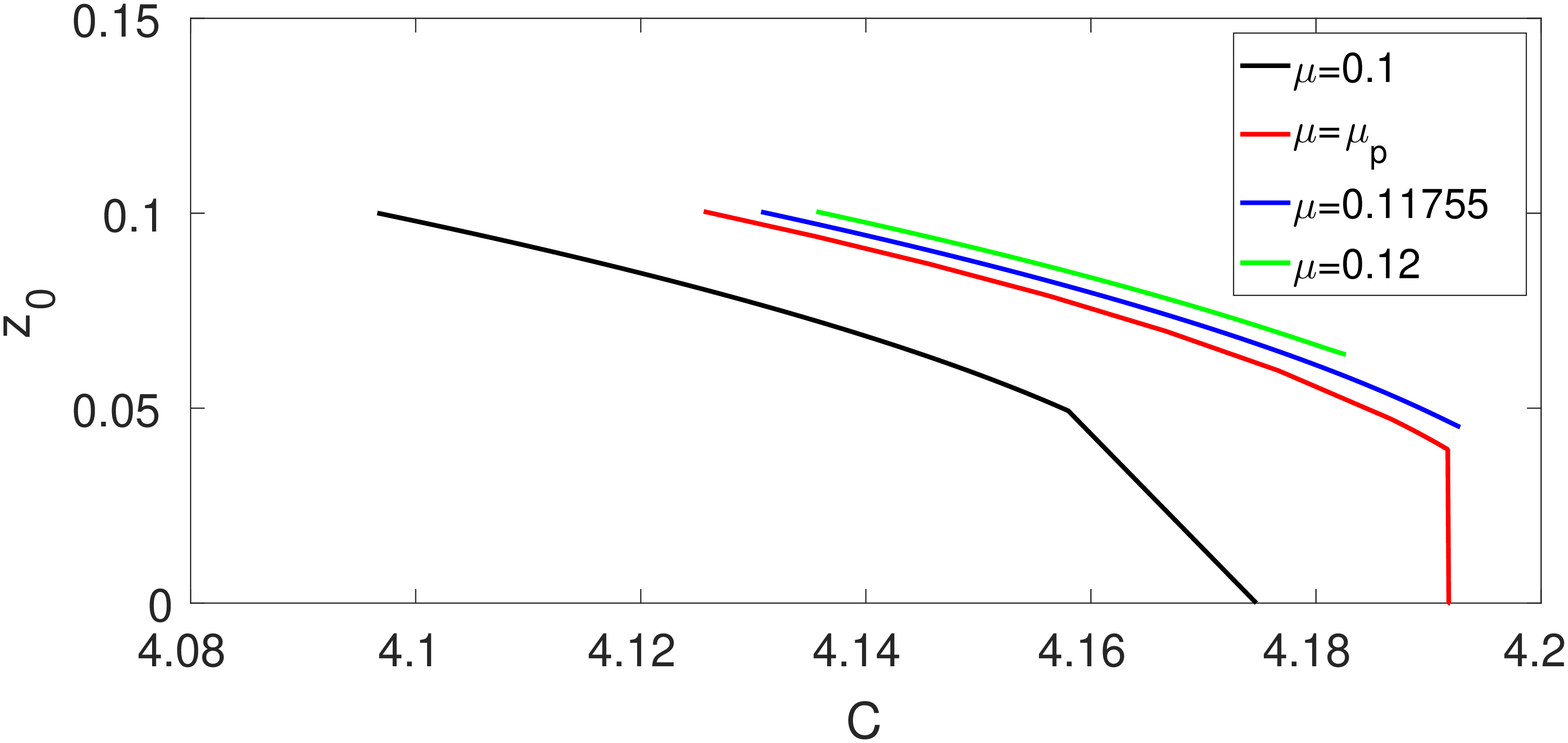}& \includegraphics[width=2.5in,height=1.8in]{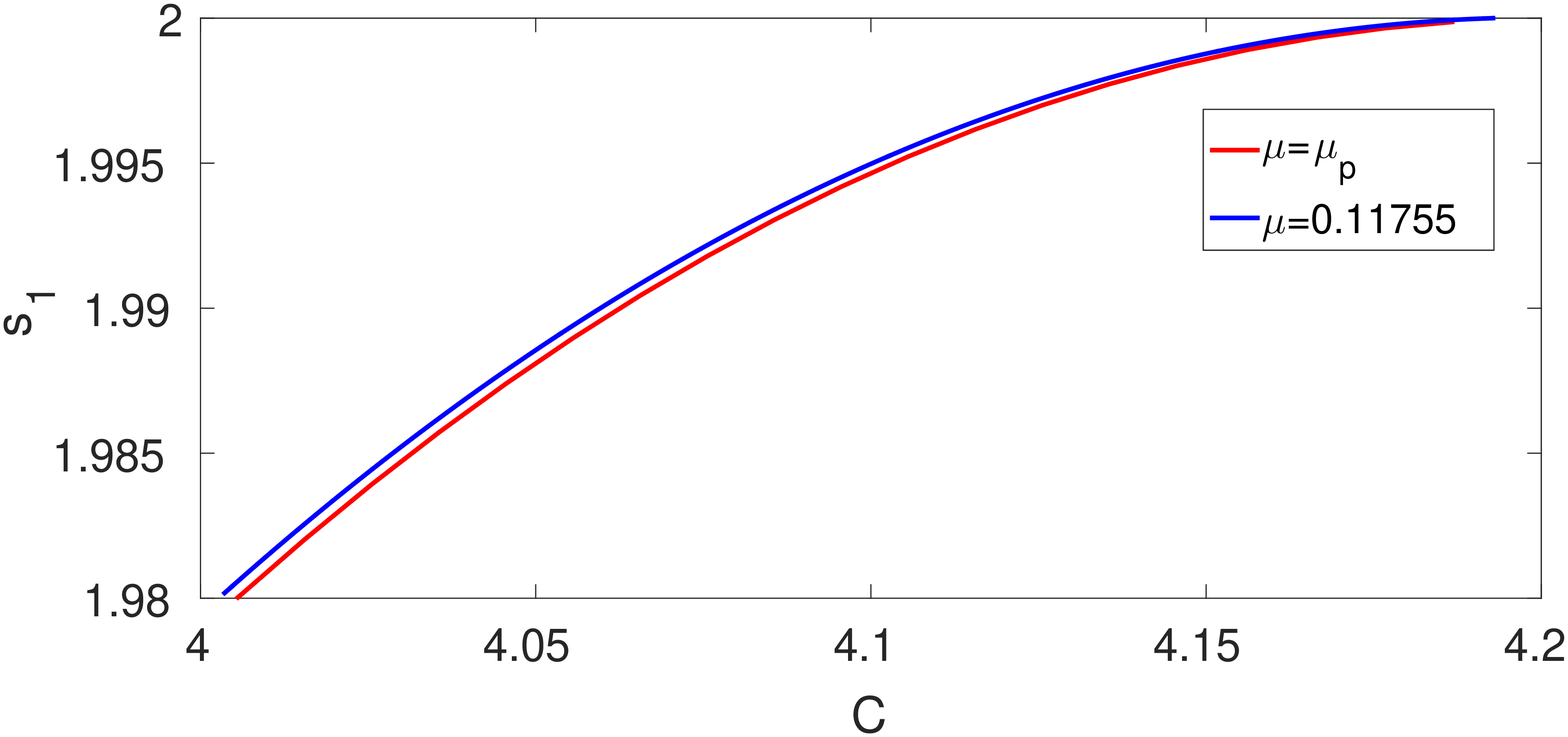}\\
  \includegraphics[width=2.5in,height=1.8in]{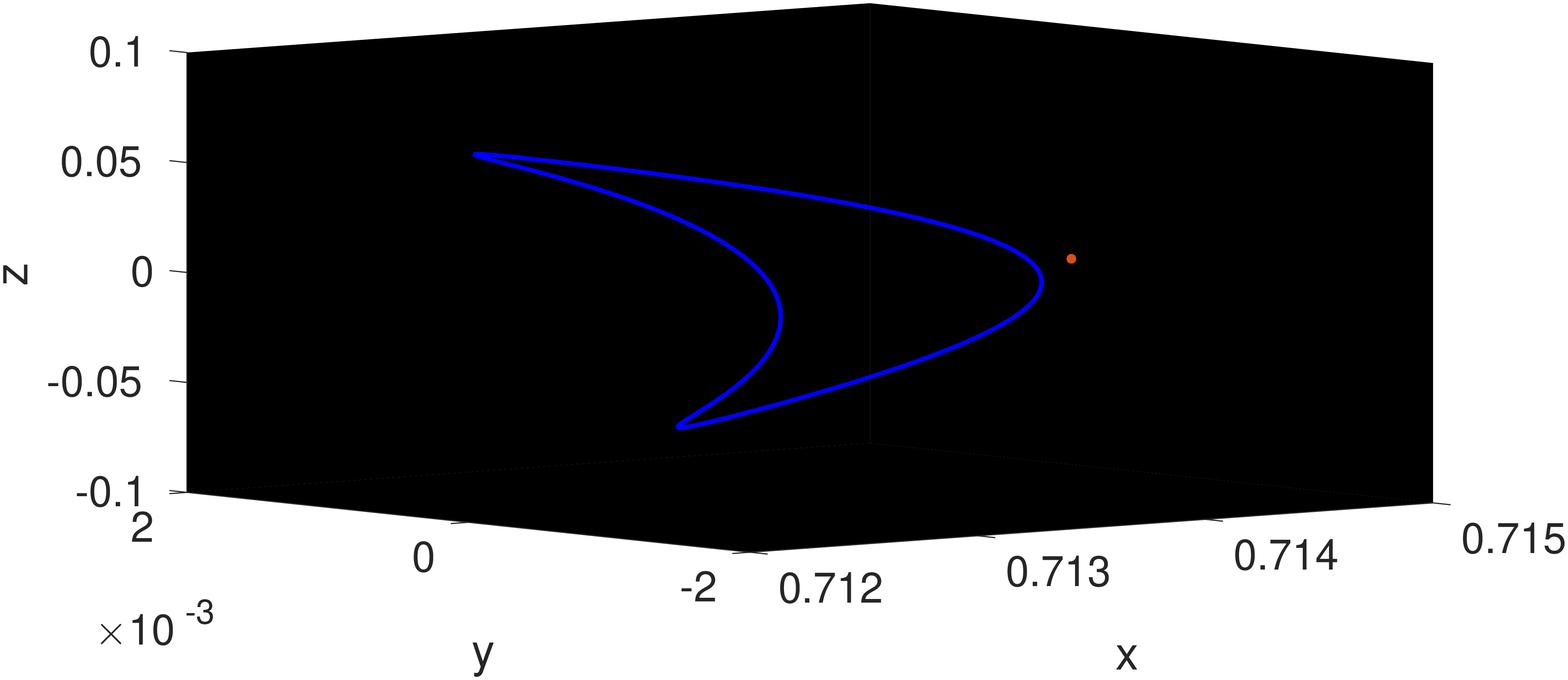}& \includegraphics[width=2.5in,height=1.8in]{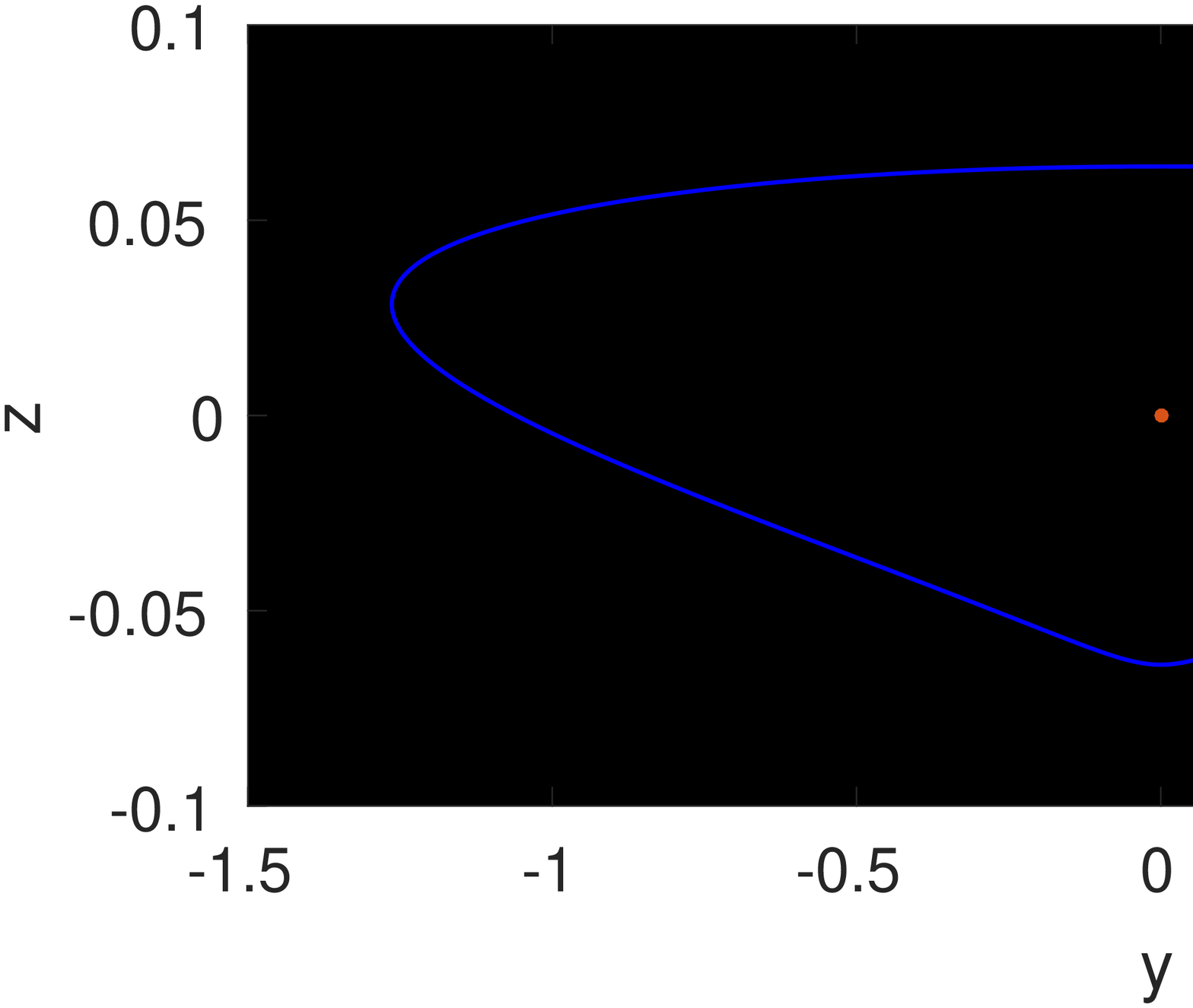}
\end{tabular}
\caption{First row. Projections of the final parts of characteristic curves for the family of Halo orbits on the planes $(C,z_0)$ and $(C,s_1)$ respectively. Second row. Two perspectives of the orbit of bifurcation for the Halo orbits for $\mu=0.12$ and $C\approx4.1826495$. After this bifurcation, another loop appears in the orbit to evolve in a eight-shaped orbit.}
\label{halofigures2}
\end{figure}

Therefore, the numerical explorations support the conjecture made in Section \ref{resonances}. Although the mechanism for generating the Halo orbits is not starting at planar Lyapunov critical orbits for the cases when $\mu>\mu_p$, we have decided to keep the name Halo  since they are diffeomorphic to circles just before the bifurcations where the family starts and ends. In Figure \ref{halofigures} we can observe the geometry of Halo periodic orbits for a couple of values of $\mu$. In Tables \ref{Criticalhalo1case1} - \ref{Criticalhalo1case4} we have shown data of critical orbits for the Halo family.

\begin{figure}
  \centering
\begin{tabular}{cc}
  \includegraphics[width=2.0in,height=2.0in]{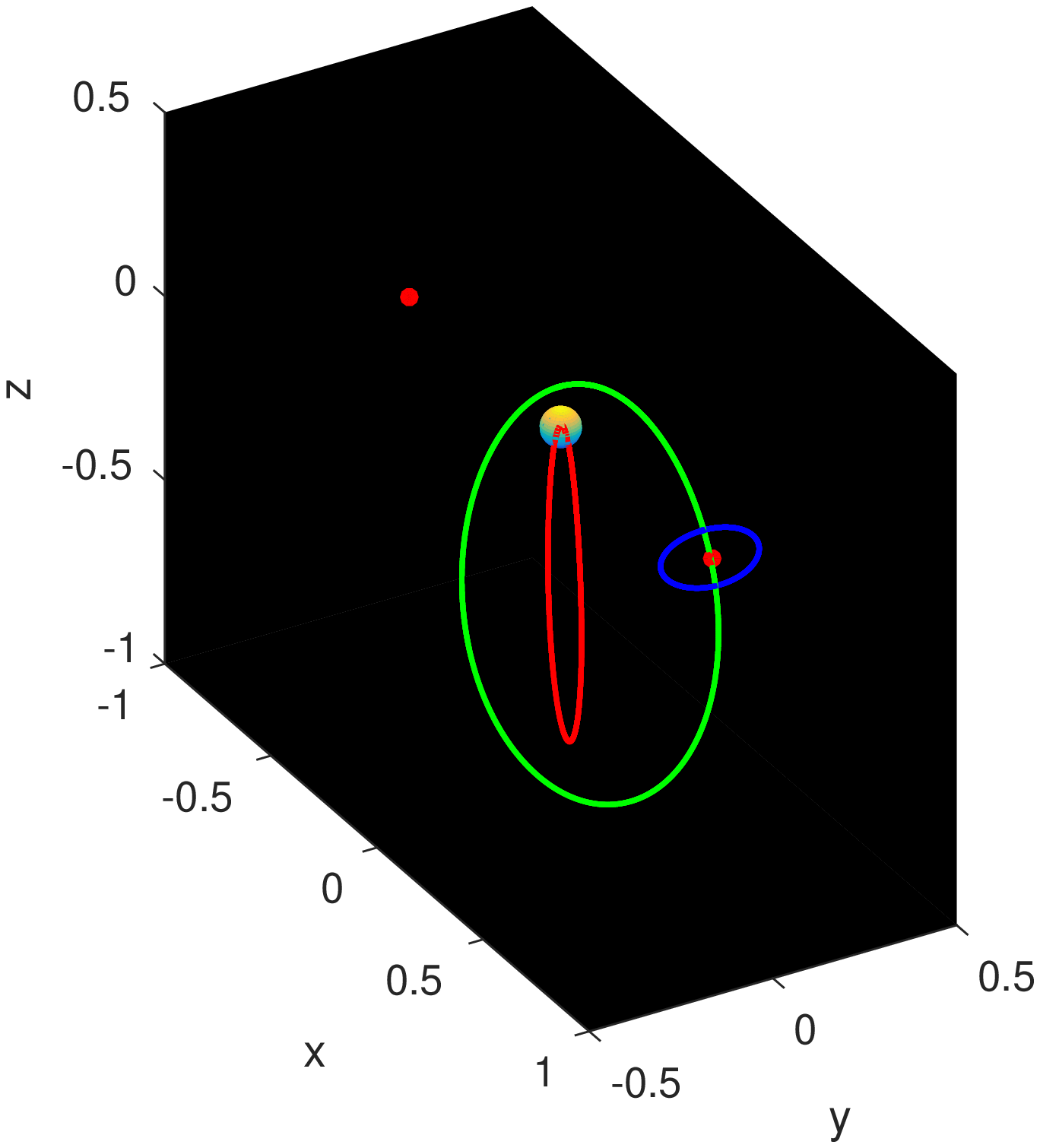}& \includegraphics[width=2.0in,height=2.0in]{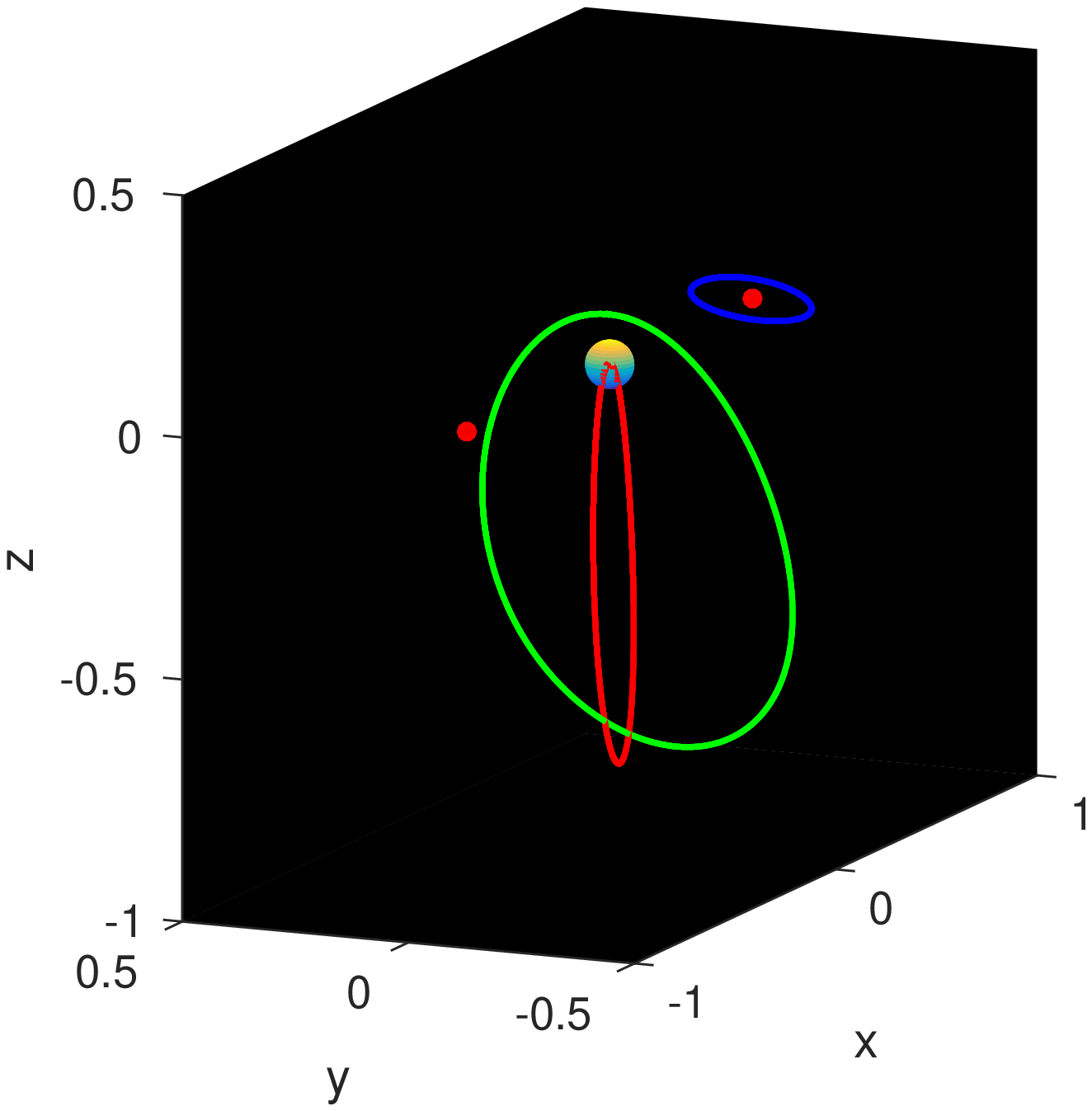}\\
  \includegraphics[width=2.0in,height=2.0in]{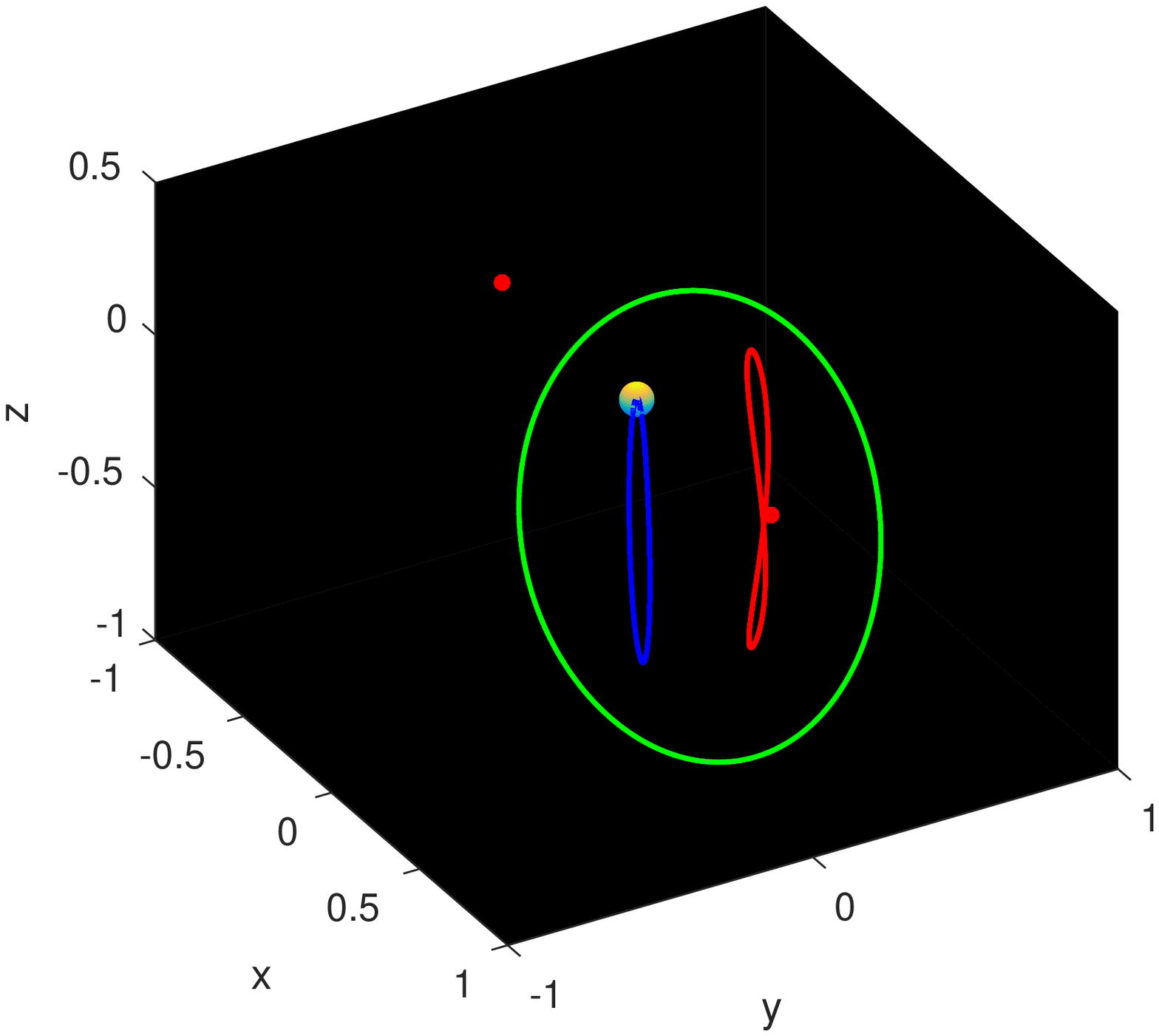}& \includegraphics[width=2.0in,height=2.0in]{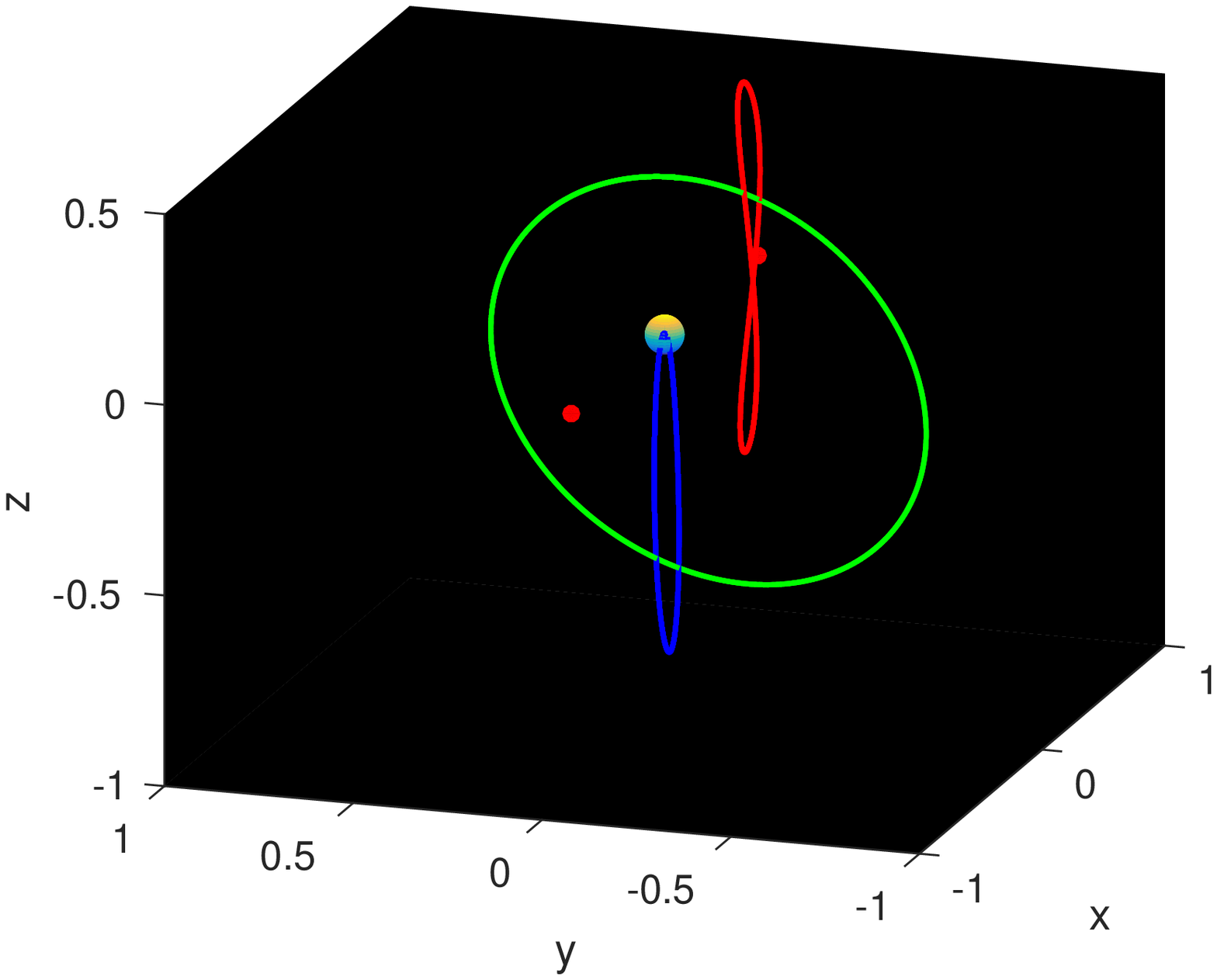}
\end{tabular}
\caption{Different views for orbits within Halo family. First row, evolution from a planar critical orbit (blue) to an ejection-collision orbit (red) for $\mu=0.1$. Second row, evolution from a vertical critical orbit (red) to an ejection-collision orbit (blue) for $\mu=0.5$. Red dots correspond to equilibrium points.}
\label{halofigures}
\end{figure}

\begin{table}[ht]
\caption{Initial conditions for critical orbits belonging to Halo family for $\mu=0.00095$. First row corresponds to a bifurcation with the planar family of Liapunov orbits.} 
\centering 
\resizebox{14cm}{!} {
\begin{tabular}{c c c c c c c}
\hline\hline 
$s_1$ & $s_2$ & $x_{0}$ & $z_{0}$ & $\dot{y}_{0}$ & $T/2$ & $C$\\ [0.5ex] 
\hline 
2 & 1730.4 & 0.581672037294313 & 0.0 & 0.668336524962183 & 1.541431516853591 & 4.005994\\
2 & 34.98 & 0.104820295485459  & 0.315162159626700 & 2.148218466217613  & 1.355379140358816 & 1.340394\\
2 & 1.328 & 0.101840495133734  & 0.313414776304381 & 2.161812236323527   & 1.351672367357241 &  1.328394 \\
1.027 & 2 &  0.011703595185131 &  0.207968429015019 & 2.913689930254455   & 1.146724290330344 & 1.069223\\
2 & 0.422 &  -0.000648771717150  &  0.166131017074055 & 3.303856999073573   & 1 .065009799310460 & 1.095530 \\
3.259 & 2 &  -0.000077584280011  &   0.002946820658802 & 26.014809437875140    & 0.723938391949591 &1.692140 \\
 [1ex] 
\hline 
\end{tabular}
}
\label{Criticalhalo1case1} 
\end{table}

\begin{table}[ht]
\caption{Initial conditions for critical orbits belonging to Halo family for $\mu=0.03$. First row corresponds to a bifurcation with the planar family of Liapunov orbits.} 
\centering 
\resizebox{14cm}{!} {
\begin{tabular}{c c c c c c c}
\hline\hline 
$s_1$ & $s_2$ & $x_{0}$ & $z_{0}$ & $\dot{y}_{0}$ & $T/2$ & $C$\\ [0.5ex] 
\hline 
2 & 1794.5 & 0.601287913380110 &   0.0 &  0.575263625144359 &      1.561152709206505  & 4.055694 \\
2 & 35.9  & 0.107315014557271 &   0.321820367624817 &  2.112727941575960 &      1.370114315788723  &1.362094 \\
2 & 34.11 & 0.103591787584037 &  0.319646959222008 &  2.129398767787492 &      1.365497793441479  & 1.347094\\
0.90 & 2 &  0.010003086858506 &  0.208613778483544 &  2.908315619968743  &     1.149607605949286 & 1.074566\\
2 & 0.453 &  -0.002730254287789 &  0.161458136703433 &  3.354429957717300 &      1.057050430881994  &1.107094 \\
3.154 & 2 &   -0.000080090720024 & 0.002937297832600 &  26.056691025195214 &     0.723622718183423 & 1.693794 \\
 [1ex] 
\hline 
\end{tabular}
}
\label{Criticalhalo1case2} 
\end{table}

\begin{table}[ht]
\caption{Initial conditions for critical orbits belonging to Halo family for $\mu=0.1$. First row corresponds to a  bifurcation with the planar family of Liapunov orbits.} 
\centering 
\resizebox{14cm}{!} {
\begin{tabular}{c c c c c c c}
\hline\hline 
$s_1$ & $s_2$ & $x_{0}$ & $z_{0}$ & $\dot{y}_{0}$ & $T/2$ & $C$\\ [0.5ex] 
\hline 
2 & 1971.2 & 0.669327209253065 &  0.0 &   0.243998312372139& 1.611289914305902   & 4.174694\\
2 & 37.89  & 0.113892766509675 &  0.338770610358851 &   2.025868226362990& 1.407431377783558  &1.413094  \\
2 & 34.23 &  0.105884144100618 &  0.334078565282522 &   2.060417783160357 &     1.397508079271822 &  1.381094 \\
0.639 & 2 &   0.006211217319152 &  0.210789666424723 &   2.890528624397537 &     1.158270994928659 & 1.084534  \\
2 & 0.598  & -0.006822974119050 &  0.150495590867728  &  3.481404287348189 &     1.038606170016169  & 1.133094 \\
2.915 & 2 & -0.000087210031082 &  0.002945451413095  &  26.019644506947962 &    0.723735450159685 & 1.693724  \\
 [1ex] 
\hline 
\end{tabular}
}
\label{Criticalhalo1case3} 
\end{table}

\begin{table}[ht]
\caption{Initial conditions for critical orbits belonging to Halo family for $\mu=0.5$. First row corresponds to a bifurcation with the vertical family.} 
\centering 
\resizebox{14cm}{!} {
\begin{tabular}{c c c c c c c}
\hline\hline 
$s_1$ & $s_2$ & $x_{0}$ & $z_{0}$ & $\dot{y}_{0}$ & $T/2$ & $C$\\ [0.5ex] 
\hline 
2 & 625.4  & 0.648277423604457  &  0.483515677156544  &   0.087533359865646  &    1.800474774189599  & 3.177144  \\
2 & 42.36  & 0.158469155949397 &  0.426750603593817 &   1.627279340761100  &    1.591412388621520  &1.619794   \\
2 & 22.10 &  0.090191445216378 &  0.381064472116173 &   1.897063403336006  &    1.502888070606410  & 1.381594\\
0.190 & 2 &   -0.008805949009954 &  0.216328361973418 &   2.846214515756042 &     1.192438243600371  &1.089994   \\
2 & 1.367  & -0.013548716819445 & 0.105436491714333  &  4.193038116150593  &    0.962899180087805  &1.221794  \\
2.474 & 2 & -0.000110820946092 &  0.002927606844806 &   26.096189666644001 &    0.732843158560505  &1.651794 \\
 [1ex] 
\hline 
\end{tabular}
}
\label{Criticalhalo1case4} 
\end{table}

\subsubsection{The Lane family}
\label{Lanefamily}
The Lane family, named as $a2v$ in \cite{Michalodimitrakis}, is  comprised of periodic orbits with symmetry of type \textbf{I}. There exist two analogous families for the collinear equilibrium points of the R3BP also named \textit{Lanes} \cite{Gomez}. We use the same name in this work in order to compare our results with the ones of the R3BP. The numerical explorations in our problem show that these families are indeed lanes, or bridges, between the planar Lyapunov orbits and the vertical family that starts and ends in bifurcating periodic orbits of both families. Because of the symmetries of the equations of this problem, there exist two symmetric families which are analogous to the \textit{Lanes} families of the R3BP. Both are related to each other by the symmetry and consequently the behaviour of both families is identical.  Therefore, the whole family is a two-lane family that starts in a critical planar Lyapunov orbit and ends in a critical orbit of the vertical family. In view of the observations already mentioned, it will be enough to explore one part, namely \textit{Lane 1}.
\newline

Considering several values of the mass parameter $\mu$, we found that Lane family is comprised of unstable orbits and that some of them are critical. In Table \ref{Criticallane1} we show the critical orbits detected for some values of $\mu$. Following the evolution of the family as $\mu$ increases, we observe that the characteristic curves tend to reduce its size in the space $(C,x_0,v_{y0})$, in the sense that bifurcating orbits in the vertical and planar families approach to each other. For instance, for $\mu=0.00095$ the family changes monotonically from $C=1.245898$ (bifurcation with the planar family) to $C=0.535298$ (bifurcation with the vertical family), for $\mu=0.03$ it goes from $C=1.793458$ to $C= 1.225758$, for $\mu=0.1$ it goes from $C=3.589648$ to $C= 3.429161$, and for $\mu=0.114087$ it goes from $C=4.157287$ to $C=4.144146$.
\newline

According to the observation that the value of $C$ for the planar bifurcating orbit increases as $\mu$ does, and that the value of $C_1$, defined as the value of $C$ at $L_1$, decreases monotonically with $\mu$, we conjecture that there exists a value of $\mu$ for which the Lane family does not exist any more. Actually, such a value of $\mu$ should be close to the value $\mu_p$ mentioned in Section \ref{haloorbits} since it was observed numerically that, for instance, for $\mu=0.114087$ the Lane family starts in a planar critical orbit at $C=4.157287$ and ends in a vertical orbit at $C=4.144146$.  An analytic estimation of $\mu_p$ and its relation with the disappearance of the Lane family requires further analytical study that it will be shown in a forthcoming work. The evolution of the orbits can be seen in Figure \ref{lane1figures}.

\begin{table}[ht]
\caption{Initial conditions for critical orbits belonging to family Lane $1$. First and second rows correspond to $\mu=0.00095$, third to sixth rows to $\mu=0.03$ and seventh to thirteenth rows to $\mu=0.1$.} 
\centering 
\resizebox{14cm}{!} {
\begin{tabular}{c c c c c c c}
\hline\hline 
$s_1$ & $s_2$ & $x_{0}$ & $\dot{z}_{0}$ & $\dot{y}_{0}$ & $T/2$ & $C$\\ [0.5ex] 
\hline 
2 & 336.8  & 0.947776279601374  &  0.0  &    -1.886060615200088  &    2.058578156228173 & 1.245898 \\
2 & 435.8 & 0.542204288997138  &  -1.920518788500280  &    -0.588459343018466  &    2.116676844841206 &0.535298 \\
2 & 458.6 & 0.913421990785547 &  0.0  &   -1.686195111313342   &     1.925046023338045 & 1.793458\\
2 & 526.1 & 0.580982542362875 &  -1.707368794292224  &   -0.538047463781664   &     1.963885549526655 & 1.227858\\
2 & 526.4 & 0.566697194548073  &  -1.739117162480509   &    -0.469643626495060   &     1.964018694781028 & 1.226058\\
2 & 526.4 & 0.561982518993931  &  -1.749289926100880  &     -0.446518323542460   &     1.964065581758477 & 1.225758\\
2 & 1440.4 & 0.831058677841325  &  0.0  &     -0.859102486479120   &     1.668621755939243 & 3.589648 \\
2 &  1424.2 & 0.675394306861832  &  -0.887333573679594  &     -0.114149626052293   &     1.674669702607736 & 3.429690 \\
2 &   1424.2 & 0.675347465236596  &   -0.887386533168669   &     -0.113879512746934   &     1.674669818837174  & 3.429687\\
2 &   1424.1 & 0.675336297671779  &   -0.887399811839130   &     -0.113815599195322  &      1.674669958473229 & 3.429685 \\
2 &   1424.2 & 0.675203060436454  &   -0.887549300177709   &     -0.113046526080681  &     1.674670164552536 &3.429678 \\
2 &   1423.9 & 0.670676991502061 &   -0.892480645821746   &     -0.086918881588729  &     1.674707591490506 & 3.429170 \\
2 &   1423.9 & 0.670682592737372 &   -0.892478843883959    &     -0.086954600007957  &     1.674708199795873 & 3.429163 \\
[1ex] 
\hline 
\end{tabular}
}
\label{Criticallane1} 
\end{table}

\begin{figure}
  \centering
\begin{tabular}{cc}
  \includegraphics[width=2.0in,height=2.0in]{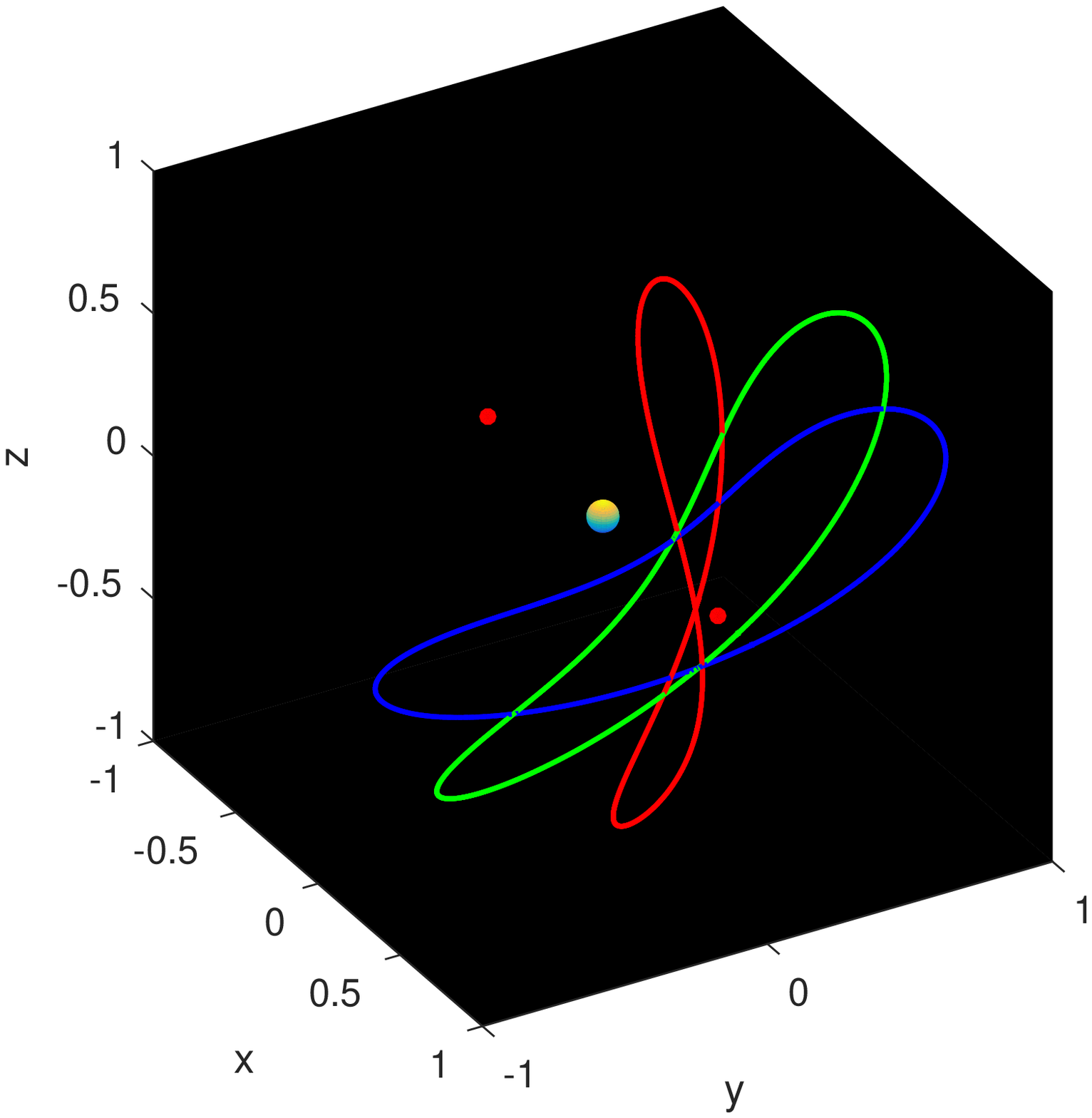}& \includegraphics[width=2.0in,height=2.0in]{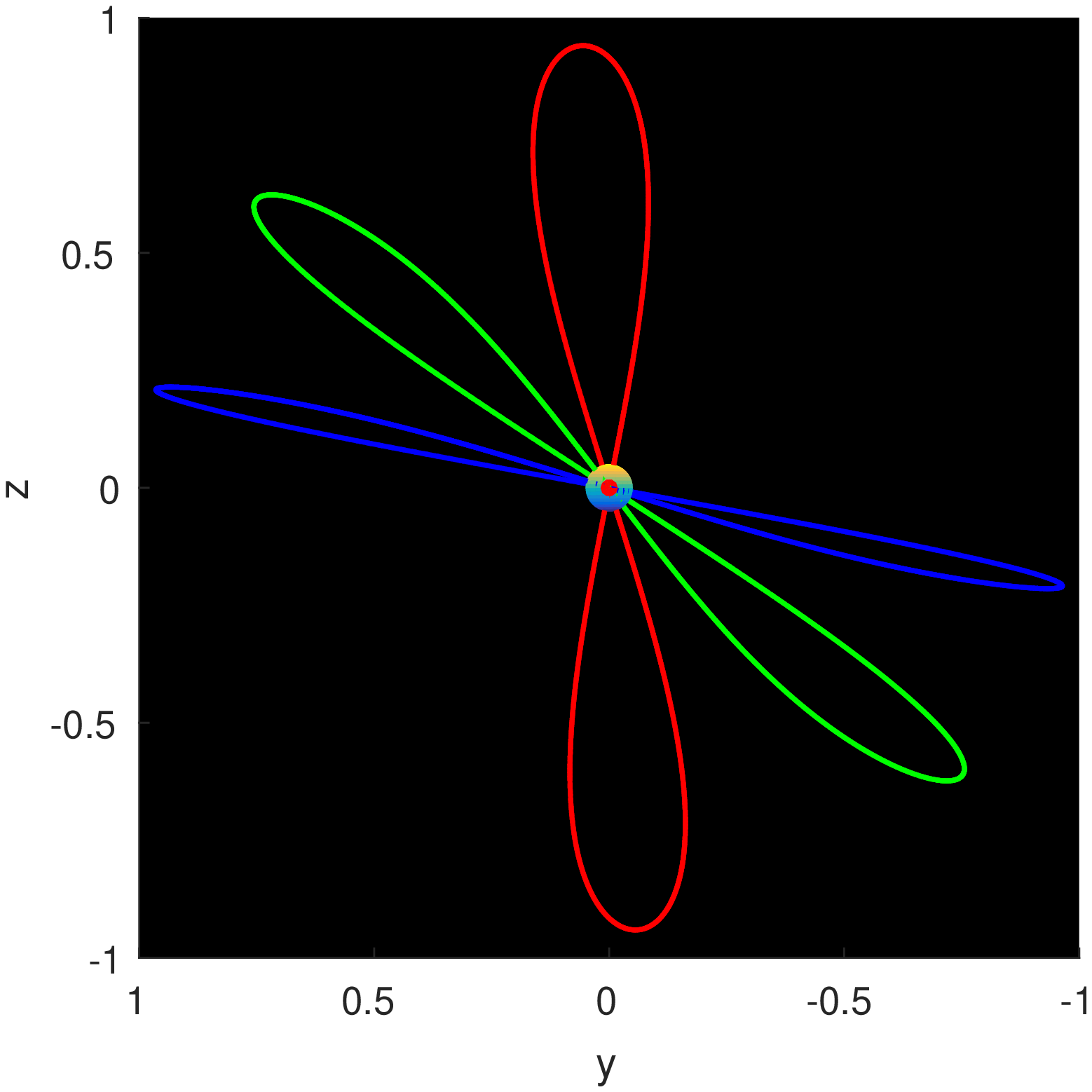}\\
 \includegraphics[width=2.0in,height=2.0in]{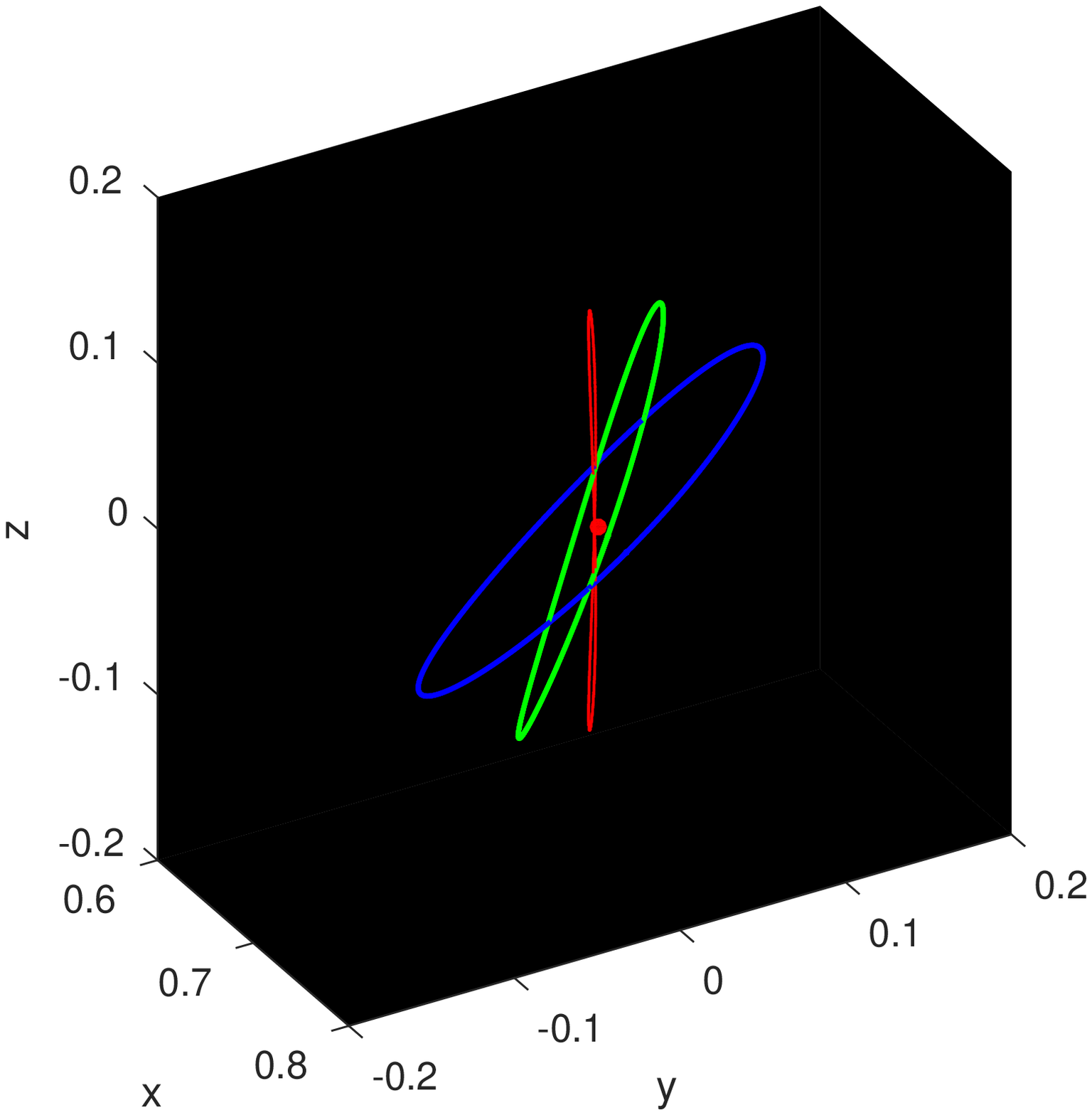}& \includegraphics[width=2.0in,height=2.0in]{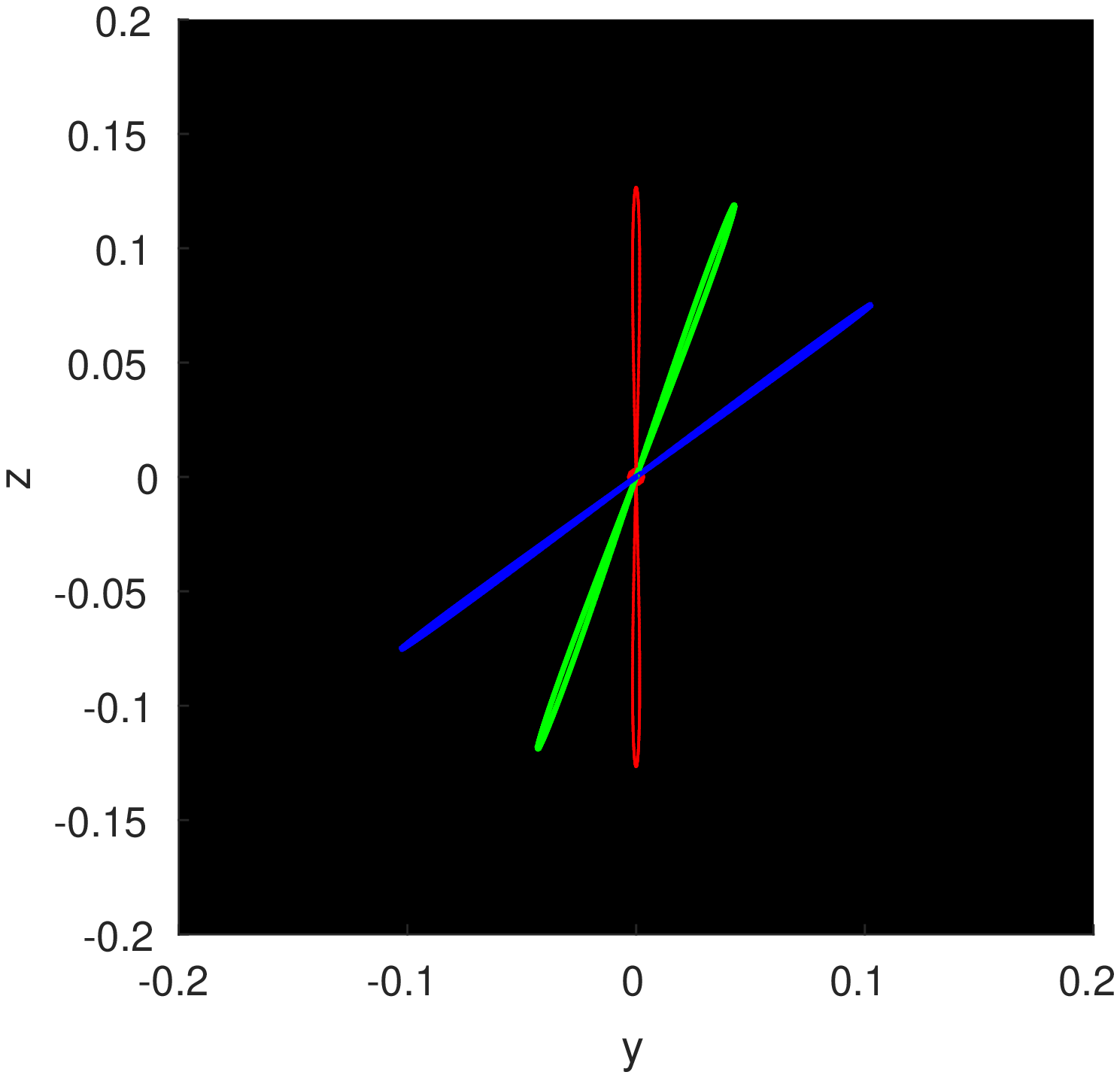}
\end{tabular}
\caption{Different views for periodic orbits within family Lane $1$, for $\mu=0.03$ and $\mu=0.114087$, are given in first and second row, respectively. Red dots correspond to equilibrium points.}
\label{lane1figures}
\end{figure}

\subsubsection{The family a3v}
In the numerical explorations we realized that the behaviour of the family $a3v$ is analogous to the one of the classical H3BP. The family departs from a period doubling bifurcation with a planar Liapunov orbit and as the constant $C$ decreases monotonically the inclination of the orbits increases and tend to collision with the tertiary body. Such behaviour, and the instability of the periodic orbits, hold for all $\mu \in [0,0.5]$. There are critical orbits whose initial conditions are given in the Table \ref{Criticala3v} for some values of $\mu$.  In Figure \ref{figures_a3v_case1} we can observe three periodic orbits for $\mu=0.5$. The members of this family have symmetry type \textbf{III}.

\begin{table}[ht]
\caption{Initial conditions for critical orbits belonging to family $a3v$. First, second and third rows correspond to $\mu=0.00095$, fourth and fifth rows to $\mu=0.2$ and sixth and seventh rows to $\mu=0.5$.} 
\centering 
\resizebox{14cm}{!} {
\begin{tabular}{c c c c c c c}
\hline\hline 
$s_1$ & $s_2$ & $x_{0}$ & $\dot{z}_{0}$ & $\dot{y}_{0}$ & $T/4$ & $C$\\ [0.5ex] 
\hline 
2 & 22112  & 1.215964201848766 &  0.0 &  -2.467775191236130  &     2.823692692996263 & -0.012582\\
2 & 27826  & 1.197949332150484 &  0.495799627813836 &   -2.431535619465253  &     2.816015129308572 & -0.186482\\
2 & 35184  & 1.178152075897276 &  0.715439512365338 &   -2.391579755999035  &    2.808143136235180 & -0.372773\\
2 &  565.6  & 1.060636666432214 &  0.0 &   -1.770290287017953  &     2.613402863642579 & 1.655977\\
2 &  573  & 1.046071830822358 &  0.433920959256934 &   -1.747246278450896  &     2.604579716310594 & 1.495787\\
2 &  15582  & 1.049806859467960  &  0.0 &   -1.415400237548225  &     2.567385833417275 & 2.381467\\
2 &  5128.2 & 1.022159719609811   &  0.587678513478811  &   -1.392074692111009  &     2.545407711915551 & 2.024227\\
\hline 
\end{tabular}
}
\label{Criticala3v} 
\end{table}

\begin{figure}
  \centering
\begin{tabular}{cc}
  \includegraphics[width=2.0in,height=2.0in]{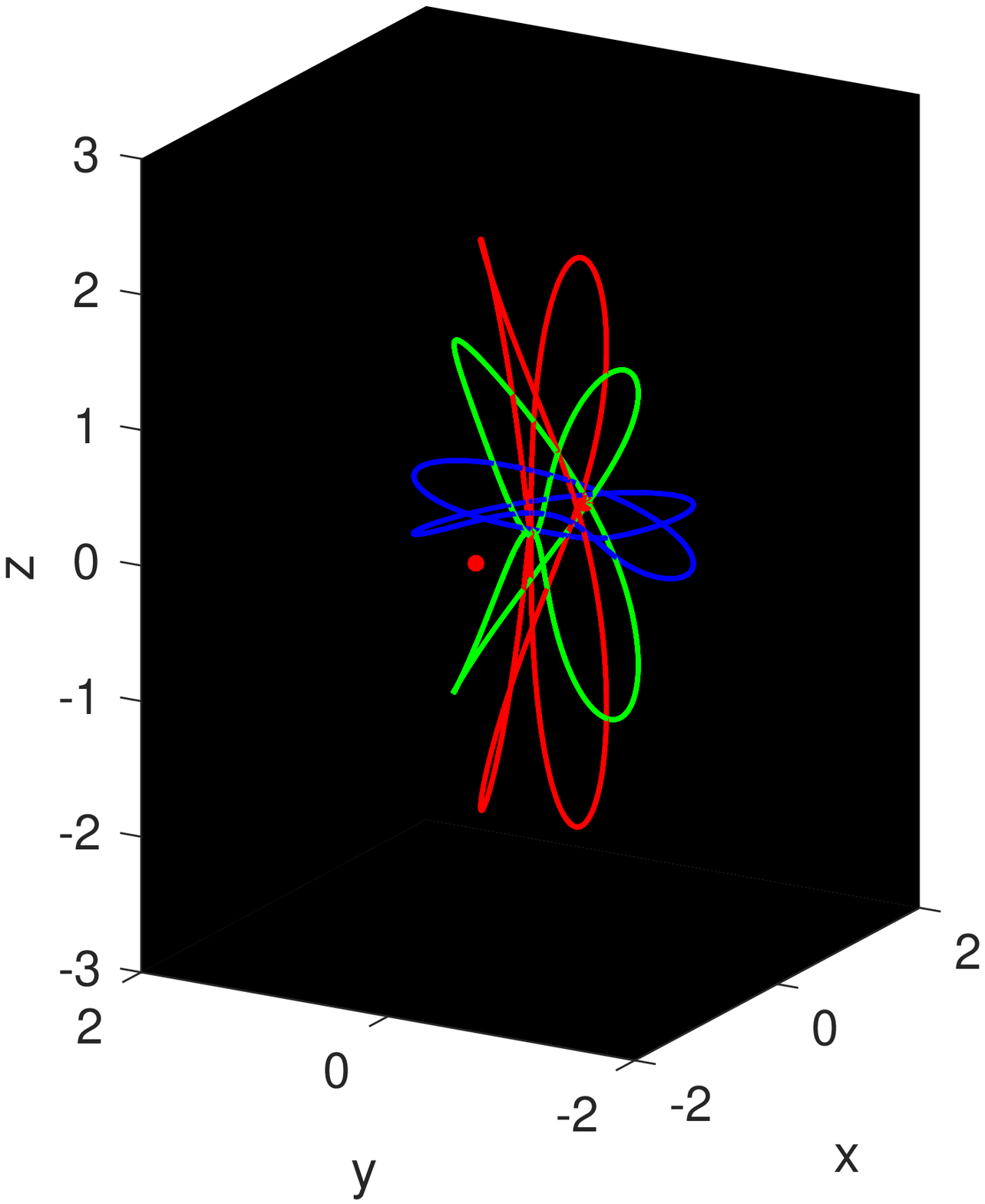}& \includegraphics[width=2.0in,height=2.0in]{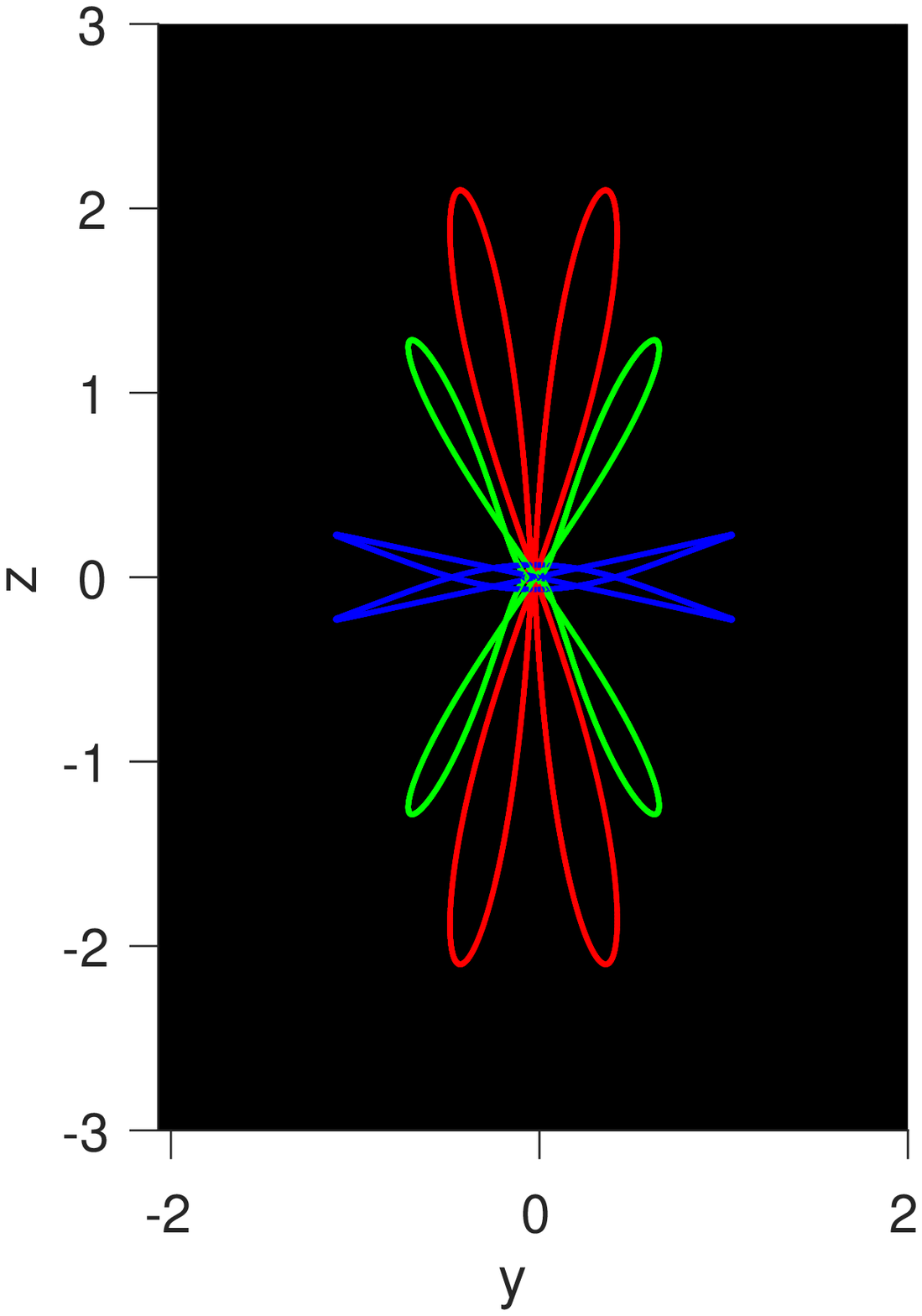}\\
 \includegraphics[width=2.0in,height=2.0in]{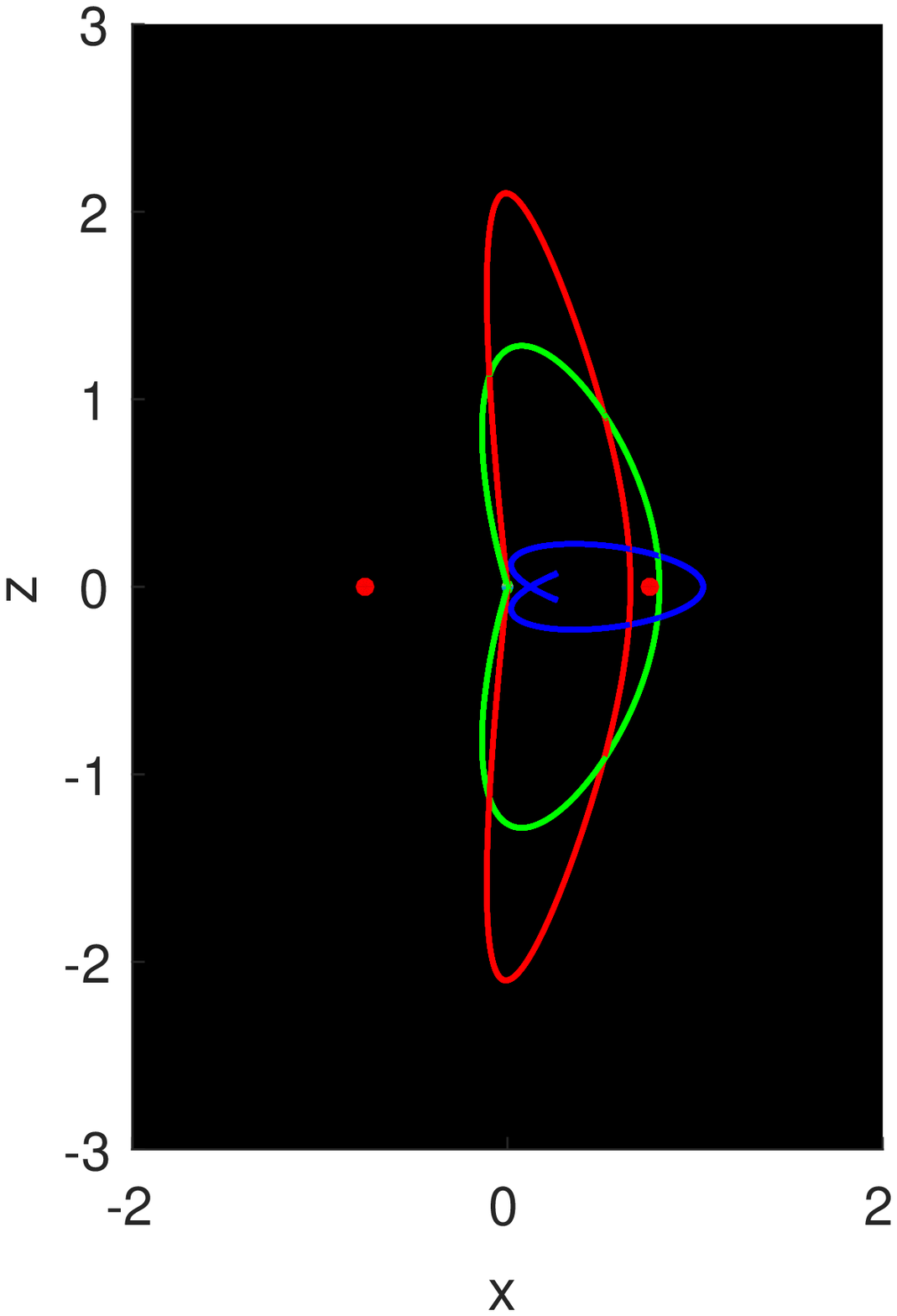}& \includegraphics[width=2.0in,height=2.0in]{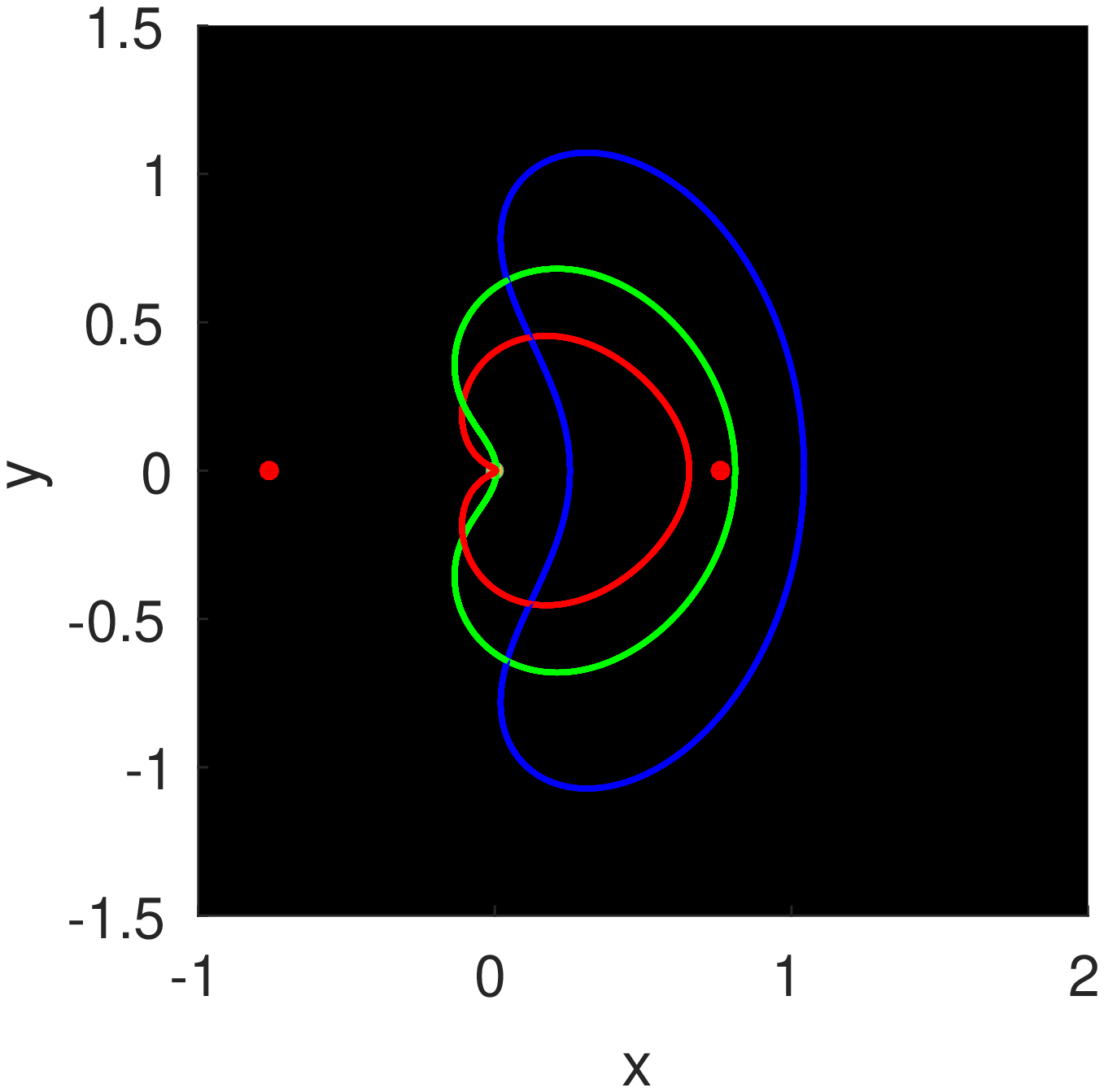}
\end{tabular}
\caption{Different views for periodic orbits within family $a3v$, for $\mu=0.5$. Red dots correspond to equilibrium points.}
\label{figures_a3v_case1}
\end{figure}

\section{Conclusions and perspectives}\label{sec:conclusions}
We have developed a first study of the dynamics of the three-dimensional H4BP focusing on the simplest invariant objects: equilibrium points and symmetric periodic orbits.
\newline

New symmetric periodic orbits around the tertiary, and the ones emanating from the linearized system around $L_1$ and $L_2$, were found. These orbits were computed by means of solving numerically specific boundary value problems associated to the symmetries of the equations of motion.
\newline

The high-precision numerical continuation in the Jacobi constant $C$, for several values of the mass parameter $\mu$, reveals a very rich structure of the basic families of periodic orbits and suggests that there are a plethora of bifurcations among the periodic orbits.
\newline

The conjectures regarding the existence of bifurcations, families tending to ejection-collision orbits, and properties of stability, demand a further study. In a near-future work we will continue the analysis of the H4BP, in order to obtain a deeper understanding of this problem and its connections with other models as it was mentioned in \cite{Chicone}.

\newpage

\section*{Acknowledgments}\label{sec:acknowledgments}
The second author is pleased to acknowledge the support of Asociaci\'on Mexicana de Cultura A.C., the National System of Researchers (SNI), and Conacyt-M\'exico Project A1S10112.

\section*{Declaration of competing interest}

The authors declare that they have no known competing financial interests or personal relationships that could have
appeared to influence the work reported in this paper.

\section*{Funding sources}

This research did not receive any specific grant from funding agencies in the public, commercial, or not-for-profit sectors.

\end{document}